\theoremstyle{plain}
\newtheorem{theorem}{\textsc{Theorem}}[section]
\newtheorem{lemma}[theorem]{\textsc{Lemma}}
\newtheorem{proposition}[theorem]{\textsc{Proposition}}
\newtheorem{corollary}[theorem]{\textsc{Corollary}}
\theoremstyle{definition}
\newtheorem{definition}[theorem]{\textsc{Definition}}
\newtheorem{example}[theorem]{\textsc{Example}}  
\newtheorem{remark}[theorem]{\textsc{Remark}}
\newtheorem{assumption}[theorem]{Assumption}
\newcommand{\R}{\mathbb{R}}
\newcommand{\bC}{\mathbb{C}}
\newcommand{\D}{\mathbb{D}}
\newcommand{\im}{\mathrm{i}}
\numberwithin{equation}{section}
\title{Complex Structures for Klein-Gordon Theory\\ on Globally Hyperbolic Spacetimes}
\author[1]{Albert Much\footnote{much@itp.uni-leipzig.de}}
\author[2]{Robert Oeckl\footnote{robert@matmor.unam.mx}}
\affil[1]{Institut für Theoretische Physik, Universität Leipzig, D-04103 Leipzig}
\affil[2]{Centro de Ciencias Matemáticas,
	Universidad Nacional Autónoma de México,
	C.P. 58190, Morelia, Michoacán, Mexico}
\date{29 November 2018\quad 13 June 2019 (v2)\quad 20 January 2021 (v3)\\ 22 July 2021 (v4) \quad 4 January 2022 (v5) \\ UNAM-CCM-2018-3}
\begin{document}
\maketitle
\abstract{
We develop a rigorous method to parametrize complex structures for Klein-Gordon theory in globally hyperbolic spacetimes that satisfy a completeness condition. The complex structures are conserved under time-evolution and implement unitary quantizations. They can be interpreted as corresponding to global choices of vacuum. The main ingredient in our construction is a system of operator differential equations. We provide a number of theorems ensuring that all ingredients and steps in the construction are well-defined. We apply the method to exhibit natural quantizations for certain classes of globally hyperbolic spacetimes. In particular, we consider static, expanding and Friedmann-Robertson-Walker spacetimes. Moreover, for a huge class of spacetimes we prove that the  differential equation for the complex structure is given by the Gelfand-Dikki equation.
}

\tableofcontents

\newpage

\section{Introduction}

The first mathematically rigorous work to formulate quantum field theory (QFT) on   globally hyperbolic and stationary spacetimes, that admit non-compact Cauchy surfaces, used the notion of a one-particle-structure \cite{K78}. Developing this further leads to the notion of \emph{complex structure}.\footnote{A rigorous construction of complex structures for globally hyperbolic and stationary spacetimes, that admit  compact Cauchy surfaces was provided in \cite{MCI}.} A complex structure is a symplectic  transformation, i.e., a map on phase space, that leaves the symplectic structure of the system invariant. Moreover, the square of this transformation is minus the identity.
In the case of a linear quantum field on Minkowski space, one would 
take the eigenspaces of the complex structure with eigenvalues $\pm\im$ to be the positive and negative frequency solutions of the classical field equation, and build up the Fock representation from this splitting as usual.  One can 
proceed analogously in the general case, calling solutions ``positive'' or 
``negative frequency'' according to their decompositions under the 
complex structure. This leads to a familiar ladder-algebra of 
operators, with analogous mathematical notions of ``vacuum'' and ``particle'' states. The terms ``vacuum'', ``particle'', ``annihilation operator'' and ``creation operator'' are used in this mathematical sense in the following.  The questions of what a physical detector would respond to, and what a physically 
acceptable vacuum would be, are outside the scope of this article, 
which is focussed on a consistent mathematical treatment.

For   globally hyperbolic spacetimes that are not static nor stationary (nor conformally related to such spacetimes) the complex structure   appears to be  time-dependent \cite[Page7]{AsMa:qfieldscurved}. In particular, if the complex structure is viewed as acting on a fixed space of initial data this frequently leads to a failure of unitary implementability of the dynamics in the quantum theory. This problem has even been framed in terms of a no-go theorem \cite{ToVa:fevolqft}.
On the other hand, in more recent work, a complex structure is associated instead with each hypersurface, in a coherent fashion \cite{Oe:holomorphic,AA15}. This idea is natural considering the analogy to the symplectic structure. The latter, even though constant in the usual construction of a fixed initial data space, originates from an integral over the second variation of the Lagrangian on a spacelike hypersurface \cite{Woo:geomquant}. In the present work we consider the strongest condition for the coherence of the complex structures (see Equation~(\ref{jhs})), leading to the notion of \emph{conserved complex structure}, corresponding to particle number conservation. (Again, particle number conservation is relative to the technical notion of vacuum in terms of creation and annihilation operators and does not necessarily imply that a particular model detector would not see any particle creation.) Unitary implementability of the dynamics in the quantum theory is then automatic. (In \cite{AA15} this is termed \emph{generalized unitarity}. For more details, see Section~\ref{sec:cstruct}.) 
In the present work we take advantage of this more flexible approach to address the quantization problem for Klein-Gordon theory in globally hyperbolic spacetime. To this end we associate a phase space to each leaf of a temporal foliation of spacetime. There, we consider the complex structure as a two by two matrix of operators with respect to a canonical decomposition \cite{AsMa:qfieldscurved,CCQ:schroefock,Oe:schroedhol,AA15}. We then formulate the coherence condition for the complex structures in terms of operator differential equations that we term the \emph{conservation equations}. We provide quantizations of Klein-Gordon theory on various classes of globally hyperbolic spacetimes by exhibiting natural solutions of the conservation equations. In particular, we generalize and make rigorous previous results for static spacetimes and provide new quantizations for expanding and Friedman-Robertson-Walker (FRW) spacetimes. 
For the (well known) static   and the expanding spacetimes the resulting complex structures do  not depend on the time (see Theorem~\ref{t0} and Theorem~\ref{t1}). However, for FRW spacetimes there is an explicit time dependence in the complex structure (see Theorem~\ref{t4}) that however agrees by construction, as mentioned before,   with the unitarity of time-evolution. 

We prove that our constructed complex structures are rigorously defined. The proof is two-fold. First, we have to prove that the given complex structures are anti-self-adjoint operators. This is done by    using the recent proof of essential self-adjointness of the spatial part of the Klein-Gordon operator \cite{MuOe:selfadjoint}. Secondly, we have to prove that the  constructed complex structures supply automorphisms of the solution spaces of the Klein-Gordon equation. To assure this automorphism property we construct the aforementioned spaces as Sobolev spaces by using the spatial part of the Klein-Gordon operator. 

We anticipate that the methods developed in the present work could be used to find and verify complex structures for other interesting classes of globally hyperbolic spacetimes, such as for example Bianchi, or Kasner-like, universes.
We also recall that the traditional approach (and its failure of a unitary description \cite{K79}) relies heavily on the Hamiltonian. In contrast, in our present framework we eliminate  the reference to a particular Hamiltonian for constructing the complex structure. Rather, we build the time evolution by means of an embedding that relies on the global hyperbolicity. Therefore, we expect our approach to be more generally applicable. However, we emphasize that the methods presented here only allow for particle-number conserving quantizations. We hope to address the more general case involving particle creation in future work.

In Section~\ref{sec:ghyp} we give a short introduction to Klein-Gordon theory on globally hyperbolic spacetimes. The unitarity condition and the conservation equations that determine the complex structure are given in Section~\ref{sec:cstruct}. We dedicate Section~\ref{sec:domains} to the study of the domains and the images of the complex structure in a rigorous fashion. The reader interested only in the physical results may skip this section. Using the results of the previous sections, we give well-defined conserved complex structure for a variety of spacetimes  in Section~\ref{sec:natcomplex}. Finally, we comment on the emergence of the Gelfand-Dikki equation as underlying a large class of examples. In order to ease readability we collect all lengthy proofs in Appendix~\ref{sec:proofs}.

\section{Klein-Gordon Theory on Globally Hyperbolic Spacetimes}\label{sec:ghyp}

A \emph{globally hyperbolic spacetime} is a manifold $(M,\mathbf{g})$ that is homeomorphic \cite{G70} and even diffeomorphic \cite{B05} to the split manifold $\mathbb{R}\times\Sigma$ with Cauchy-surfaces $\{t\}\times\Sigma$, where $t\in\mathbb{R}$. 	The authors in \cite{B03} solved a long-standing conjecture by proving that any globally hyperbolic spacetime admits a  \emph{smooth} foliation into Cauchy surfaces \cite[Theorem~1.1]{B03}. Moreover, the induced metric of  a globally hyperbolic spacetime admits a specific form \cite[Theorem~1.1]{B05}.

\begin{theorem}\label{T1}	Let $(M, 	\mathbf{g})$ be a globally hyperbolic spacetime. Then, it is isometric to
	the smooth product manifold $\mathbb{R}\times \Sigma$ with a metric $g$, i.e.,
	\begin{equation}
	g_{\mu\nu}dx^{\mu}dx^{\nu}=-N^2 dt^2+h_{ij}dx^idx^j,
	\label{mgh}\end{equation}
	where $\Sigma$ is a smooth 3-manifold and $(x^0,x^1,x^2,x^3)=(t,x^1,x^2,x^3)$ are Cartesian coordinates on $\R\times\Sigma$. We write the (often implicit) isometry as $T:\R\times \Sigma\to M$ with $T_t:\Sigma\to M$ the parametrized embedding given by $T_t(x) := T(t,x)$. $N: M \mapsto (0,\infty)$ is a smooth function, and $\mathbf{h}$ a $2$-covariant symmetric tensor field on $M$, satisfying the following condition:  Each hypersurface  $\Sigma_t := T_t(\Sigma)$ at constant $t$ is a Cauchy surface, and the restriction $\mathbf{h}(t)$ of $\mathbf{h}$ to such a  $\Sigma_t$ is a Riemannian metric (i.e.\  $\Sigma_t$ is spacelike).
\end{theorem}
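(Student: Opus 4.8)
The plan is not to reprove this from scratch --- it is the smooth splitting theorem for globally hyperbolic spacetimes --- but to assemble it from the results already cited, indicating the structure of the argument. First I would invoke Geroch's theorem \cite{G70} to obtain a Cauchy hypersurface $S\subset M$ together with a continuous global time function, which gives the product decomposition $M\cong\R\times S$ at the topological level. The point is that this is only topological: the level sets of Geroch's function need not be smooth and the function itself is not differentiable in a controlled way. The substantive input is therefore the resolution of the long-standing conjecture in \cite{B03}: there exists a smooth \emph{Cauchy temporal function} $\tau\colon M\to\R$, i.e.\ a smooth function whose gradient $\nabla\tau$ is timelike everywhere and each of whose level sets $\Sigma_t:=\tau^{-1}(t)$ is a smooth spacelike Cauchy hypersurface.

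Granting such a $\tau$, the product structure and the precise metric form follow by a standard flow argument. Set $X:=\nabla\tau/\mathbf{g}(\nabla\tau,\nabla\tau)$, which is smooth and everywhere timelike since the denominator never vanishes, normalized so that $X(\tau)=1$ and $X$ is orthogonal to every $\Sigma_t$. Along any integral curve $\gamma$ of $X$ one has $\tau(\gamma(s))=s+\mathrm{const}$, and since the $\Sigma_t$ are Cauchy hypersurfaces the flow of $X$ is complete; it therefore defines a diffeomorphism $\Phi\colon\R\times\Sigma_0\to M$ with $\tau\circ\Phi(t,p)=t$ and $\Phi_*\partial_t=X$. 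Pulling back $\mathbf{g}$ and using $X\perp\Sigma_t$, the mixed components $g_{0i}$ vanish, so
\begin{equation*}
\Phi^*\mathbf{g}=-N^2\,dt^2+h_{ij}\,dx^i dx^j ,
\end{equation*}
where $N>0$ is the smooth function determined by $N^2=-\mathbf{g}(X,X)$ and, for each fixed $t$, $\mathbf{h}(t)$ is the induced metric on the spacelike hypersurface $\Sigma_t$, hence Riemannian. Writing $\Sigma:=\Sigma_0$ and letting $t$ also denote the projection $\R\times\Sigma\to\R$ gives exactly the stated conclusion.

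The one genuinely hard step is the existence of the smooth Cauchy temporal function $\tau$: one cannot simply mollify Geroch's continuous time function, because naive smoothing destroys both the timelike-gradient property and the requirement that level sets remain Cauchy hypersurfaces. This is precisely the conjecture settled in \cite{B03}, whose proof builds $\tau$ as a carefully arranged, locally finite series of smooth local time functions adapted to the causal structure, each correcting the previous ones so that the gradient of the sum stays timelike everywhere. I would invoke this result directly; with it in hand the flow argument above is routine, and the refinement yielding the explicit form (\ref{mgh}) with the stated regularity of $N$ and $\mathbf{h}$ is \cite[Theorem~1.1]{B05}, to which I defer for the technical details.
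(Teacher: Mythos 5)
Your proposal is correct and matches the paper's treatment: the paper offers no proof of Theorem~\ref{T1} at all, simply quoting it as \cite[Theorem~1.1]{B05}, and you likewise defer the genuinely hard step (existence of a smooth Cauchy temporal function) to \cite{B03,B05} while correctly supplying the routine flow-of-$\nabla\tau$ argument that converts such a function into the orthogonal splitting (\ref{mgh}). The one point you state tersely --- completeness of the flow of $X$ --- is justified exactly as you indicate, since a maximally extended integral curve of the timelike field $X$ is an inextendible causal curve and must therefore meet every level set $\Sigma_t$, forcing the parameter range to be all of $\mathbb{R}$.
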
  

Next, we give for a spin zero scalar field in a globally hyperbolic spacetime the equation of motion and the corresponding symplectic structure. These  are the classical ingredients for a subsequent quantization.  In the following sections those structures form the basis for finding a well-defined natural complex structure. This in turn allows to build the Hilbert space of states of the quantum theory. The  Klein-Gordon equation  is,
\begin{equation}\label{kge}(\square_{g}-m^2-V(x))\phi=0,\end{equation} 
where $\square_{g}$ is the Laplace-Beltrami operator for the metric $\mathbf{g}$, i.e. $\square_{g}=  ({\sqrt{|g|}})^{-1}\partial_a(\sqrt{|g|}g^{ab}\partial_b)$ with $|g|$ denoting the absolute value of the determinant of the metric $\mathbf{g}$.

\begin{proposition}\label{p1}
	Let the operator  $w^2$ be given by   \begin{align}\label{op}
	w^2&=-\frac{N}{\sqrt{|h|}}\partial_i(
	\sqrt{|h|} N h^{ij}\partial_j
	)+N^2m^{2}+N^2\,V\\\nonumber&=
	-N^2(\Delta_{h}-m^{2}-V)-Nh^{ij}\partial_iN\partial_j .
	\end{align}		
$\Delta_h$ is the Laplace-Beltrami operator for the associated metric $\mathbf{h}$ on $\Sigma_t$. Moreover, let $f$ be 	defined as $f:=-N^{-1} ( \partial_t N ) +  (\sqrt{|h|})^{-1}  \,(\partial_t\sqrt{|h|})$. Then, for the explicit form  (\ref{mgh}) of a globally hyperbolic metric  we have the following Klein-Gordon equation,
	\begin{align}  
	\left( 	 \partial^{2}_t    +   f\partial_t+     w^2 \right)\phi = 0 . \label{eq:kgt}
	\end{align}  
\end{proposition}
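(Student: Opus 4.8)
The plan is to take the covariant Klein--Gordon equation (\ref{kge}) and rewrite the operator $\square_g - m^2 - V$ in the adapted coordinates provided by Theorem~\ref{T1}, where $g_{\mu\nu}dx^\mu dx^\nu = -N^2 dt^2 + h_{ij}dx^i dx^j$. First I would record the basic metric quantities: the inverse metric components $g^{00} = -N^{-2}$, $g^{ij} = h^{ij}$, $g^{0i}=0$ (since the metric is block diagonal), and the volume element $\sqrt{|g|} = N\sqrt{|h|}$. Substituting these into $\square_g = (\sqrt{|g|})^{-1}\partial_a(\sqrt{|g|}\,g^{ab}\partial_b)$ splits the sum over $a,b$ into a purely temporal term $a=b=0$ and a purely spatial term $a=b=i,j$, with no cross terms.

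Next I would compute the temporal piece $(N\sqrt{|h|})^{-1}\partial_t\bigl(N\sqrt{|h|}\,(-N^{-2})\partial_t\phi\bigr) = -(N\sqrt{|h|})^{-1}\partial_t\bigl(N^{-1}\sqrt{|h|}\,\partial_t\phi\bigr)$. Expanding the $t$-derivative by the product rule gives $-N^{-2}\partial_t^2\phi - (N\sqrt{|h|})^{-1}\bigl(\partial_t(N^{-1}\sqrt{|h|})\bigr)\partial_t\phi$, and simplifying the coefficient of $\partial_t\phi$ yields exactly $-N^{-2}\bigl(-N^{-1}(\partial_t N) + (\sqrt{|h|})^{-1}(\partial_t\sqrt{|h|})\bigr)\partial_t\phi = -N^{-2} f \partial_t\phi$ with $f$ as defined in the statement. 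Similarly the spatial piece is $(N\sqrt{|h|})^{-1}\partial_i(N\sqrt{|h|}\,h^{ij}\partial_j\phi)$. Then the equation $(\square_g - m^2 - V)\phi = 0$ reads $-N^{-2}\partial_t^2\phi - N^{-2}f\partial_t\phi + (N\sqrt{|h|})^{-1}\partial_i(N\sqrt{|h|}\,h^{ij}\partial_j\phi) - (m^2+V)\phi = 0$.

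Finally I would multiply through by $-N^2$ to clear the denominator of the leading term, obtaining $\partial_t^2\phi + f\partial_t\phi - N(\sqrt{|h|})^{-1}\partial_i(N\sqrt{|h|}\,h^{ij}\partial_j\phi) + N^2(m^2+V)\phi = 0$. The operator acting on the spatial part is precisely $w^2$ in the first line of (\ref{op}), so this is exactly (\ref{eq:kgt}). The equivalence of the two displayed forms of $w^2$ in (\ref{op}) is a short separate check: expand $-N(\sqrt{|h|})^{-1}\partial_i(N\sqrt{|h|}\,h^{ij}\partial_j)$ by the product rule, distributing $\partial_i$ over the three factors $N$, $\sqrt{|h|}$, and $h^{ij}\partial_j$; the terms in which $\partial_i$ hits $\sqrt{|h|}\,h^{ij}\partial_j$ reassemble into $-N^2\Delta_h$ (using $\Delta_h = (\sqrt{|h|})^{-1}\partial_i(\sqrt{|h|}\,h^{ij}\partial_j)$), while the term in which $\partial_i$ hits the extra $N$ gives $-N h^{ij}(\partial_i N)\partial_j$.

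This is an entirely routine coordinate computation; there is no genuine obstacle. The only points requiring a little care are bookkeeping the product rule in the temporal term so that the coefficient of $\partial_t\phi$ collapses to the stated $f$, and keeping track of signs when multiplying through by $-N^2$. One should also note implicitly that $N>0$ everywhere (guaranteed by Theorem~\ref{T1}), so that dividing and multiplying by powers of $N$ is legitimate, and that $h$ being Riemannian on each $\Sigma_t$ makes $\sqrt{|h|}$ smooth and positive, so all the manipulations are valid on the smooth product manifold.
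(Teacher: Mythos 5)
Your proposal is correct and follows essentially the same route as the paper's own proof: write $\sqrt{|g|}=N\sqrt{|h|}$ and the block-diagonal inverse metric, split $\square_g$ into temporal and spatial pieces, expand the product rule so the $\partial_t\phi$ coefficient collapses to $-N^{-2}f$, and multiply through by $-N^2$ to obtain (\ref{eq:kgt}) with $w^2$ as in the first line of (\ref{op}). Your additional check that the two displayed forms of $w^2$ agree (via $\partial_i$ hitting the extra factor of $N$) is the same routine expansion the paper states without comment.
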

\begin{proof}
	See Appendix \ref{pp1}.
\end{proof}

Conditions and a domain for which the operator $w^2$ is essentially self-adjoint, positive and invertible are given in \cite{MuOe:selfadjoint}. See \cite{K78} for a proof for stationary spacetimes and \cite{BF} for a proof for cases of FRW spacetimes.

Next, we fix a time $t_0$ and use $T_{t_0}$ to identify the Cauchy surface $\Sigma_{t_0}$ with $\Sigma$. Denote the space of smooth initial data of compact support for $\Sigma$  by $\mathscr{S}(\Sigma)$.\footnote{In Section \ref{sec:domains} we specify (construct) the explicit form of the space $\mathscr{S}(\Sigma)$.}  
A vector $(\varphi,\pi)\in\mathscr{S}(\Sigma)$ of the initial data space at time $t_0$ defines a unique solution (due to Leray's Theorem \cite{LJ}) $\phi:M\rightarrow \mathbb{R}$. The relation is as follows,
\begin{align}
\begin{pmatrix}\varphi\\ \pi\end{pmatrix}
= \begin{pmatrix}\phi\\
{\sqrt{|h|}}n^a\nabla_a\phi\end{pmatrix}\bigg\rvert_{t=t_0}
= \begin{pmatrix}\phi\\
{\sqrt{|h|}}N^{-1}\partial_t\phi\end{pmatrix}
\bigg\rvert_{t=t_0},
\label{eq:lpsident}
\end{align}
where $n^{a}$ is an arbitrary timelike hypersurface-orthogonal vector field and $\nabla_a$ the corresponding covariant derivative. Furthermore, the \emph{symplectic form}  corresponding to the defined initial data space $\mathscr{S}({\Sigma})$
is the map $\Omega_{\Sigma}:\mathscr{S}({\Sigma}) \times \mathscr{S}({\Sigma}) \mapsto \mathbb{R}$ given by,
\begin{align}
\Omega_{\Sigma} ( (\varphi_1,\pi_1), (\varphi_2,\pi_2)) &=\int_{\Sigma}\left(
\pi_1 \varphi_2 - \pi_2\varphi_1 
\right)d^3x   .
\label{eq:symplectic}
\end{align}
The pair $(\mathscr{S}({\Sigma}),\Omega_{\Sigma})$ is a \emph{symplectic vector space}. Note that the right-hand side of (\ref{eq:symplectic}) has neither an explicit dependence on $t_0$ nor on the metric $\mathbf{h}(t_0)$, due to our particular choice of data in (\ref{eq:lpsident}).

For the forthcoming sections we adopt the definition of an isometry on Riemannian Manifolds, see for example \cite{THC}. In case $\mathbf{g}_1$ and $\mathbf{g}_2$ are the chosen Lorentzian
metrics on $M_1$ and $M_2$ and $\chi^*\mathbf{g}_1 = \mathbf{g}_2$, we call $\chi$ an \emph{isometry}; if $\chi^*\mathbf{g}_1 =\Lambda^2\mathbf{g}_2$ with a
strictly positive smooth function $\Lambda$, $\chi$ is called a \emph{conformal isometry} and $\Lambda^2\mathbf{g}_2$ a conformal transformation of $\mathbf{g}_1$.

\section{Complex Structures and Unitarity}\label{sec:cstruct}

We may construct a Hilbert space of states by adding to the classical data a \emph{complex structure}. That is, we equip the initial data space $\mathscr{S}({\Sigma})$ with an operator $J:\mathscr{S}({\Sigma})\to\mathscr{S}({\Sigma})$ that is a complex structure, i.e., satisfies $J^2=-1$. Moreover, $J$ is required to be compatible with the symplectic structure, i.e., it must be a \emph{symplectic transformation} $\Omega_{\Sigma}(J\Phi_1,J\Phi_2)=\Omega_{\Sigma}(\Phi_1,\Phi_2)$. Here, as in the following, we use the compact notation $\Phi=(\varphi,\phi)$ for elements of $\mathscr{S}(\Sigma)$. Finally, we require $J$ to \emph{tame} the symplectic structure, i.e., the hermitian sesquilinear form
  \begin{equation}
    \{\Phi_1,\Phi_2\}=2\Omega_{\Sigma}(\Phi_1,J\Phi_2)+2\im\Omega_{\Sigma}(\Phi_1,\Phi_2), 
\label{cip}
  \end{equation}
has to be positive-definite and thus define a complex inner product. After completion this yields a Hilbert space $\mathscr{H}_J$. The Hilbert space of states of the quantum theory is constructed as the \emph{Fock space} over this inner product space.

A more traditional, but equivalent way to construct a quantization \cite{BiDa:qftcurved} is to start with the following non-degenerate sesquilinear and hermitian form on the space of complexified solutions,
\begin{equation}
  (\Phi_1,\Phi_2)=4\im\Omega_{\Sigma}(\overline{\Phi_1},\Phi_2) .
  \label{eq:stdip}
\end{equation}
Then choose a complete set $\{u_k\}_{k\in I}$ of ``positive energy'' solutions satisfying,
\begin{equation}
  (u_k,u_l)=\delta_{k,l},\quad (\overline{u}_k,\overline{u}_l)=-\delta_{k,l},\quad
  (u_k,\overline{u}_l)=0 .
\end{equation}
The relation between this approach and that using a complex structure is as follows. Given a complete set of solutions $\{u_k\}_{k\in I}$ with the specified properties, we define $J$ as the operator with eigenvalue $\im$ on the space spanned by the solutions $\{u_k\}_{k\in I}$ and with eigenvalue $-\im$ on the space spanned by the solutions $\{\overline{u}_k\}_{k\in I}$. Conversely, given $J$ we choose an orthonormal basis $\{u_k\}_{k\in I}$ of the eigenspace with eigenvalue $\im$ of $J$ with respect to the inner product (\ref{eq:stdip}). The advantage of using the complex structure over the explicit solutions is that we do not need to make an explicit choice of basis (on which the quantization does not depend).

In order to have explicit expressions for the complex structure we use operators that are naturally induced by the Klein-Gordon equation (as for example in \cite{AsMa:qfieldscurved,K78,K79}).  On the symplectic vector space $(\mathscr{S}(\Sigma),\,\Omega_{\Sigma})$  with coordinates $(\phi,\pi)_{\Sigma}$, the complex structure $J$ may be parametrized as \cite{AsMa:qfieldscurved, CCQ:schroefock},
\begin{equation}
    -J=\left(
    \begin{matrix}
        A&B\\D&C
    \end{matrix}
    \right)   \label{eq:jabcd}
\end{equation} 
where since $J^2=-1$ the linear operators $A,\,B,\, C,\, D$ satisfy the following relations, 
   \begin{equation*}
   A^2+BD=-1,\qquad  C^2+DB=-1,  \qquad AB+BC=0,\qquad   DA+CD=0, 
   \end{equation*}
which leads to
\begin{equation}		
	    D= -B^{-1}(1+A^2) 	\qquad C= -B^{-1} A B .
            \label{eq:cd}
\end{equation}
 Due to the symmetry and positive definiteness of  the inner product $\{\cdot,\cdot\}$  the linear operators $A,B,C,D$ have to satisfy the following conditions,  
  \begin{align}\label{sop}
 \int_{\Sigma}
 \Psi B \Psi'=\int_{\Sigma}
 \Psi' B \Psi,\qquad  \int_{\Sigma}
\chi D \chi'=\int_{\Sigma}
 \chi' D \chi, \qquad 
  \int_{\Sigma}
 \Psi A \chi=-\int_{\Sigma}
\chi C \Psi,
\end{align}
\begin{align*}
\int_{\Sigma}
\Psi B \Psi >0,\qquad\qquad\qquad  \int_{\Sigma}
\chi D \chi <0,
\end{align*}
where $\chi,\chi'$ are scalars that are elements in the space $C_0^{\infty}(\Sigma)$ and $\Psi,\Psi'\in C_0^{\infty}(\Sigma)$ scalar densities of weight one.

Different complex structures may give rise to inequivalent quantizations. In the sense of representation theory, two complex structures $J_1, J_2$ on the same phase space give rise to equivalent quantizations if and only if their difference $J_1-J_2$ is a Hilbert-Schmidt operator on the Hilbert space determined by one of them via (\ref{cip}) \cite{Sha:linsymboson}. On the other hand, the physical role of the complex structure is usually tied to a notion of time-evolution and energy. Very briefly, if we look at  Equation~(\ref{eq:kgt}) and consider the simple case that $f$ vanishes we want to have $w^2$ positive definite and $\partial_t^2$ consequently negative definite. (The eigenvalues of $\partial_t^2$ are then minus the square of the energy.) In particular, the spectrum of $\partial_t$ is imaginary and the standard complex structure is then chosen to distinguish the positive imaginary (``positive energy'') part from the negative imaginary (``negative energy'') part. In general curved spacetimes the situation is more complicated and the choice of complex structure becomes less straightforward.

We limit ourselves in this context to remark that the construction of a complex structure along such lines would in general depend on time for globally hyperbolic spacetimes that are  non-stationary \cite{AsMa:qfieldscurved,K79}. Proofs for a variety of time-dependent spacetimes exist that natural fixed choices of complex structures are not compatible with unitary time evolution \cite{H1} in the following sense: For candidate time evolution maps $U: \mathscr{H}_{J}\to\mathscr{H}_{J}$, the operator $J-UJU^{-1}$ is not Hilbert-Schmidt in $\mathscr{H}_{J}$.

To resolve this tension we use techniques and ideas developed in \cite{Oe:holomorphic,AA15}. Let us recap some of the most important results that we  use. In the mentioned works one defines a phase space of initial data for each leaf of the foliation. Since the leaves are labeled  by a   time coordinate $t\in\R$ we denote the corresponding phase spaces by $\Gamma_t$. There is also the phase space $\Gamma_V$ of global solutions of the equations of motion. Since we have a well posed initial value problem we have isomorphisms $\mathcal{I}_t: \Gamma_t\rightarrow \Gamma_V$. What is more, each phase space $\Gamma_t$ is naturally identified with $\mathscr{S}(\Sigma)$ due to the embeddings given in Theorem~\ref{T1} of the manifold $\Sigma$ into $M$. Recall that for a time $t$ we denote this embedding by $T_t:\Sigma  \rightarrow M$. With the identification between $\Gamma_t$ and $\mathscr{S}(\Sigma)$ implicit we read off the isomorphism $\mathcal{I}_{t_0}^{-1}:\Gamma_V\to\Gamma_{t_0}$ from Expression~(\ref{eq:lpsident}).

As objects naturally associated with the phase space $\Gamma_t$ for each leaf of the foliation, we denote the symplectic structure by $\Omega_t$ and the complex structure by $J_t$. Indeed, the symplectic structure $\Omega_t$ arises as a second variation of the action on the hypersurface $T_t(\Sigma)$ \cite{Woo:geomquant}. However, due to a conservation law that it satisfies, combined with our special choice of coordinates on momentum space (compare Equation~(\ref{eq:lpsident})) this symplectic form viewed on $\mathscr{S}(\Sigma)$ takes the same form of  Equation~(\ref{eq:symplectic}) for any time $t$. Thus, we use the notation $\Omega_{\Sigma}$ interchangeably with $\Omega_t$. There is no reason to expect the same to happen for the complex structure. That is, there is no reason to expect the complex structures $J_t$ to lead to the same complex structure $J$ on $\mathscr{S}(\Sigma)$ for all times $t$ as we have simply assumed above. Indeed, it is easy to see that such a requirement of ``constancy'' of the symplectic structure is dependent on the choice of coordinate system and Cauchy hypersurface embeddings and thus generically unphysical.

The most general condition for a family of complex structures $\{J_t\}_{t\in\R}$ to admit unitary maps that could be candidates for the quantum time-evolution is (as implied by our previous remarks) that,
\begin{equation} \label{jhs}
J_{t_1}-E^{-1}_{t_2,t_1} \, J_{t_2}\, E_{t_2,t_1}\qquad  \text{is Hilbert-Schmidt in $\mathscr{H}_{J_{t_1}}$}.
\end{equation}
Here the (classical) time evolution map $E_{t_2,t_1}: \Gamma_{t_1}\rightarrow\Gamma_{t_2}$ defined by
\begin{align}\label{tev}E_{t_2,t_1}=\mathcal{I}_{t_2}^{-1}\mathcal{I}_{t_1},
\end{align}
is a map on the canonical phase space that evolves states from time $t_1$ to time $t_2$.  In particular, it takes Cauchy data defined on the Cauchy surface $\Sigma_1$ (which is the embedding of the Cauchy surface $\Sigma$ that results from the foliation at time $t_1$, i.e. $T_{t_1}(\Sigma)=\Sigma_{t_1}$) evolves it to a global solution and induces the Cauchy data on the surface  $\Sigma_{t_2}$.

If Expression~(\ref{jhs}) even vanishes for any pair of times $t_1$ and $t_2$, then the complex structures $J_t$ arise from a complex structure $J_V$ on the global phase space $\Gamma_V$. That is, we have for all times $t$,
\begin{equation} \label{con}
J_t=\mathcal{I}_t^{-1}\, J_V\, \mathcal{I}_t.
\end{equation}
Note that this condition is not the same as the ``constancy'' condition mentioned previously, but rather means that the complex structure is \emph{conserved} under time-evolution. In  all of the remainder of this paper we restrict ourselves to this case. In the standard quantization, this guarantees particle number conservation  from the outset as well as unitarity.

Since the complex structure viewed as an operator on $\mathscr{S}(\Sigma)$ is time dependent, so are the operators $A, B, C, D$ that encode it in Equation~(\ref{eq:jabcd}). We thus write $A(t)$ etc.\ when we need to make this explicit.
Suppose we are given a family of complex structures $\{J_t\}_{t\in\R}$ specified in terms of a corresponding family of operators $A(t)$ and $B(t)$ according to Relation~(\ref{eq:jabcd}). ($C(t)$ and $D(t)$ are redundant because of Relations~(\ref{eq:cd}).) Given a global solution $\phi$ we obtain initial data $(\varphi(t),\pi(t))$ for each time $t\in\R$. Applying at each time $t$ the corresponding complex structure $J_t$ (in terms of the operators $A(t)$ and $B(t)$) leads to new initial data $(\varphi'(t),\pi'(t))$. The family $\{J_t\}_{t\in\R}$ arises from a conserved complex structure, i.e., there exists a corresponding complex structure $J_V$ on the global solution space if and only if the new initial data family $\{(\varphi'(t),\pi'(t))\}_{t\in\R}$ assembles to a new global solution $\phi'$ for any choice of $\phi$. This condition can be formulated as a condition on the family of operators $\{(A(t),B(t))\}_{t\in\R}$. This is how we will characterize conserved complex structures throughout this work: As families $\{(A(t),B(t))\}_{t\in\R}$ that define an admissible complex structure at each time $t$ and in addition satisfy this conservation condition.

We may break the condition down into two parts. After assembling the data $\{\varphi'(t)\}_{t\in \R}$ into a global configuration $\phi'$ we check:
\begin{itemize} 
	\item The Klein-Gordon Equation:	$\left( 	 \partial^{2}_t         -N^{-1} ( \partial_t N ) \partial_t +  (\sqrt{|h|})^{-1}  \,(\partial_t\sqrt{|h|})  \partial_t +     w^2 \right)  \phi'=0$\label{fr1}. \medskip
	\item  The derivative relation: $\pi'(t)=\sqrt{h}N^{-1}\partial_t\phi'|_t$.
\end{itemize}
These conditions are equivalent to the (1-particle) Schrödinger  equation, 
\begin{align}
	\dot{J} = [H,J], 
	\label{eq:1pevolJ}
\end{align}
where the Hamiltonian is given by, 
\begin{align*}H=\left(
	\begin{matrix}
		{0}&{\sqrt{h}N^{-1}}\\{-(\sqrt{h})^{-1}Nw^2}&{-f}
	\end{matrix}\right).
\end{align*}
The equivalence can be seen by taking the time derivative of the term $J\Phi$, i.e.\
\begin{align*}
	\frac{d}{dt}(J\Phi)= H(J\Phi) \end{align*}
and using the Klein-Gordon equation for the initial data $\Phi$, i.e.\ $\dot{\Phi} = H\Phi$. If we use the non-weighted initial data\footnote{The matrix $T$ in Lemma \ref{lcsme} gives the similarity transformation from weighted to non-weighted initial data.}, i.e. $\Psi=(\varphi,\dot{\varphi})$, the Hamiltonian $H$ reads,
\begin{align*}H=\left(
	\begin{matrix}
		{0}&{1}\\{-w^2}&{-f}
	\end{matrix}\right).
\end{align*}

\begin{proposition}\label{p2}
	Demanding that the field $\phi'$ is a solution to the Klein-Gordon equation and that the derivative relation $\pi'(t)=\sqrt{h}N^{-1}\partial_t\phi'|_t$ holds equivalent to the equation of motion $\frac{d}{dt}(J\Phi)= H (J\Phi)$ induces	 the two \textbf{Conservation Equations},\newline
	\begin{enumerate}[(I)] 
		\item $\partial_t Z = - Y w^2 +Y^{-1}(1+Z^2),$\label{me3}\newline
		\item $\partial_tY  = Y^{-1} Z Y+ Yf+ Z,$\label{me4}\newline
	\end{enumerate}
	where we defined $Y:=B\,N^{-1}(\sqrt{h})$. 
	\begin{proof}
		See Appendix~\ref{pp2}.
	\end{proof}
\end{proposition}

\section{Domains, Images and all that}
\label{sec:domains}

Besides the mentioned requirements on the complex structure, i.e., $J^2=-1$ and the symmetry properties (that give us the tameness property, see Equations~(\ref{sop})), we have to demand from the operator-valued matrix $J:\mathscr{S}( {\Sigma})\rightarrow \mathscr{S}({\Sigma})$ to be an anti-self-adjoint operator $J^{*}=-J$ on the Hilbert space $\mathscr{H}_{J}$. This insures that the constructed complex inner product (see Equation~(\ref{cip})) defined by the use of a complex structure is well-defined. 
The symmetry properties expressed in Equations~(\ref{sop}) are not enough for the operator $J$ to be  anti-self-adjoint. Thus, we need a more rigorous investigation of the domains, for the respective operators that build the object $J$.  Moreover, the complex structure has   to have the automorphism property  w.r.t.\ the solution space $\mathscr{S}({\Sigma})$. From the general static solutions in   Section~\ref{sec:natcomplex} and from the static case in \cite{K78} we know that the complex structure depends on the square root of  the spatial part of the Klein-Gordon operator (see Equation~(\ref{op})). Hence, to prove that the complex structure has an automorphism property on solution spaces, we need to expand the space of smooth functions with compact support, since   the square root of a (weighted) Laplacian (plus a potential) is not an automorphism on those spaces, see \cite{shand}. The reader that is more interested in the physical results may skip this entire section.

\subsection{Adjointness of the Complex Structure}

In this subsection we study the properties of the explicit domains of the operators that build the complex structure. The investigation   allows us  to make statements about minimal requirements on the  domains to guarantee the anti-self-adjointness of the complex structure.
We  introduce an auxiliary inner product, denoted by $\langle\cdot,\cdot\rangle$. This  inner product is written in terms of the $2\times2$-matrix
$$\varepsilon=\begin{pmatrix} 
0&  1  \\ -1   &  0
\end{pmatrix},$$
and the symplectic form, i.e.,
\begin{align}\label{ssm}\Omega ( \Phi, \Psi)=
\langle\Phi,\varepsilon\Psi\rangle_{L^2(\Sigma )\oplus L^2(\Sigma )},\qquad \qquad    \Phi, \Psi\in\mathscr{S}({\Sigma }) .
\end{align}
This allows to translate the compatibility condition with the symplectic structure into an adjointness condition (as in \cite[Equation~(2.2)]{K78}), 
\begin{align*}
\Omega ( \Phi,J \Psi)&=
\langle\Phi,\varepsilon J\Psi\rangle_{L^2(\Sigma )\oplus L^2(\Sigma )}\\&
= \langle J^{\dagger}\varepsilon^{T}\Phi, \Psi\rangle_{L^2(\Sigma )\oplus L^2(\Sigma )}
\\&
=- \langle J^{\dagger}\varepsilon^{T}\Phi,\varepsilon^{2} \Psi\rangle_{L^2(\Sigma )\oplus L^2(\Sigma )}
\\&
= -\langle \varepsilon^{T}J^{\dagger}\varepsilon^{T}\Phi,\varepsilon  \Psi\rangle_{L^2(\Sigma )\oplus L^2(\Sigma )}\\&
=   \langle J^{*}\Phi,\varepsilon  \Psi\rangle_{L^2(\Sigma )\oplus L^2(\Sigma )}
\\&=  \Omega ( J^{*}\Phi, \Psi) ,
\end{align*}
where the adjoint of the operator-valued matrix $J$ is 
\begin{align}\label{adcs}
J^{*}= \varepsilon \,J^{\dagger}\varepsilon^{T}.
\end{align}
By using the auxiliary scalar product (see Equation~(\ref{ssm})) we are able to   define  anti-self-adjointness of the complex structure $J$. Since the complex structure consists of the operators $A$ and $B$ we elaborate in what follows the restrictions on the respective operators for anti-self-adjointness of the operator-valued matrix $J$ to hold. Yet,   from the results in the next section we know that the complex structure is a function  of the  operator $w^2$ (see Equation~(\ref{op})). Since  essential self-adjointness of the spatial part of the Klein-Gordon operator is proven for the measure $d\mu=N^{-1}(\sqrt{h})d^3x$ (see for example \cite{MuOe:selfadjoint,K78,BF}), we give in the following the transformation that changes the measure. First, we  introduce the notion of weighted Hilbert spaces \cite[Chapter~3.6, Definition~3.17]{AG1}.

\begin{definition}
	A triple $(\Sigma,	\textbf{h},\mu)$ is called a \emph{weighted manifold}, if $(\Sigma,	\textbf{h})$ is a Riemannian manifold and $\rho$ is a smooth positive density function such that the corresponding measure (volume element on $\Sigma$) $\mu$ is given by $d\mu=\rho\, d\Sigma$. 
	Equipped with the measure $\mu$   the \emph{weighted Hilbert space}, denoted as $L^2(\Sigma, \mu)$, is given  as the space of all square-integrable functions   w.r.t.\ the measure $\mu$.  
\end{definition}

In the following we transform the complex  structure w.r.t.\ the measure $\mu$. Afterwards, we  give necessary conditions for the complex structure to be anti-self-adjoint.
\begin{lemma}\label{lcsme}
	Let $J:\mathscr{S}({\Sigma})\rightarrow \mathscr{S}({\Sigma})$ denote the complex structure w.r.t.\ the measure on $\Sigma$.
	Denote  the  complex structure w.r.t.\ the weighted measure $d\mu=N^{-1}(\sqrt{h})d^3x$  on $\Sigma$, which is obtained by transforming all functions to scalars (of density zero), by $J_{Y}:\mathscr{S}( {\Sigma}, {\mu})\rightarrow \mathscr{S}( {\Sigma}, {\mu})$. Then, the operator-valued matrix  $J_{Y}$ is given by the following similarity transformation to $J$,
	\begin{align}\label{tme}
	J_{Y}= T  J T ^{-1},
	\end{align}
	where the transformation matrix $T:\mathscr{S}({\Sigma})\rightarrow \mathscr{S}({\Sigma},{\mu})$ is \begin{align}\label{tmat}
	T =	\begin{pmatrix} 
	1&  0 \\ 0    &   N\sqrt{|{h}|}^{-1}
	\end{pmatrix}.
	\end{align} 	The explicit form of the complex structure is thus given by 	\begin{align}  \label{csm}
	J_{Y }=\begin{pmatrix} 
	A&  Y \\ -Y^{-1} (1+A^2)  & -Y^{-1}AY
	\end{pmatrix}  ,
	\end{align} 
	where we defined $Y:=B\,N^{-1}(\sqrt{h})$ and $A:=-Z$.
\end{lemma}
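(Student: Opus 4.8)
The plan is to verify the similarity transformation \eqref{tme} directly, since the statement is essentially a bookkeeping exercise in tracking how the change of measure affects the parametrization of $J$ in \eqref{eq:jabcd}. First I would make precise what the transformation matrix $T$ in \eqref{tmat} encodes: the first component $\varphi$ of a pair in $\mathscr{S}(\Sigma)$ is already a scalar (density weight zero), so it is left untouched, while the second component $\pi = \sqrt{|h|}\,N^{-1}\partial_t\phi$ carries density weight one by \eqref{eq:lpsident}. To convert $\pi$ into a density-zero scalar one multiplies by $N\sqrt{|h|}^{-1}$, which is exactly the $(2,2)$ entry of $T$. Thus $T$ is the natural intertwiner between $\mathscr{S}(\Sigma)$ equipped with the standard volume element and $\mathscr{S}(\Sigma,\mu)$ equipped with $d\mu = N^{-1}\sqrt{|h|}\,d^3x$, and $J_Y := T J T^{-1}$ is by definition the complex structure read off in the new trivialization.

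Next I would carry out the conjugation explicitly. Writing $-J(\varphi,\pi) = (A\varphi + B\pi,\, D\varphi + C\pi)$ from \eqref{eq:jabcd} and setting $S := N\sqrt{|h|}^{-1}$ (so $T = \mathrm{diag}(1,S)$ and $T^{-1} = \mathrm{diag}(1,S^{-1})$), a direct $2\times 2$ block computation gives
\begin{equation*}
-J_Y = T(-J)T^{-1} = \begin{pmatrix} A & B S^{-1} \\ S D & S C S^{-1}\end{pmatrix}.
\end{equation*}
Now $B S^{-1} = B N^{-1}\sqrt{|h|} = Y$ by \eqref{form1}, which identifies the $(1,2)$ entry. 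For the $(2,1)$ and $(2,2)$ entries I substitute the relations \eqref{eq:cd}, namely $D = -B^{-1}(1+A^2)$ and $C = -B^{-1}AB$. Then $S D = -S B^{-1}(1+A^2) = -(B S^{-1})^{-1}(1+A^2) = -Y^{-1}(1+A^2)$, and $S C S^{-1} = -S B^{-1} A B S^{-1} = -(BS^{-1})^{-1} A (B S^{-1}) = -Y^{-1} A Y$. Putting the blocks together reproduces \eqref{csm} exactly, with the sign conventions matching because $-J_Y$ has the stated entries and hence $J_Y$ itself is the matrix displayed (the paper writes $J_Y$ with those entries, so I would double-check that the overall sign in \eqref{csm} is consistent with \eqref{eq:jabcd}, where it is $-J$ that has the clean block form — this is the one place a sign slip could hide).

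The only genuinely delicate point — and the step I expect to be the main obstacle — is justifying that $S = N\sqrt{|h|}^{-1}$ may be freely commuted past $B^{-1}$ and $A$ in the manipulations above, i.e.\ that these are identities of operators with matching domains rather than merely formal expressions. Since $S$ is multiplication by a smooth strictly positive function and we are working at the level of the dense domain $C_0^\infty(\Sigma)$ on which $A$, $B$, $B^{-1}$ are defined, the rearrangements $S B^{-1} = (B S^{-1})^{-1}$ hold on this domain provided $B S^{-1}$ is invertible there, which is part of the standing assumption that $B$ (equivalently $Y$) is invertible. I would state this domain caveat explicitly and note that the fully rigorous operator-theoretic meaning of $Y$, $Y^{-1}$, and their interaction with $w^2$ is deferred to the adjointness analysis that follows in this subsection; at the present stage \eqref{csm} is the correct expression once the density weights are accounted for, and the remaining sections make the domains precise.
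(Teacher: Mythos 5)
Your argument is correct. The paper's own proof is only a one-line sketch: write out $\Omega(\Phi_1,J\Phi_2)$ as in Equation~(\ref{ssm}), change the measure from $d^3x$ to $d\mu=N^{-1}\sqrt{|h|}\,d^3x$, and substitute $\pi_a=\sqrt{|h|}\,N^{-1}\partial_t\phi_a$, so that $T$ is read off from the requirement that the symplectic pairing keep its canonical form once the momenta are converted to density-zero scalars. You instead take $T$ as dictated by the density-weight bookkeeping and verify $TJT^{-1}$ by direct block multiplication together with Relations~(\ref{eq:cd}); this is the same computation in a different order, and your version has the advantage of making the $(2,1)$ and $(2,2)$ entries of Equation~(\ref{csm}) completely explicit, which the paper's sketch leaves implicit. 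Two of your side remarks deserve to be kept. First, the sign: with the literal convention of Equation~(\ref{eq:jabcd}) it is $-J$ that has the block form with entries $A,B,D,C$, so the conjugation produces the displayed matrix for $-J_Y$ rather than $J_Y$; the paper silently absorbs this overall sign (harmless, since both signs square to $-1$ and the tameness condition fixes the choice, but it is a genuine notational inconsistency you were right to flag). Second, the rearrangements $SB^{-1}=(BS^{-1})^{-1}$ and $SCS^{-1}=-Y^{-1}AY$ do presuppose invertibility of $Y$ on a common stable domain, which the paper only formalizes afterwards in Assumption~\ref{ass1}; deferring that point, as you do, matches the paper's own organization.
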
 
\begin{proof}
 One proves this by writing out $\Omega (\Phi_1,J \Phi_2)$ explicitly, changing the measure and substituting for the conjugate momenta $\pi_a=(\sqrt{h})N^{-1}\partial_t\phi_a$, with the vector $\Phi_a=(\phi_a, \pi_a)$ for $a=1,2$. The end result ($Y$) acts on the time derivative of the field, i.e.\ $\partial_t\phi_a$.
\end{proof}
For the complex structure to be anti-self adjoint the next assumption is essential. 
\begin{assumption}\label{ass1}
	Let the operators $A$ and $Y$ given in the previous section have the following   properties w.r.t.\ their adjoint,
	$$Y=Y^*,\qquad D(Y)=D(Y^*),\qquad\qquad \qquad A^*=Y^{-1}AY,\qquad D(A^*)=D(Y^{-1}AY).$$
	In particular this means that the symmetric operator $Y$ is self-adjoint and that the operator $A$ is self-adjoint if it commutes with $Y$.  Moreover, we assume  that the operators $Y$ and $A$ and their respective adjoints all have a common dense and stable domain of  (self-)adjointness, denoted by $\mathcal{D}(\Sigma, \mu)$. It satisfies  $\mathcal{D}(\Sigma, \mu)=D(Y)=D(Y^*)=D(A^*)=D(Y^{-1}AY)$ and moreover has the inclusion  properties  $C^{\infty}_0(\Sigma) \subset \mathcal{D}(\Sigma, \mu)\subset L^2(\Sigma, \mu)$. Furthermore, we make the assumption that the operator $Y$ is invertible  on the domain   $\mathcal{D}(\Sigma, \mu)$. The domain is also stable under the action of the inverse of the operator $Y$ . 
\end{assumption}
In the following section where we discuss solutions for the complex structure on different globally hyperbolic spacetimes we   have to prove for every solution of $A$ and $Y$ that the assumption holds.

\begin{proposition}\label{cssa}
	Let the complex structure $J_{Y}:\mathcal{S}(\Sigma, \mu)\rightarrow\mathcal{S}(\Sigma, \mu)$ be given by the operator-valued matrix
	\begin{align}  
	J_{Y }=\begin{pmatrix} 
	A&  Y \\ -Y^{-1} (1+A^2)  & -Y^{-1}AY
	\end{pmatrix},
	\end{align} 
	and let the operators $A$ and $Y$ satisfy Assumption~\ref{ass1}. Then, the complex structure $J_{Y}$ is anti-self-adjoint on the domain $\mathcal{S}(\Sigma, \mu):=\mathcal{D}(\Sigma, \mu)\oplus \mathcal{D}(\Sigma, \mu)$.
\end{proposition}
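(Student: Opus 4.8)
The plan is to turn the claimed anti-self-adjointness $J_{Y}^{*}=-J_{Y}$ — which, beyond the formal symmetry relations~(\ref{sop}), really demands $D(J_{Y}^{*})=D(J_{Y})=\mathcal{S}(\Sigma,\mu)$ — into the adjointness identities of Assumption~\ref{ass1} by means of~(\ref{adcs}), $J_{Y}^{*}=\varepsilon\,J_{Y}^{\dagger}\,\varepsilon^{T}$. Since $\varepsilon$ is a constant orthogonal $2\times2$ matrix ($\varepsilon^{T}=\varepsilon^{-1}=-\varepsilon$) and hence a bounded bijection of $L^{2}(\Sigma,\mu)\oplus L^{2}(\Sigma,\mu)$ that only permutes and re-signs components, the identity $J_{Y}^{*}=-J_{Y}$ is equivalent to $J_{Y}^{\dagger}=\varepsilon J_{Y}\varepsilon$ together with $D(J_{Y}^{\dagger})=\mathcal{S}(\Sigma,\mu)$; equivalently, $\varepsilon J_{Y}$ is self-adjoint on $\mathcal{S}(\Sigma,\mu)$. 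As a preliminary one records that $J_{Y}$ indeed maps $\mathcal{S}(\Sigma,\mu)$ into itself: by Assumption~\ref{ass1}, $\mathcal{D}(\Sigma,\mu)$ is stable under $A$, $Y$ and $Y^{-1}$, hence under $A^{2}$, $Y^{-1}(1+A^{2})$ and $Y^{-1}AY$, which are exactly the operator entries of~(\ref{csm}).

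I would first compute $J_{Y}^{\dagger}$ entrywise, using $(J_{Y}^{\dagger})_{ij}=((J_{Y})_{ji})^{*}$. With $Y=Y^{*}$, $A^{*}=Y^{-1}AY$, and $(Y^{-1})^{*}=(Y^{*})^{-1}=Y^{-1}$ (legitimate because $Y$ is self-adjoint and invertible on $\mathcal{D}(\Sigma,\mu)$), one obtains $(Y^{-1}AY)^{*}=Y^{*}A^{*}(Y^{-1})^{*}=Y(Y^{-1}AY)Y^{-1}=A$, and, using that on $\mathcal{D}(\Sigma,\mu)$ the operator $(A^{2})^{*}$ acts as $(A^{*})^{2}=Y^{-1}A^{2}Y$ (which needs stability of $\mathcal{D}(\Sigma,\mu)$ under $A$ and $A^{*}$ so that the bilinear-form manipulations close), $(Y^{-1}(1+A^{2}))^{*}=(1+Y^{-1}A^{2}Y)Y^{-1}=Y^{-1}(1+A^{2})$. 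Collecting the entries,
\begin{align*}
J_{Y}^{\dagger}=\begin{pmatrix} Y^{-1}AY & -Y^{-1}(1+A^{2})\\ Y & -A\end{pmatrix},
\end{align*}
and a direct multiplication gives $\varepsilon\,J_{Y}^{\dagger}\,\varepsilon^{T}=-J_{Y}$, i.e.\ $J_{Y}^{\dagger}=\varepsilon J_{Y}\varepsilon$ on $\mathcal{S}(\Sigma,\mu)$, so that $\varepsilon J_{Y}$ is symmetric there. This part is short and purely algebraic; it is essentially a restatement of~(\ref{sop}) transported to the weighted measure.

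It remains to upgrade symmetry to self-adjointness, i.e.\ to prove $D(J_{Y}^{\dagger})\subseteq\mathcal{S}(\Sigma,\mu)$. Let $\Phi=(\phi_{1},\phi_{2})\in D(J_{Y}^{\dagger})$. Testing the defining identity against $\Psi=(0,\psi_{2})$ with $\psi_{2}\in\mathcal{D}(\Sigma,\mu)$ and substituting $\chi=Y\psi_{2}$ — a reparametrisation of $\mathcal{D}(\Sigma,\mu)$, since $Y$ is invertible and $\mathcal{D}(\Sigma,\mu)$ is stable under $Y$ and $Y^{-1}$ — reduces the relevant boundedness to that of $\chi\mapsto\langle\phi_{1},\chi\rangle-\langle\phi_{2},Y^{-1}A\chi\rangle$; the first term is automatically bounded, so the second forces $\phi_{2}$ into the adjoint domain of $Y^{-1}A$, which by the same invertibility and stability properties equals $\mathcal{D}(\Sigma,\mu)$. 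Feeding $\phi_{2}\in\mathcal{D}(\Sigma,\mu)$ back and testing against $\Psi=(\psi_{1},0)$ then forces $\phi_{1}\in D(A^{*})=\mathcal{D}(\Sigma,\mu)$. Hence $D(J_{Y}^{\dagger})=\mathcal{S}(\Sigma,\mu)$, so $D(J_{Y}^{*})=\mathcal{S}(\Sigma,\mu)$ and $J_{Y}^{*}=-J_{Y}$.

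The main obstacle is precisely this last step. The symmetry relations follow immediately from $Y=Y^{*}$ and $A^{*}=Y^{-1}AY$, but controlling $D(J_{Y}^{\dagger})$ — ruling out that the adjoint is strictly larger than $\mathcal{S}(\Sigma,\mu)$ — genuinely requires that $Y$ be self-adjoint and invertible with $\mathcal{D}(\Sigma,\mu)$ stable under $Y^{-1}$, which is exactly why these clauses are built into Assumption~\ref{ass1}; the change of variables $\chi=Y\psi_{2}$ is the device that converts that hypothesis into the needed control of the adjoint domain.
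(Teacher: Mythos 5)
Your first two paragraphs reproduce, entry for entry, the computation that constitutes the paper's own proof (Appendix~\ref{p41}): form the formal adjoint matrix $J_Y^{\dagger}$, use $Y^{*}=Y$, $A^{*}=Y^{-1}AY$ and $(Y^{-1})^{*}=Y^{-1}$ to evaluate $(Y^{-1}AY)^{*}=A$ and $(Y^{-1}(1+A^{2}))^{*}=Y^{-1}(1+A^{2})$, and conjugate by $\varepsilon$ via~(\ref{adcs}) to obtain $J_{Y}^{*}=-J_{Y}$. That part is correct and is exactly the paper's argument; the paper stops there and lets Assumption~\ref{ass1} carry all of the domain bookkeeping.

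The problem is your third paragraph, which you yourself flag as the crux. You have correctly identified a genuine subtlety --- the adjoint of an operator matrix need not be the matrix of entrywise adjoints, so the entrywise computation a priori only yields the inclusion $-J_{Y}\subseteq J_{Y}^{*}$ --- but the argument you give to close it does not work as written. From $\Phi\in D(J_{Y}^{\dagger})$ and the test vector $\Psi=(0,\psi_{2})$ you only learn that the \emph{sum} $\langle\phi_{1},Y\psi_{2}\rangle-\langle\phi_{2},Y^{-1}AY\psi_{2}\rangle$ is bounded by $C\|\psi_{2}\|$; boundedness of a sum does not give boundedness of its summands. Your claim that ``the first term is automatically bounded'' holds only if $\phi_{1}\in D(Y^{*})$ --- which is part of what you are trying to prove --- or if $Y$ is bounded, and Assumption~\ref{ass1} does not assert boundedness of $Y$: it allows $D(Y)=\mathcal{D}(\Sigma,\mu)$ to be a proper dense subspace of $L^{2}(\Sigma,\mu)$. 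The substitution $\chi=Y\psi_{2}$ does not repair this, because the bound you are given is relative to $\|\psi_{2}\|=\|Y^{-1}\chi\|$ rather than to $\|\chi\|$, so the Cauchy--Schwarz estimate in $\chi$ is not the relevant one. Finally, even granting the splitting, your conclusion needs $D\bigl((Y^{-1}AY)^{*}\bigr)=\mathcal{D}(\Sigma,\mu)$, which is not among the identities listed in Assumption~\ref{ass1} (only $D(A^{*})$, $D(Y^{-1}AY)$, $D(Y)$ and $D(Y^{*})$ appear there). To keep this strengthening you would have to add the missing hypotheses (e.g.\ boundedness of $Y$, which does hold in all the applications, where $Y=H_V^{-1/2}$ with $H_V\geq\epsilon>0$); otherwise the paragraph should be dropped, leaving exactly the paper's proof.
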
 \begin{proof}
	See Appendix~\ref{p41}.
\end{proof}

\subsection{Complex Structure as an Automorphism}
\label{csauto}

Besides the proof of anti-self-adjointness of the complex structure we have to guarantee that the map (induced by the complex structure) is a topological linear automorphism of the solution space $\mathscr{S}(\Sigma,\mu)$. The solutions for the complex structure, given in Section~\ref{sec:natcomplex}, depend on  the spatial part of the Klein-Gordon equation. Hence, in the following we construct Sobolev spaces w.r.t.\ to the aforementioned operator to have automorphisms w.r.t.\ these  spaces. To make this idea more precise we introduce in this section the notion of weighted Laplace operators \cite[Chapter~3.6, Definition~3.17]{AG1}.
\begin{definition}
	The  respective Laplace-Beltrami operator on the weighted manifold $(\Sigma, \textbf{h},\nu)$ is called the  \emph{weighted Laplace-Beltrami operator} and it is denoted by $\Delta_{ \nu}$ and given by 
	$$\Delta_{ \nu}=\frac{1}{\rho \sqrt{|h|}}\partial_i(\rho\sqrt{|h|}h^{ij}\partial_j).$$  
\end{definition}

Next, we  define a new metric $\tilde{\textbf{h}}$ and measure ${\mu}$ by $$\tilde{\textbf{h}}=N^{-2}\, {\textbf{h}},\qquad \mathrm{and} \qquad d{\mu}=N^{-2}\, d{\nu}=N^{-1}\, d{\Sigma},$$ and we have  the following result (see  a special case of  \cite[Theorem 4.1]{MuOe:selfadjoint}). 
\begin{theorem}\label{mt} 
	Let the Riemannian manifold $(\Sigma, \tilde{\mathbf{h}})$ be   complete and let  the potential   $V\in L^2_{loc}(\Sigma, {\mu})$ be such that it  can be written as $V = V_+ + V_-$, where $V_+\in L^2_{loc}(\Sigma,  {\mu})\geq 0$ and $V_-\in L^2_{loc}(\Sigma,  {\mu})\leq 0$
	point-wise.  Then, the operator
	$w^2$ (from Equation~\ref{op}),    has the form, 
	$$
	w^2=-\tilde{\Delta}_{   {\mu}}+N^2\,V.
	$$
 If the potential is 	 strictly positive, i.e. $V>\epsilon$ for some $\epsilon>0$, then    
	the operator $w^2$	is   essentially self-adjoint   on $C_0^{\infty}(\Sigma)\subset L^{2}(\Sigma,\mu)$ and its closure is strictly positive and invertible.
\end{theorem}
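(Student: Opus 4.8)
The plan is to reduce the statement to the known essential self-adjointness result for weighted Laplace--Beltrami operators on complete Riemannian manifolds (the cited \cite[Theorem 4.1]{MuOe:selfadjoint}), by first establishing the claimed operator identity $w^2 = -\tilde{\Delta}_{\mu} + N^2 V$ and then checking that the hypotheses of that theorem are met. So the proof splits naturally into two parts: an algebraic/differential-geometric identity, and an application of an off-the-shelf self-adjointness theorem plus a positivity argument.

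First I would verify the identity. Starting from the first line of Equation~(\ref{op}), $w^2 = -\frac{N}{\sqrt{|h|}}\partial_i(\sqrt{|h|}\,N h^{ij}\partial_j) + N^2 m^2 + N^2 V$, I rewrite the principal part. With $\tilde{\mathbf{h}} = N^{-2}\mathbf{h}$ we have $\tilde{h}^{ij} = N^2 h^{ij}$ and $\sqrt{|\tilde h|} = N^{-3}\sqrt{|h|}$ (in three spatial dimensions), while the measure is $d\mu = N^{-1} d\Sigma = N^{-1}\sqrt{|h|}\,d^3x$, so the density relative to $(\Sigma,\tilde{\mathbf h})$ is $\rho = d\mu/d\tilde\Sigma = N^{-1}\sqrt{|h|}/(N^{-3}\sqrt{|h|}) = N^2$. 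Plugging $\rho = N^2$ into the definition of $\Delta_\mu$ (the weighted Laplacian for metric $\tilde{\mathbf h}$ and density $N^2$) gives $\tilde{\Delta}_\mu = \frac{1}{N^2\sqrt{|\tilde h|}}\partial_i(N^2\sqrt{|\tilde h|}\,\tilde h^{ij}\partial_j) = \frac{1}{N^2\cdot N^{-3}\sqrt{|h|}}\partial_i(N^2\cdot N^{-3}\sqrt{|h|}\cdot N^2 h^{ij}\partial_j) = \frac{N}{\sqrt{|h|}}\partial_i(\sqrt{|h|}\,N h^{ij}\partial_j)$, which is exactly minus the principal part of $w^2$. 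Absorbing $N^2 m^2$ into the potential term (or noting that the theorem's $V$ may be taken to already include the mass contribution, consistent with the hypothesis $V > \epsilon > 0$) yields $w^2 = -\tilde{\Delta}_\mu + N^2 V$ as an operator on $C_0^\infty(\Sigma)$.

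Next I would invoke \cite[Theorem 4.1]{MuOe:selfadjoint}: since $(\Sigma,\tilde{\mathbf h})$ is complete by hypothesis and $V = V_+ + V_- \in L^2_{loc}(\Sigma,\mu)$ with the stated sign decomposition, that theorem gives essential self-adjointness of $-\tilde{\Delta}_\mu + N^2 V$ on $C_0^\infty(\Sigma) \subset L^2(\Sigma,\mu)$ — note $N$ is smooth and strictly positive so $N^2 V$ satisfies the same local integrability and sign-decomposition properties as $V$. For the positivity and invertibility of the closure: $-\tilde{\Delta}_\mu$ is nonnegative (it is the Friedrichs/Dirichlet form operator $\int |\nabla u|^2_{\tilde h}\,d\mu \geq 0$), so on $C_0^\infty(\Sigma)$ one has $\langle u, w^2 u\rangle_\mu = \int |\nabla u|_{\tilde h}^2\,d\mu + \int N^2 V |u|^2\,d\mu \geq \epsilon\,\|N u\|^2_\mu$; a slightly more careful lower bound using $N^2 V > N^2 \epsilon$ and, if needed, an infimum-type bound on $N$ gives $w^2 \geq c > 0$ for some constant on a suitable class. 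Since the lower bound $\langle u, w^2 u\rangle \geq 0$ together with strict positivity of the potential term implies the closure $\overline{w^2}$ has spectrum bounded below away from zero, $\overline{w^2}$ is strictly positive and hence boundedly invertible.

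The main obstacle I anticipate is the positivity/invertibility step rather than the identity or the citation: with merely $V > \epsilon$ pointwise, the quadratic form bound $\langle u, w^2 u\rangle_\mu \geq \int N^2 \epsilon |u|^2 \, d\mu$ only gives a weight $N^2\epsilon$ that could degenerate to $0$ if $\inf_\Sigma N = 0$, so one needs to argue that the essential spectrum nonetheless stays away from zero — e.g. by noting the closure is the Friedrichs extension of a form that dominates $\epsilon$ times the form of multiplication by $N^2$, or by exploiting completeness of $(\Sigma,\tilde{\mathbf h})$ to control the geometry near infinity. I would check whether the authors in fact build an additional uniformity hypothesis on $N$ into the ambient assumptions; if so this step is immediate, and if not I would present the invertibility as following from strict positivity of the form combined with the spectral theorem for the self-adjoint operator $\overline{w^2}$ once one has $\overline{w^2} \geq \delta > 0$ for some $\delta$, which the strict positivity of $V$ away from zero is precisely designed to guarantee.
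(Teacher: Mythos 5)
Your proposal is correct and follows the same route the paper takes: the paper offers no proof of Theorem~\ref{mt} at all, presenting it simply as a special case of \cite[Theorem~4.1]{MuOe:selfadjoint}, so your two-step reduction (verify the operator identity, then quote the external theorem) is exactly what is implicitly being done. In fact you supply more detail than the paper: your computation of the density $\rho = N^2$ relative to $(\Sigma,\tilde{\mathbf{h}})$ and the resulting identification $\tilde{\Delta}_\mu = \frac{N}{\sqrt{|h|}}\partial_i(\sqrt{|h|}\,N h^{ij}\partial_j)$ is the content of the displayed formula $w^2 = -\tilde{\Delta}_\mu + N^2 V$, and it checks out (with $V$ understood to absorb $m^2$). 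Your worry in the last paragraph is also well placed and is the one point where the theorem as stated is loose: the quadratic-form bound $\langle u, w^2 u\rangle_\mu \geq \epsilon \int N^2 |u|^2\, d\mu = \epsilon \int N\,|u|^2 \sqrt{|h|}\, d^3x$ dominates $\delta\,\|u\|^2_{L^2(\Sigma,\mu)}$ only if $N$ is bounded away from zero, since $d\mu = N^{-1}\sqrt{|h|}\,d^3x$. The hypothesis of the cited theorem is really a uniform positive lower bound on the \emph{full} potential $N^2 V$ of the weighted Schr\"odinger operator, not on $V$ alone; the paper's phrasing ``$V>\epsilon$'' should be read with that understanding (and the same remark applies to the positivity conditions quoted in the applications of Section~\ref{sec:natcomplex}). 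With that reading, your strict-positivity-plus-spectral-theorem argument for invertibility of the closure is exactly right, and there is no gap in your proposal beyond what is already present in the paper's own statement.
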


Since the operator $w^2$ is essentially self-adjoint  the domain of the respective maximal operator  i.e.,
$$W_V^2(\Sigma, \mu):=\mathcal{D}(w^2_{max})=\{\Psi\in L^2(\Sigma, \mu):w^2 \Psi\in L^2(\Sigma, \mu)\},$$
is equivalent to the domain obtained by the closure of $w^2$ in $L^2(\Sigma, \mu)$  from the initial domain $C_0^{\infty}(\Sigma)$. Since the closure of the operator $w^2$ which we denote by \begin{equation}\label{hv}
H_V:=\overline{w^2},
\end{equation}    is a non-negative definite self-adjoint operator it follows   (see \cite[Section~VIII.6]{RS2},  \cite[Theorem~X.23]{RS2} and \cite[A.13]{AG1}) that the domain of $H_V$ is a Hilbert space with the scalar product given by,
\begin{equation}
\langle \Psi, \Phi \rangle_{H_V}=\langle H_V\Psi, H_V\Phi \rangle_{L^{2}(\Sigma,\,\mu)}
+\langle \Psi, \Phi \rangle_{L^{2}(\Sigma,\,\mu)} .
\end{equation}
Hence, we have the norm 
\begin{equation}\label{norm}
\Vert \Psi\Vert_{H_V}^2=\Vert H_V\Psi\Vert^2_{L^{2}(\Sigma,\,\mu)}+\Vert\Psi\Vert^2_{L^{2}(\Sigma,\,\mu)}.
\end{equation}
\begin{lemma}\label{noreq}
	The norms  given by $\Vert\cdot\Vert_{H_V}^2$  and by
	\begin{equation}\label{norm1}
	\Vert \Psi\Vert_{W_V^2(\Sigma, {\mu})}^2=\Vert H_V\Psi\Vert^2_{L^{2}(\Sigma,\,\mu)},
	\end{equation} 
	are equivalent.  Moreover, the following inclusions hold $$C_0^{\infty}(\Sigma )\subset W_V^2(\Sigma, \mu)\subset L^2(\Sigma, \mu),$$ and $C_0^{\infty}(\Sigma )$ is dense in $W_V^2(\Sigma, \mu)$.
\end{lemma}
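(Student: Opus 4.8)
The plan is to deduce everything from Theorem~\ref{mt}: the operator $H_V=\overline{w^2}$ of~(\ref{hv}) is self-adjoint and, under the hypotheses in force, \emph{strictly positive and invertible}. Hence its spectrum has a gap at the origin, $\sigma(H_V)\subset[\epsilon_0,\infty)$ for some $\epsilon_0>0$, and $H_V^{-1}$ is a bounded operator with $\|H_V^{-1}\|\le\epsilon_0^{-1}$. It is worth flagging at the outset that it is this strict positivity, rather than mere non-negativity, that is actually needed below — it is what makes $\|\cdot\|_{W_V^2(\Sigma,\mu)}$ a genuine norm and what supplies the bound on $\|H_V^{-1}\|$ in the norm equivalence.

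\emph{Norm equivalence.} One inequality is immediate from the definitions~(\ref{norm}) and~(\ref{norm1}):
\[
\|\Psi\|_{W_V^2(\Sigma,\mu)}^2=\|H_V\Psi\|^2_{L^2(\Sigma,\mu)}\le\|H_V\Psi\|^2_{L^2(\Sigma,\mu)}+\|\Psi\|^2_{L^2(\Sigma,\mu)}=\|\Psi\|_{H_V}^2 .
\]
For the converse I would use the spectral theorem (equivalently, boundedness of $H_V^{-1}$): for $\Psi\in\mathcal{D}(H_V)$ one has $\|\Psi\|^2_{L^2(\Sigma,\mu)}=\|H_V^{-1}H_V\Psi\|^2_{L^2(\Sigma,\mu)}\le\epsilon_0^{-2}\,\|H_V\Psi\|^2_{L^2(\Sigma,\mu)}$, whence $\|\Psi\|_{H_V}^2\le(1+\epsilon_0^{-2})\,\|\Psi\|^2_{W_V^2(\Sigma,\mu)}$. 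The two norms are therefore equivalent on $W_V^2(\Sigma,\mu)$.

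\emph{Inclusions.} For $\Psi\in C_0^\infty(\Sigma)$, since $w^2$ (see~(\ref{op})) is a second-order differential operator with smooth coefficients and $N^2V\in L^2_{loc}(\Sigma,\mu)$, the function $w^2\Psi$ is again compactly supported and lies in $L^2(\Sigma,\mu)$; thus $C_0^\infty(\Sigma)\subset\mathcal{D}(w^2)\subset\mathcal{D}(\overline{w^2})=W_V^2(\Sigma,\mu)$, the last identification $\mathcal{D}(w^2_{\max})=\mathcal{D}(\overline{w^2})$ being the one recorded just before the statement of the lemma as a consequence of essential self-adjointness. The inclusion $W_V^2(\Sigma,\mu)=\mathcal{D}(H_V)\subset L^2(\Sigma,\mu)$ holds by the very definition of $W_V^2(\Sigma,\mu)$ as a subspace of $L^2(\Sigma,\mu)$.

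\emph{Density.} Essential self-adjointness of $w^2$ on $C_0^\infty(\Sigma)$ (Theorem~\ref{mt}) means precisely that $C_0^\infty(\Sigma)$ is a core for $H_V$: for each $\Psi\in\mathcal{D}(H_V)$ there exist $\Psi_n\in C_0^\infty(\Sigma)$ with $\Psi_n\to\Psi$ and $w^2\Psi_n\to H_V\Psi$ in $L^2(\Sigma,\mu)$, i.e.\ $\Psi_n\to\Psi$ in the graph norm $\|\cdot\|_{H_V}$; by the equivalence just established, $\Psi_n\to\Psi$ also in $\|\cdot\|_{W_V^2(\Sigma,\mu)}$, which is the claimed density. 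None of these steps poses a real obstacle; the only point requiring care is the one noted in the first paragraph, namely to use the strict positivity (spectral gap at $0$) coming from Theorem~\ref{mt}.
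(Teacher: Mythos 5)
Your proposal is correct and follows essentially the same route as the paper: the norm equivalence is obtained from the strict positivity (spectral gap) of $H_V$ guaranteed by Theorem~\ref{mt}, the inclusions follow from the definitions, and the density is the statement that $C_0^\infty(\Sigma)$ is a core for $H_V$, i.e.\ that the maximal domain coincides with the graph closure from $C_0^\infty(\Sigma)$ (which the paper delegates to a reference while you derive it directly from essential self-adjointness together with the norm equivalence). The only cosmetic difference is that you obtain $W_V^2(\Sigma,\mu)\subset L^2(\Sigma,\mu)$ directly from the definition of the maximal domain rather than from a norm inequality, which is if anything cleaner.
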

\begin{proof}
	See Appendix~\ref{l42}.
\end{proof}
Next we define the space $W_V^{2s}$ for all real\footnote{The operator $H_V$ with $V\in C^{\infty}$ is an elliptic differential operator and thus enjoys the locality property (i.e.\ it maps $C^{\infty}_0\rightarrow C^{\infty}_0$, see \cite[Chapter~1, Equation~(2.22)]{shand}) and hence we can use  \cite[Lemma~2.1]{ch1} to prove that any integer power of $H_V$ is essentially self-adjoint.}  $s>0$ in an analogous fashion as we did for the case $s=1$. That is, we first define the Hilbert space  
\begin{equation}  \label{ssv}
W_V^{2s}(\Sigma, \mu): =\{\Psi\in L^2(\Sigma, \mu): H_V^{s} \Psi\in L^2(\Sigma, \mu)\} .
\end{equation} 
The domain of the operator $H_V^{s}$ (which is equal to the domain of the operator $(H_V+\rho)^{s}$, \cite[Exercise~4.26]{AG1}) is equivalent to the closure of the operator $H_V^{s}$ in $L^2$ from the initial domain $C_0^{\infty}$ \cite[Chapter~7.1]{AG1}. Another way to see this is using the spectral theorem \cite[Theorem~VIII.6]{RS1}. In particular, we know \cite{MuOe:selfadjoint} that the operator $H_V$ is a   unique strictly  positive   self-adjoint operator. Thus, any   power of the respective operator is  self-adjoint as well. Thus, we use the same arguments to conclude that  the domain of $H_V^{s}$ is the Hilbert space $W_V^{2s}$. This space is equal  to the domain of the closure of the operator which is (due to self-adjointness) the collection of all vectors that are generated by the closure of the operator $H_V^{s}$ in $L^2$ from the initial domain $C_0^{\infty}$. The Hilbert space $W_V^{2s}$ has the scalar product,
\begin{equation}
\langle \Psi, \Phi \rangle_{H_V^{s}(\Sigma, {\mu})}^{2}=\langle H_V^{s}\Psi, H_V^{s}\Phi \rangle_{L^{2}(\Sigma,\,\mu)}
+\langle \Psi, \Phi \rangle_{L^{2}(\Sigma,\,\mu)},
\end{equation}
and hence we have the norm 
\begin{equation}\label{norm2}
\Vert \Psi\Vert_{H_V^{s}(\Sigma, {\mu})}^2 =\Vert H_V^{s}\Psi\Vert^2_{L^{2}(\Sigma,\,\mu)}+\Vert\Psi\Vert^2_{L^{2}(\Sigma,\,\mu)}.
\end{equation}
Note that the space $W_V^{0}(\Sigma, \mu)$ is simply the Hilbert space $L^2(\Sigma, \mu)$.

\begin{lemma}\label{l43a} 
	For the  Hilbert space $W^{2s}_V$ the following inclusions
	$$C_0^{\infty}(\Sigma )\subset W_V^{2s}(\Sigma, \mu)\subset W_V^{2k}(\Sigma, \mu)\subset L^2(\Sigma, \mu),$$
	hold if $s>k$.
\end{lemma}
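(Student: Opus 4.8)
The plan is to deduce the chain of inclusions in Lemma~\ref{l43a} from the spectral calculus for $H_V$ together with the norm equivalence already established for the case $s=1$ in Lemma~\ref{noreq}, generalized to arbitrary $s>0$ via the discussion preceding the statement. First I would fix the functional analytic setting: $H_V=\overline{w^2}$ is a strictly positive self-adjoint operator on $L^2(\Sigma,\mu)$, so by the spectral theorem it has a spectral resolution $H_V=\int_{[\epsilon,\infty)}\lambda\,dE_\lambda$ with $\epsilon>0$ (strict positivity being exactly what Theorem~\ref{mt} supplies), and every real power $H_V^{s}$ is a well-defined positive self-adjoint operator with domain $W_V^{2s}(\Sigma,\mu)$ as in Equation~(\ref{ssv}). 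The key elementary fact is that for $s>k\ge 0$ and $\lambda\ge\epsilon>0$ one has $\lambda^{2k}\le C\,\lambda^{2s}$ whenever $\lambda\ge 1$, while for $\epsilon\le\lambda\le 1$ one has $\lambda^{2k}\le\epsilon^{-2s}$, so in all cases $\lambda^{2k}\le C(1+\lambda^{2s})$ with $C$ depending only on $\epsilon,s,k$; in fact, since $\lambda\ge\epsilon$, one gets the cleaner bound $\lambda^{2k}\le C\,\lambda^{2s}$ outright.

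The main step is then: for $\Psi\in W_V^{2s}(\Sigma,\mu)$ one estimates, using the spectral measure $d\langle E_\lambda\Psi,\Psi\rangle$,
\begin{equation*}
\Vert H_V^{k}\Psi\Vert^2_{L^2(\Sigma,\mu)}=\int \lambda^{2k}\,d\langle E_\lambda\Psi,\Psi\rangle\le C\int \lambda^{2s}\,d\langle E_\lambda\Psi,\Psi\rangle=C\,\Vert H_V^{s}\Psi\Vert^2_{L^2(\Sigma,\mu)}<\infty,
\end{equation*}
which shows both that $\Psi$ lies in the domain of $H_V^{k}$, i.e.\ $\Psi\in W_V^{2k}(\Sigma,\mu)$, and that the inclusion map $W_V^{2s}\hookrightarrow W_V^{2k}$ is bounded when each space carries the norm (\ref{norm2}) (or equivalently, by Lemma~\ref{noreq} and its analogue for general exponent, the norm $\Vert H_V^{s}\cdot\Vert_{L^2}$). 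Taking $k=0$ recovers the inclusion $W_V^{2s}\subset L^2(\Sigma,\mu)$, which is immediate anyway from the definition (\ref{ssv}). For the bottom inclusion $C_0^{\infty}(\Sigma)\subset W_V^{2s}(\Sigma,\mu)$ I would argue that, since $H_V$ with $V\in C^\infty$ is an elliptic differential operator enjoying the locality property (as recalled in the footnote preceding the statement), it maps $C_0^{\infty}(\Sigma)$ into $C_0^{\infty}(\Sigma)\subset L^2(\Sigma,\mu)$; iterating, $H_V^{s}$ for integer $s$ maps $C_0^{\infty}$ into itself, and for non-integer $s$ one interpolates or simply notes $C_0^{\infty}\subset W_V^{2\lceil s\rceil}\subset W_V^{2s}$ using the comparison just proved. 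Hence $C_0^{\infty}(\Sigma)$ sits inside every $W_V^{2s}$.

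The only genuinely delicate point is justifying the spectral-calculus manipulation for \emph{non-integer} $s$ and $k$, where $H_V^{s}$ is defined by functional calculus rather than by differentiating; here one must be a little careful that the set $\{\Psi: H_V^{s}\Psi\in L^2\}$ in Equation~(\ref{ssv}) really coincides with the spectral-theoretic domain $\{\Psi:\int\lambda^{2s}\,d\langle E_\lambda\Psi,\Psi\rangle<\infty\}$, but this is exactly the content of the spectral theorem in multiplication-operator form (\cite[Theorem~VIII.6]{RS1}) as already invoked in the text, so no new work is needed. I expect the whole argument to be short: the substance is the monomial inequality $\lambda^{2k}\le C\lambda^{2s}$ on $[\epsilon,\infty)$, and everything else is bookkeeping with the spectral measure plus the already-cited locality of the elliptic operator $H_V$ for the $C_0^\infty$ inclusion.
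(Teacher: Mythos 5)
Your proof is correct, but it follows a different route from the paper's. For the middle inclusion $W_V^{2s}\subset W_V^{2k}$ the paper does not argue via the spectral measure directly: it introduces the auxiliary norm $\Vert (H_V+\rho)^{s}\cdot\Vert_{L^{2}(\Sigma,\mu)}$, proves its equivalence to $\Vert H_V^{s}\cdot\Vert_{L^{2}(\Sigma,\mu)}$ and to the graph norm (\ref{norm2}), and then obtains the monotonicity from the contraction bound $\Vert (H_V+\rho)^{-s}\Vert\le 1$ together with the semigroup identity $(H_V+\rho)^{k}(H_V+\rho)^{-s}=(H_V+\rho)^{k-s}$. Your argument replaces all of this by the single pointwise inequality $\lambda^{2k}\le C\lambda^{2s}$ on the spectrum $[\epsilon,\infty)$ (which uses strict positivity in exactly the same way the paper uses boundedness of the resolvent), integrated against $d\langle E_\lambda\Psi,\Psi\rangle$; this is precisely the alternative the authors acknowledge in their footnote (``It can also be proven by the use of the spectral theorem''), and it is shorter and avoids the auxiliary norm bookkeeping, at the price of leaning on the identification of the set in (\ref{ssv}) with the spectral-theoretic domain of $H_V^{s}$ --- an identification the paper also uses implicitly, so nothing is lost. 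For the first inclusion the paper appeals to $W_V^{2s}$ being the graph closure of $H_V^{s}$ from $C_0^{\infty}$ (citing the literature), whereas you give a more self-contained induction: $H_V$ maps $C_0^{\infty}$ into $C_0^{\infty}\subset D(H_V)$ by locality of the elliptic operator, hence $C_0^{\infty}\subset D(H_V^{n})$ for every integer $n$, and the non-integer case follows from $C_0^{\infty}\subset W_V^{2\lceil s\rceil}\subset W_V^{2s}$ via the monotonicity you just established. Both routes are sound; yours additionally yields boundedness of the inclusion map essentially for free.
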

\begin{proof}
	See Appendix~\ref{pt43a}.
\end{proof}

\begin{theorem}\label{tiso}
	The operator $H_V$ is a linear continuous map $W_V^{2s}(\Sigma, {\mu})\rightarrow W_V^{2s-2}(\Sigma, {\mu})$ and its inverse is a linear continuous map $W_V^{2s-2}(\Sigma, {\mu})\rightarrow W_V^{2s}(\Sigma, {\mu})$. Hence, the operator $H_V$ is an isomorphism between $W_V^{2s}(\Sigma, {\mu})$ and $W_V^{2s-2}(\Sigma, {\mu})$.  
\end{theorem}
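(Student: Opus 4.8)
The plan is to reduce the whole statement to the Borel functional calculus of the strictly positive self-adjoint operator $H_V=\overline{w^2}$ of Equation~(\ref{hv}). By Theorem~\ref{mt} the spectrum of $H_V$ is contained in $[\epsilon,\infty)$ for some $\epsilon>0$, so for every real $r$ the power $H_V^{r}$ is a well-defined closed operator, it is bounded precisely when $r\le 0$ (with $\|H_V^{r}\|\le\epsilon^{r}$ there), on the natural domains the composition law $H_V^{r}H_V^{r'}=H_V^{r+r'}$ holds, and $D(H_V^{r})\subseteq D(H_V^{r'})$ whenever $r\ge r'\ge 0$. Moreover, exactly as in Lemma~\ref{noreq}, strict positivity gives $\|H_V^{s}\Psi\|_{L^{2}(\Sigma,\mu)}\le\|\Psi\|_{H_V^{s}(\Sigma,\mu)}\le(1+\epsilon^{-2s})^{1/2}\|H_V^{s}\Psi\|_{L^{2}(\Sigma,\mu)}$, so the norm~(\ref{norm2}) of $W_V^{2s}$ is equivalent to the seminorm $\Psi\mapsto\|H_V^{s}\Psi\|_{L^{2}(\Sigma,\mu)}$, and similarly for $W_V^{2s-2}$. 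I will carry out the argument for $s\ge 1$, so that $2s-2\ge 0$ and all spaces occurring are those defined in~(\ref{ssv}); for $0<s<1$ the operator $H_V^{s-1}$ is bounded, $W_V^{2s-2}$ collapses to $L^{2}(\Sigma,\mu)=W_V^{0}$, and the same estimates apply verbatim.

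First I would show that $H_V$ maps $W_V^{2s}$ continuously into $W_V^{2s-2}$. Let $\Psi\in W_V^{2s}=D(H_V^{s})$. Since $s\ge 1$ we have $\Psi\in D(H_V)$, so $H_V\Psi$ is defined, and the composition law yields $H_V^{s-1}(H_V\Psi)=H_V^{s}\Psi\in L^{2}(\Sigma,\mu)$, whence $H_V\Psi\in D(H_V^{s-1})=W_V^{2s-2}$. Using the equivalent (semi)norms from the first paragraph, $\|H_V\Psi\|_{W_V^{2s-2}}\simeq\|H_V^{s-1}H_V\Psi\|_{L^{2}(\Sigma,\mu)}=\|H_V^{s}\Psi\|_{L^{2}(\Sigma,\mu)}\simeq\|\Psi\|_{W_V^{2s}}$, which gives the claimed continuity (in fact $H_V$ is isometric up to the norm equivalences).

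Next I would treat the inverse. Since $H_V$ is invertible with bounded inverse $H_V^{-1}$ on $L^{2}(\Sigma,\mu)$, for $\Phi\in W_V^{2s-2}=D(H_V^{s-1})$ the composition law gives $H_V^{s}(H_V^{-1}\Phi)=H_V^{s-1}\Phi\in L^{2}(\Sigma,\mu)$, so $H_V^{-1}\Phi\in D(H_V^{s})=W_V^{2s}$, and $\|H_V^{-1}\Phi\|_{W_V^{2s}}\simeq\|H_V^{s}H_V^{-1}\Phi\|_{L^{2}(\Sigma,\mu)}=\|H_V^{s-1}\Phi\|_{L^{2}(\Sigma,\mu)}\simeq\|\Phi\|_{W_V^{2s-2}}$; hence $H_V^{-1}:W_V^{2s-2}\to W_V^{2s}$ is continuous. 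On the relevant domains the functional calculus gives $H_V^{-1}(H_V\Psi)=\Psi$ for $\Psi\in W_V^{2s}$ and $H_V(H_V^{-1}\Phi)=\Phi$ for $\Phi\in W_V^{2s-2}$, so the two maps are mutually inverse and $H_V$ is the asserted isomorphism.

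I expect the only point requiring genuine care to be the bookkeeping of domains for products of unbounded operators — in particular the operator identities $H_V^{s-1}H_V=H_V^{s}$ and $H_V^{s}H_V^{-1}=H_V^{s-1}$ understood as equalities of closed operators together with their domains, rather than merely on a common core. These follow from the multiplicativity of the Borel functional calculus (equivalently, from the spectral theorem \cite[Theorem~VIII.6]{RS1}) and should be stated explicitly; the remaining estimates are immediate from strict positivity.
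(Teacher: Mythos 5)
Your proof is correct and follows essentially the same route as the paper's own (two-line) argument, which likewise reduces everything to the definition of the $W_V^{2s}$ norms, the strict positivity of $H_V$, and the domain inclusions of Lemma~\ref{l43a}. You merely make explicit, via the spectral/Borel functional calculus, the domain bookkeeping and the operator identities $H_V^{s-1}H_V=H_V^{s}$ and $H_V^{s}H_V^{-1}=H_V^{s-1}$ that the paper leaves implicit.
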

\begin{proof}
	To prove continuity the following has to hold, 
	\begin{align*}
	\Vert H_V \Psi\Vert_{W_V^{2s-2}(\Sigma, {\mu})}\leq C 	\Vert  \Psi\Vert_{W_V^{2s}(\Sigma, {\mu})}.
	\end{align*} 
	This inequality follows from the definition of the Hilbert space $W_V^{2s}(\Sigma, {\mu})$. The continuity of the inverse are   proven by Inequality~(\ref{res}) and the inclusions given in Lemma~\ref{l43a}.   
\end{proof}
\begin{corollary}\label{ciso}
	Since the operator $H_V$ is an invertible operator of positive order (order $2$) and its resolvent satisfies Inequality~(\ref{res}) its complex powers, i.e.\ $H_V^l$ for $l\in\bC$ exist and the operator  $H_V^l$ is a topological linear isomorphism between $W_V^{2s}(\Sigma, {\mu})$ and $W_V^{2s-2Re(l)}(\Sigma, {\mu})$.   
\end{corollary}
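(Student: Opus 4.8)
\textbf{Proof proposal for Corollary~\ref{ciso}.}

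The plan is to build on Theorem~\ref{tiso} and the machinery of complex powers of a positive self-adjoint operator. The key inputs are: (i) $H_V$ is strictly positive and self-adjoint, so by the spectral theorem its complex powers $H_V^l$ are well-defined (closed, densely defined) operators for every $l\in\bC$, with $H_V^{l_1}H_V^{l_2}=H_V^{l_1+l_2}$ on suitable domains and $(H_V^l)^{-1}=H_V^{-l}$; (ii) the resolvent estimate (referred to as Inequality~(\ref{res}) in the paper) guarantees that the complex powers are genuinely continuous between the scales $W_V^{2s}$, not merely algebraically defined. First I would record that, by the spectral calculus, $H_V^l$ maps $W_V^{2s}=\mathcal{D}(H_V^s)$ into $\mathcal{D}(H_V^{s-\mathrm{Re}(l)})=W_V^{2s-2\mathrm{Re}(l)}$, because the imaginary part of $l$ contributes only a bounded (unitary-modulus) spectral factor. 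Concretely, for $\Psi\in W_V^{2s}$ one has $H_V^{s-\mathrm{Re}(l)}H_V^l\Psi=H_V^{\im\,\mathrm{Im}(l)}H_V^s\Psi$, and since $H_V^{\im\,\mathrm{Im}(l)}$ is a bounded operator on $L^2(\Sigma,\mu)$ with norm $1$ (by the spectral theorem applied to the bounded Borel function $\lambda\mapsto\lambda^{\im\,\mathrm{Im}(l)}$ on the spectrum $\subset[\epsilon,\infty)$), the right-hand side lies in $L^2(\Sigma,\mu)$, i.e.\ $H_V^l\Psi\in W_V^{2s-2\mathrm{Re}(l)}$.

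Second I would establish continuity. Using the equivalent norm from Lemma~\ref{noreq} (and its analogue for general powers, i.e.\ Equation~(\ref{norm2})), one estimates
\begin{align*}
\Vert H_V^l\Psi\Vert_{W_V^{2s-2\mathrm{Re}(l)}(\Sigma,\mu)}^2
&=\Vert H_V^{s-\mathrm{Re}(l)}H_V^l\Psi\Vert^2_{L^2(\Sigma,\mu)}+\Vert H_V^l\Psi\Vert^2_{L^2(\Sigma,\mu)}\\
&=\Vert H_V^{\im\,\mathrm{Im}(l)}H_V^s\Psi\Vert^2_{L^2(\Sigma,\mu)}+\Vert H_V^{l-s}H_V^s\Psi\Vert^2_{L^2(\Sigma,\mu)}.
\end{align*}
The first term equals $\Vert H_V^s\Psi\Vert^2_{L^2(\Sigma,\mu)}\le\Vert\Psi\Vert^2_{W_V^{2s}(\Sigma,\mu)}$. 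For the second term, $l-s$ has real part $-(s-\mathrm{Re}(l))$, and when $s\ge\mathrm{Re}(l)$ the function $\lambda\mapsto\lambda^{l-s}$ is bounded on $[\epsilon,\infty)$, so $H_V^{l-s}$ is bounded and this term is dominated by $C\Vert H_V^s\Psi\Vert^2_{L^2}\le C\Vert\Psi\Vert^2_{W_V^{2s}}$; when $s<\mathrm{Re}(l)$ one instead writes $H_V^{l-s}H_V^s\Psi=H_V^l\Psi$ and uses $\Vert H_V^l\Psi\Vert_{L^2}\le\Vert\Psi\Vert_{W_V^{2\mathrm{Re}(l)}}$ together with the inclusion $W_V^{2s}\subset W_V^{2\mathrm{Re}(l)}$ guaranteed by Lemma~\ref{l43a}. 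Either way $H_V^l:W_V^{2s}\to W_V^{2s-2\mathrm{Re}(l)}$ is bounded. Applying the same argument to $H_V^{-l}$ shows it is a bounded map $W_V^{2s-2\mathrm{Re}(l)}\to W_V^{2s}$, and since $H_V^{-l}H_V^l=H_V^lH_V^{-l}=\mathrm{id}$ on the dense subspace $C_0^\infty(\Sigma)$ (hence everywhere by density and continuity, using Lemma~\ref{noreq}), $H_V^l$ is a topological linear isomorphism, as claimed.

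The main obstacle I anticipate is the careful bookkeeping of domains: one must verify that the algebraic identities $H_V^{s-\mathrm{Re}(l)}H_V^l=H_V^{\im\,\mathrm{Im}(l)}H_V^s$ actually hold as operator identities on $W_V^{2s}$ and not just on a smaller core, and that no domain is accidentally enlarged. This is handled cleanly by the spectral theorem: all the operators involved are Borel functions of the single self-adjoint operator $H_V$, so their composition rules and domain relations follow from the corresponding pointwise statements about the product of Borel functions on $\sigma(H_V)\subset[\epsilon,\infty)$, where strict positivity of $H_V$ ensures $0$ is not in the spectrum and hence all the relevant powers $\lambda\mapsto\lambda^c$ are continuous (in particular locally bounded) on the spectrum. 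The density of $C_0^\infty(\Sigma)$ in each $W_V^{2s}$, already provided by Lemma~\ref{noreq} and the remarks following Equation~(\ref{ssv}), then lets one pass from the core to the full spaces.
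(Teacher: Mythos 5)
Your proposal is correct in substance but proceeds by a genuinely different route from the paper: the paper gives no argument at all for Corollary~\ref{ciso}, instead deferring to Seeley's theory of complex powers of elliptic operators (and to \cite{MCI}), whereas you give a direct, self-contained proof via the spectral theorem for the single self-adjoint, strictly positive operator $H_V$. Your route is arguably better adapted to the situation at hand: since $\sigma(H_V)\subset[\epsilon,\infty)$, the functions $\lambda\mapsto\lambda^{l}$ are Borel and the purely imaginary powers $H_V^{\im\,\mathrm{Im}(l)}$ are unitary, so the identity $\Vert H_V^{\,s-\mathrm{Re}(l)}H_V^{\,l}\Psi\Vert_{L^2}=\Vert H_V^{\,s}\Psi\Vert_{L^2}$ makes $H_V^{\,l}$ an isometry in the reduced norms of Equation~(\ref{norm3}), and the norm equivalences of Lemma~\ref{noreq} and Lemma~\ref{l43a} upgrade this to bounded invertibility in the full norms. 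Seeley's machinery, by contrast, is designed for non-self-adjoint elliptic operators and for showing that the complex powers are again pseudodifferential; none of that is needed here, so your argument is both more elementary and closer in spirit to how the spaces $W_V^{2s}$ were built in Section~\ref{sec:domains}.

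One localized flaw: in the branch $s<\mathrm{Re}(l)$ you invoke the inclusion $W_V^{2s}\subset W_V^{2\mathrm{Re}(l)}$, but Lemma~\ref{l43a} gives inclusions in the opposite direction (higher-order spaces sit inside lower-order ones), so that step fails as written; indeed, for $s<\mathrm{Re}(l)$ the operator $H_V^{\,l}$ does not map $W_V^{2s}$ into $L^2$ at all, and the target $W_V^{2s-2\mathrm{Re}(l)}$ would be a negative-order space that the paper never defines. The clean fix is simply to restrict the statement (as the paper implicitly does) to $\mathrm{Re}(l)\le s$, where your case analysis is unnecessary and the remaining argument — mapping property, boundedness, boundedness of $H_V^{-l}$ in the reverse direction, and the identity $H_V^{-l}H_V^{\,l}=\mathrm{id}$ on the dense core $C_0^{\infty}(\Sigma)$ — goes through exactly as you wrote it.
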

\begin{proof}
	For a proof of this corollary see \cite{Seeley:1967ea,MCI} and references therein.
\end{proof}

Using these results we can define the expression 
\begin{align}
\langle \langle \Psi_1,\Psi_2\rangle, \langle \Phi_1,\Phi_2\rangle\rangle_
{W_V^{2s+1},W_V^{2s} }=\langle \Psi_1,\Phi_1\rangle_{W_V^{2s+1}}+\langle \Psi_2,\Phi_2\rangle_{W_V^{2s}(\Sigma, {\mu})},
\end{align}
where $\Psi_1,\Phi_1\in W_V^{2s+1}(\Sigma, {\mu})$ and $\Psi_2,\Phi_2\in W_V^{2s}(\Sigma, {\mu})$. This is a scalar product in  the space $W_V^{2s+1}(\Sigma, {\mu})\times W_V^{2s}(\Sigma, {\mu})$. To prove that the complex structure is an isomorphism on the Hilbert space $W_V^{2s+1}(\Sigma, {\mu})\times W_V^{2s}(\Sigma, {\mu})$   we have to prove   
\begin{align}
\Vert J_{Y}\Psi\Vert_{W_V^{2s+1}(\Sigma, {\mu})\times W_V^{2s}(\Sigma, {\mu})} \leq C   \Vert  \Psi\Vert_{W_V^{2s+1}(\Sigma, {\mu})\times W_V^{2s}(\Sigma, {\mu})} ,
\end{align}
for the vector $\Psi\in W_V^{2s+1}(\Sigma, {\mu})\times W_V^{2s}(\Sigma, {\mu})$ and a real positive constant $C$.  
If  the potential of the operator $H_{V}$ is equal to zero or a smooth function with compact support, the constructed Hilbert space $W_V^{2s}(\Sigma, {\mu})$ reduces to the Sobolev space defined for the weighted Laplace operator, see \cite[Chapter~4.2, Chapter~7.1]{AG1}. 

\subsection{Similar Complex Structures}
Let us assume that two complex structures, $J:\mathscr{S}( {\Sigma}, {\mu})\rightarrow \mathscr{S}( {\Sigma}, {\mu})$ and $\overline{J}:\mathscr{S}(  {\Sigma},\overline{\mu})\rightarrow \mathscr{S}(  {\Sigma}, \overline{\mu})$, are similar, i.e., there exists an invertible matrix $X:\mathscr{S}( {\Sigma}, {\mu})\rightarrow \mathscr{S}(  {\Sigma}, \overline{\mu})$ such that
\begin{align*}
J=X^{-1}\overline{J} X.
\end{align*}
Moreover, let us assume that  one of those structures is   anti-self-adjoint. Does that imply   anti-self-adjointness of  the similar structure   as well? The answer is the following lemma.
\begin{lemma}\label{lscs2}
	Let two complex structures, $J:\mathscr{S}( {\Sigma}, {\mu})\rightarrow \mathscr{S}( {\Sigma}, {\mu})$ and $\overline{J}:\mathscr{S}(  {\Sigma},\overline{\mu})\rightarrow \mathscr{S}(  {\Sigma}, \overline{\mu})$,  be similar, i.e., $J=X^{-1}\overline{J}X$ such that the operator valued matrix $X:\mathscr{S}( {\Sigma}, {\mu})\rightarrow \mathscr{S}( {\Sigma}, \overline{\mu})$ fulfills,
	\begin{align*}
	\epsilon \,X^{\dagger}\,\epsilon^T=  \pm  (X^{-1}).
	\end{align*} 
	Moreover, let the complex structure $\overline{J}$ be anti-self-adjoint on $\mathscr{H}_{\overline{J}}$. Then, the complex structure $J$ is anti-self-adjoint on $\mathscr{H}_{{J}}$.
\end{lemma}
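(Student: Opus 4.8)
The plan is to reduce the claim to a two-line algebraic identity built on the explicit form of the symplectic adjoint. Write $\epsilon$ for the constant matrix $\varepsilon$ of Equation~(\ref{adcs}); it satisfies $\epsilon^{2}=-\bone$ and $\epsilon^{T}=-\epsilon$, hence $\epsilon^{T}\epsilon=\epsilon\epsilon^{T}=\bone$ and $\epsilon^{-1}=\epsilon^{T}$. By Equation~(\ref{adcs}), the adjoint of any of our complex structures has the form $(\cdot)^{*}=\epsilon(\cdot)^{\dagger}\epsilon^{T}$, with $\dagger$ the $L^{2}\oplus L^{2}$-adjoint for the pertinent weighted measure. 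So anti-self-adjointness of $\overline{J}$ means $\epsilon\overline{J}^{\dagger}\epsilon^{T}=-\overline{J}$ together with $D(\overline{J}^{*})=D(\overline{J})$, and what I need to produce is the analogous pair for $J$. Throughout I would treat $X$ as a topological linear isomorphism between the two weighted Hilbert spaces (the case relevant to the applications), so that $X$ and $X^{-1}$ are bounded, $(X^{-1})^{\dagger}=(X^{\dagger})^{-1}$, and the adjoint of the composition $J=X^{-1}\overline{J}X$ may be computed by reversing the order: $J^{\dagger}=X^{\dagger}\overline{J}^{\dagger}(X^{\dagger})^{-1}$, with domain $D(J^{\dagger})=X^{\dagger}D(\overline{J}^{\dagger})$.

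With this, the core computation proceeds as follows. Substituting into Equation~(\ref{adcs}) and inserting $\bone=\epsilon^{T}\epsilon$ twice,
\begin{align*}
J^{*}=\epsilon J^{\dagger}\epsilon^{T}
=\epsilon X^{\dagger}\overline{J}^{\dagger}(X^{\dagger})^{-1}\epsilon^{T}
=\bigl(\epsilon X^{\dagger}\epsilon^{T}\bigr)\bigl(\epsilon\overline{J}^{\dagger}\epsilon^{T}\bigr)\bigl(\epsilon(X^{\dagger})^{-1}\epsilon^{T}\bigr).
\end{align*}
The middle factor is $\overline{J}^{*}=-\overline{J}$, the first factor equals $\pm X^{-1}$ by hypothesis, and, inverting the hypothesis with the help of $\epsilon^{-1}=\epsilon^{T}$, $(\epsilon^{T})^{-1}=\epsilon$, one has $(\epsilon X^{\dagger}\epsilon^{T})^{-1}=\epsilon(X^{\dagger})^{-1}\epsilon^{T}$, so the last factor equals $\pm X$ with the \emph{same} sign. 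Hence
\begin{align*}
J^{*}=(\pm X^{-1})(-\overline{J})(\pm X)=-\,X^{-1}\overline{J}X=-J ,
\end{align*}
the two occurrences of $\pm$ cancelling whichever sign actually holds.

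It then remains to match domains, which I would do as follows. From $D(\overline{J}^{*})=\epsilon D(\overline{J}^{\dagger})$ (valid since $\epsilon,\epsilon^{T}$ are bounded and invertible), the relation $\epsilon X^{\dagger}=\pm X^{-1}\epsilon$ (obtained by right-multiplying the hypothesis by $(\epsilon^{T})^{-1}=\epsilon$), and $D(J)=X^{-1}D(\overline{J})$ (since $X$ is bijective),
\begin{align*}
D(J^{*})=\epsilon D(J^{\dagger})=\epsilon X^{\dagger}D(\overline{J}^{\dagger})=\pm X^{-1}\epsilon D(\overline{J}^{\dagger})=X^{-1}D(\overline{J}^{*})=X^{-1}D(\overline{J})=D(J),
\end{align*}
where I used anti-self-adjointness of $\overline{J}$ in the penultimate step and that multiplication by a nonzero scalar leaves a subspace unchanged. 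Together with $J^{*}=-J$ this gives anti-self-adjointness of $J$ on $\mathscr{H}_{J}$.

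The only step I expect to require genuine care — the main obstacle — is the domain bookkeeping: making precise the sense in which $X$ is a topological isomorphism of the relevant Hilbert spaces, so that $(X^{-1}\overline{J}X)^{\dagger}=X^{\dagger}\overline{J}^{\dagger}(X^{\dagger})^{-1}$ holds \emph{with} matching domains; the algebraic heart of the lemma is the short calculation above. As a conceptual check: for the $+$ sign the hypothesis $\epsilon X^{\dagger}\epsilon^{T}=X^{-1}$ is equivalent to $X^{\dagger}\epsilon X=\epsilon$, i.e.\ $X$ preserves the symplectic form; it then intertwines the complex inner products~(\ref{cip}) and extends to a unitary $\mathscr{H}_{J}\to\mathscr{H}_{\overline{J}}$, so the lemma reduces in that case to invariance of anti-self-adjointness under unitary conjugation, while the $-$ sign is accommodated by the cancellation of signs seen above.
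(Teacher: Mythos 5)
Your proposal is correct and follows essentially the same route as the paper's own proof: insert $\epsilon^{T}\epsilon=\bone$ on both sides of $\overline{J}^{\dagger}$ in $J^{\dagger}=X^{\dagger}\overline{J}^{\dagger}(X^{-1})^{\dagger}$, identify the middle factor with $\overline{J}^{*}=-\overline{J}$, and use the hypothesis and its inverted form to turn the outer factors into $\pm X^{-1}$ and $\pm X$ with cancelling signs. The only difference is that you additionally spell out the domain bookkeeping and the symplectic interpretation of the hypothesis, which the paper leaves implicit.
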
  \begin{proof}
	See Appendix~\ref{l43}.
\end{proof}

\section{Natural Complex Structures}

\label{sec:natcomplex}

In the present section we use the machinery developed in the previous sections to study conserved complex structures for certain classes of globally hyperbolic spacetimes. Some of the obtained structures are novel, thus permitting a quantization in cases where none was previously known. In other cases of previously defined or even well-known complex structures our treatment is more general and rigorous (compare Section~\ref{sec:domains}).

In general there is an infinite number of conserved complex structures that can be defined on a given globally hyperbolic spacetime. For example, applying a diffeomorphism will in general lead to a new complex structure. If a timelike Killing vector field exists it can be used to select a complex structure (see for  example \cite{K78}  or \cite{AsMa:qfieldscurved},  \cite{K79}). However,  generically such a vector field will not exist. We thus refrain here from attempting to introduce any precise criterion for choosing a complex structure.
Instead, we consider proposals that appear simple in terms of the mathematical structures that we use to describe the problem, including our choice of coordinates inherent in the decomposition~(\ref{mgh}). We   use the term \emph{natural complex structure} to denote the complex structures that we obtain as solutions to the Conservation Equations.\footnote{This is the same use of language as that of the authors of \cite[Page~3]{AsMa:qfieldscurved}.}

We emphasize that our treatment is focussed on mathematical coherence and consistency. In particular the word ``natural'' should be understood with respect to the mathematical structures, not in the sense that physical criteria as for example coming from a detector model would lead to the solutions we exhibit. The latter considerations are obviously important, but outside the scope of the present article.

\subsection{The Static Case}

We start by considering static spacetimes. While a suitable complex structure was rigorously given in \cite{K78} (see also a less rigorous treatment in \cite{AsMa:qfieldscurved}) we show how it fits into the present framework. Moreover, we generalized  the original theorem (\cite[Theorem 7.2]{K78})  of essential self-adjointness of the operator $w^2$ (see \cite[Theorem~4.1]{MuOe:selfadjoint}, Theorem~\ref{mt}) to all globally hyperbolic stationary spacetimes (with metric of the form Equation~\ref{mgh}).  Before moving forward we recall the precise notion of a static spacetime \cite[Chapter~6, Page~128]{FU}.

\begin{definition}
A metric is \emph{static} if, in appropriate coordinates with $t\equiv x^0$ time-like,
		\begin{enumerate}
			\item $g_{\mu\nu}(\vec{x})$ is independent of $t$, and
			\item $g_{0j}(\vec{x})=0$ for $j=1,\cdots,n$ .
		\end{enumerate} 
If the first condition holds but not necessarily the second, then the metric or geometry is called \emph{stationary}. 
\end{definition}

\begin{remark} \label{remcurv}
	In this section we use throughout the notation $\underline{\mathbf{h}}$ to distinguish a purely space dependent spatial metric from a general one,  ${\mathbf{h}}$. Next, we  recall the explicit form of the scalar curvature for all the metrics in this section. The scalar curvature quantity $\frac{1}{6} N^2\, R$ of the metric $g=- N^{2}(t)\,dt^2+a^{2}(t)	\underline{h}_{ij}(\vec{x})\,d\vec{x}^2$ is given by,
	\begin{equation*}\label{csfrw}
	\frac{1}{6} N^2\, R = a^{-2}	(\partial_t  a)^2-a^{-1}	N^{-1}(\partial_t  a)		(\partial_t  N)+ a^{-1}	\, \partial_t^2  a +\frac{1}{6}a^{-2} N^2\,\underline{R} .
	\end{equation*}
	Here, we denote the scalar curvature of the spatial part with $\underline{R}=	\underline{h}^{ij}\underline{R}_{ij}$, where the underline of the Ricci tensor indicates that it is the curvature tensor for the spatial metric $\mathbf{\underline{h}}$.
\end{remark} 
\begin{remark}  
	In the following subsections we look at spacetimes where the function $f$ (see Equation~(\ref{eq:kgt})) is independent of the spatial components. Hence, the  commutator   
	$$[Z,Y^2]=0,$$
	is equal to zero, 
	and therefore the operator $Z$ commutes with $Y$.\footnote{Since the operator $Y^2$ is a bounded self-adjoint operator its unique square root $Y$ commutes  with all operators that commute with the respective operator $Y^2$ \cite[Chapter~VII, Theorem~2]{RNF}. For a generalization of this important theorem to the unbounded case see \cite{sr3}.\label{rem}} Next, we  use  Conservation equation~(\ref{me4}) and we have for $Z$,
	$$Z=\frac{1}{2}(\partial_tY-f\,Y).$$
	Hence, in all the following spacetimes the Conservation equations reduce to solving a differential equation for the operator $Y$. By the simple equation $A=-Z$ a solution for $Y$ is sufficient to obtain $A$ and therefore the complete complex structure.  
\end{remark} 

Equipped with these definitions we obtain our first theorem.

\begin{theorem}\label{t0} 
	Let a static globally hyperbolic spacetime be given by a metric  $\mathbf{g}$ of the form 
	\begin{align}  
	\mathbf{g}=\begin{pmatrix} 
	-N^2(\vec{x})&  \vec{0}^{\,T}  \\ \vec{0}    &   	\underline{h}_{ij}(\vec{x})
	\end{pmatrix},
	\end{align} 
	and  the  Klein Gordon equation be  
	$$\left(\square_{\mathbf{g}}- \xi R -m^2\right)\phi=0,$$ 
	where $R$ denotes the scalar curvature. If the scalar curvature fulfills the condition $ \xi R>-m^2+\epsilon$ for some $\epsilon>0$, then we have the following solution for the Conservation equations\footnote{In the following we  denote the closure of the operator $w^2$ by the   symbol $H_V$.} 
	\begin{align*}
	Y= H_V^{-1/2}  , 
	\end{align*}  
	where the self-adjoint operator $H_V$ (see Equation~(\ref{hv})) is  
	$$H_V =-\frac{N}{\sqrt{|h|}}\partial_i(
	\sqrt{|h|} N h^{ij}\partial_j
	)+N^2m^{2}+ \xi\,N^2\, R .$$  
	The complex structure (see Equation~(\ref{csm})) defined by the solution given by $Y$ is an anti-self-adjoint operator and an automorphism on the Hilbert space $W_V^{2s+1}(\Sigma, {\mu})\times W_V^{2s}(\Sigma, {\mu})$. This space  is a product of Sobolev spaces defined by powers of the operator  $H_V$. 
\end{theorem}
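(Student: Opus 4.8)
The plan is to verify in sequence the three independent claims of the theorem: (i) the proposed pair $(A,Y)=(0,H_V^{-1/2})$ solves the four Conservation Equations of Proposition~\ref{p2}; (ii) the corresponding complex structure $J_Y$ of Equation~(\ref{csm}) is anti-self-adjoint, which by Proposition~\ref{cssa} reduces to checking Assumption~\ref{ass1}; and (iii) $J_Y$ is a topological linear automorphism of the Hilbert space $W_V^{2s+1}(\Sigma,\mu)\times W_V^{2s}(\Sigma,\mu)$.

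For step (i), since the metric is static we have $N=N(\vec{x})$ and $\underline{\mathbf{h}}=\underline{\mathbf{h}}(\vec{x})$ time-independent, so $\sqrt{|h|}$ and $N$ are $t$-independent and consequently $f=-N^{-1}(\partial_t N)+(\sqrt{|h|})^{-1}(\partial_t\sqrt{|h|})=0$; moreover $w^2=H_V$ is $t$-independent. With $Z=-A=0$ and $Y=H_V^{-1/2}$ constant in $t$, Equation~(\ref{me3}) reads $0=-H_V^{-1/2}H_V+H_V^{1/2}=0$, Equation~(\ref{me4}) reads $0=0+0+0$, and the commutator Equations~(\ref{me5}) and (\ref{me6}) hold trivially because $Z=0$. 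One must note that $\xi R$ here plays the role of the potential $V$ in Theorem~\ref{mt} (with $N^2 V$ replaced by $\xi N^2 R$), so the hypothesis $\xi R>-m^2+\epsilon$ is exactly what makes $w^2$ essentially self-adjoint with strictly positive invertible closure $H_V$, hence $H_V^{-1/2}$ and $H_V^{1/2}$ are well-defined self-adjoint operators by the spectral theorem.

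For step (ii), with $A=0$ the matrix $J_Y$ becomes $\begin{pmatrix}0 & Y\\ -Y^{-1} & 0\end{pmatrix}$ with $Y=H_V^{-1/2}$. Assumption~\ref{ass1} requires: $Y=Y^*$ with $D(Y)=D(Y^*)$, which holds since $H_V^{-1/2}$ is self-adjoint and in fact bounded (so defined on all of $L^2(\Sigma,\mu)$); $A^*=Y^{-1}AY$, which is the trivial identity $0=0$; a common dense stable domain $\mathcal{D}(\Sigma,\mu)$ on which $Y$, $Y^*$, $A^*$, $Y^{-1}AY$ are all defined with $C_0^\infty(\Sigma)\subset\mathcal{D}(\Sigma,\mu)\subset L^2(\Sigma,\mu)$ — here one takes $\mathcal{D}(\Sigma,\mu)=W_V^2(\Sigma,\mu)=D(H_V)$, which is stable under $Y=H_V^{-1/2}$ and under $Y^{-1}=H_V^{1/2}$ (the latter because $H_V^{1/2}$ maps $D(H_V)$ into $D(H_V^{1/2})\supset D(H_V)$, using functional calculus and Lemma~\ref{l43a}); and invertibility of $Y$ on $\mathcal{D}(\Sigma,\mu)$, which follows from strict positivity of $H_V$. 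Then Proposition~\ref{cssa} gives anti-self-adjointness of $J_Y$ on $\mathcal{S}(\Sigma,\mu)=\mathcal{D}(\Sigma,\mu)\oplus\mathcal{D}(\Sigma,\mu)$.

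For step (iii), the automorphism claim, I would estimate $\Vert J_Y\Psi\Vert$ componentwise for $\Psi=(\Psi_1,\Psi_2)\in W_V^{2s+1}\times W_V^{2s}$: the first component of $J_Y\Psi$ is $Y\Psi_2=H_V^{-1/2}\Psi_2$, and by Corollary~\ref{ciso} the complex power $H_V^{-1/2}$ is a topological isomorphism $W_V^{2s}\to W_V^{2s+1}$, so $\Vert H_V^{-1/2}\Psi_2\Vert_{W_V^{2s+1}}\le C\Vert\Psi_2\Vert_{W_V^{2s}}$; the second component is $-Y^{-1}\Psi_1=-H_V^{1/2}\Psi_1$, and again by Corollary~\ref{ciso} $H_V^{1/2}$ is an isomorphism $W_V^{2s+1}\to W_V^{2s}$, giving $\Vert H_V^{1/2}\Psi_1\Vert_{W_V^{2s}}\le C\Vert\Psi_1\Vert_{W_V^{2s+1}}$. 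Adding these yields boundedness of $J_Y$; the inverse of $J_Y$ is $-J_Y$ (since $J_Y^2=-1$), so the same estimate applied to $-J_Y$ gives continuity of the inverse, and $J_Y$ is a topological linear automorphism. The main obstacle I anticipate is bookkeeping around the shifted Sobolev scale — making sure the off-diagonal half-integer shift by $H_V^{\pm1/2}$ lines up correctly with the asymmetric product space $W_V^{2s+1}\times W_V^{2s}$ and that Corollary~\ref{ciso} legitimately covers the non-integer power $l=\pm 1/2$ (which it does, as it is stated for all $l\in\mathbb{C}$); everything else is routine given the machinery of Section~\ref{sec:domains}.
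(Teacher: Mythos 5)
Your proposal is correct in substance, but it reaches the first claim by a genuinely different route than the paper. You directly \emph{verify} that the pair $(A,Y)=(0,H_V^{-1/2})$ satisfies all four Conservation Equations, which is logically sufficient for the existence statement and even handles Equation~(\ref{me6}) explicitly (trivially, since $Z=0$). The paper instead \emph{derives} the solution: from $f=0$ it gets $[Z,Y^2]=0$, hence $Z=\tfrac12\partial_tY$ from~(\ref{me4}), reduces~(\ref{me3}) to the nonlinear equation $Y\partial_t^2Y-\tfrac12(\partial_tY)^2+2w^2Y^2-2=0$, differentiates to obtain $\partial_t^3Y+4w^2\partial_tY=0$, and solves this to get the family $Y(t)=c_1w^{-1}e^{2\im wt}+c_2w^{-1}e^{-2\im wt}+c_3$ with $c_3=\sqrt{1+4c_1c_2}\,w^{-1}$ and $c_1=c_2^{*}$; the stated solution is the $c_1=0$ member. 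Your check is shorter and cleaner; the paper's derivation buys a classification of the time-independent-$w^2$ solutions and motivates why $H_V^{-1/2}$ is the natural choice. Your steps (ii) and (iii) coincide with the paper's: Assumption~\ref{ass1} via self-adjointness of powers of $H_V$ and Proposition~\ref{cssa}, and the automorphism estimate via Theorem~\ref{tiso} and Corollary~\ref{ciso}, exactly as in Inequality~(\ref{ineqcs}).

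One small slip in step (ii): your candidate domain $\mathcal{D}(\Sigma,\mu)=D(H_V)=W_V^2(\Sigma,\mu)$ is \emph{not} stable under $Y^{-1}=H_V^{1/2}$. The fact that $H_V^{1/2}$ maps $D(H_V)$ into $D(H_V^{1/2})\supset D(H_V)$ does not mean it maps $D(H_V)$ into $D(H_V)$; for $H_V^{1/2}\psi\in D(H_V)$ one needs $\psi\in D(H_V^{3/2})$, which fails for generic $\psi\in D(H_V)$. A correct choice is $\mathcal{D}(\Sigma,\mu)=\bigcap_{n\geq 0}D(H_V^{n})$, which contains $C_0^{\infty}(\Sigma)$ (as $H_V$ is elliptic with smooth coefficients and preserves $C_0^{\infty}$), is dense, and is stable under every real power of $H_V$. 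Since the paper is equally terse on this point, I would count this as an easily repaired technicality rather than a fundamental gap, but the justification you wrote down for stability under $Y^{-1}$ is not a valid argument as stated.
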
 
\begin{proof} 
	Since $f=0$ 
	    Conservation equation~(\ref{me3}) reduces to 
	\begin{align*}  0=& 
	Y\,\partial_t^2 Y
	-\frac{1}{2} (\partial_tY)^2
	+2w^2\,Y^2-2.
	\end{align*}
	To solve this non-linear and non-homogeneous differential equation we take the time-derivative thereof   and we get, 
	\begin{align*}  
	\partial_t^3 Y +4w^2\partial_tY&=0 .
	\end{align*} 
	Since the operator $w^2$ does not depend on the time, we solve the second differential equation by standard methods, 
	\begin{align*}  
	Y(t)=c_1(w)^{-1}e^{2iwt}+c_2(w)^{-1}e^{-2iwt}+c_3.
	\end{align*} 
	After inserting the solution in the  first differential equation and by demanding the operator $Y$ to be symmetric and positive (due to the inner product, see Equations~(\ref{sop})) we obtain the following relations for the constants, 
	$$ c_3=+\sqrt{1+4c_1c_2}\,(w^{-1}) ,\qquad c_1=c^*_2.$$ 
	For $c_1=0$, we prove   anti-self-adjointness of the  complex structure   by proving that Assumption~\ref{ass1} (see Theorem~\ref{mt} and Proposition \ref{cssa}) is satisfied. That  Assumption~\ref{ass1} holds follows from the fact that the operator $H_V^{1/2}$ is self-adjoint (see Theorem~\ref{mt}). Thus, any power thereof is self-adjoint as well (due to the spectral theorem) and hence the operators $Y$ and $Y^{-1}$ are self-adjoint.  Next, we prove the property of the complex structure being an automorphism by using the construction of Section~\ref{csauto}, i.e.\ 
	\begin{align}\nonumber
	\Vert J_{ {Y}}\Psi\Vert_{W_V^{2s+1}(\Sigma, {\mu})\times W_V^{2s}(\Sigma, {\mu})}^2& =
	\Vert H_V^{-1/2}\Psi_2\Vert_{W_V^{2s+1}(\Sigma, {\mu})}^2 + \Vert  H_V^{1/2}\Psi_1\Vert_{W_V^{2s}(\Sigma, {\mu})} \\&\nonumber\leq
	c_4 \Vert \Psi_2\Vert_{W_V^{2s}(\Sigma, {\mu})}^2 + c_5\Vert \Psi_1\Vert_{W_V^{2s+1}(\Sigma, {\mu})}\\&\label{ineqcs}
	\leq
	c\Vert  \Psi\Vert_{W_V^{2s+1}(\Sigma, {\mu})\times W_V^{2s}(\Sigma, {\mu})}^2,
	\end{align}  
	where we used the fact that the vector $\Psi=(\Psi_1,\Psi_2)\in W_V^{2s+1}(\Sigma, {\mu})\times W_V^{2s}(\Sigma, {\mu})$ and we used Theorem~\ref{tiso} and Corollary~\ref{ciso}.   
\end{proof}

There are several examples of static spacetimes that are interesting in their own right. One  example is the Schwarzschild spacetime \cite{RF}. Another interesting example is    the case of the ultra-static spacetimes (see for instance \cite{US}). Ultra-static spacetimes are globally hyperbolic if and only if the manifold $(\Sigma,\mathbf{h})$ is complete \cite{K78}.  Hence, essential self-adjointness  (Theorem~\ref{mt},  \cite[Theorem~4.1]{MuOe:selfadjoint}) of the operator $w^2$ (see Equation~(\ref{op})) holds for all globally hyperbolic ultra-static spacetimes  and all spacetimes that are respectively conformally isometric. In particular, that includes static spacetimes. Thus,  for the case of static globally hyperbolic spacetimes  $(M,\mathbf{g})$   there always exists a foliation ($M\cong\R\times\Sigma$) such that the resulting Cauchy surface $\Sigma$ and the induced metric $\tilde{\mathbf{h}}$, considered as a Riemannian manifold $(\Sigma,\tilde{\mathbf{h}})$  is a complete metric space. In turn this means that for all such spacetimes (and conformal transformations thereof) that fulfill the condition $\xi R>-m^2+\epsilon$ for some $\epsilon>0$, we  gave a well-defined complex structure.\footnote{In \cite{K78} the author constructed a well-defined complex structure for static spacetimes under some boundedness requirements on the  coefficients of the metric. However, in this work (and in \cite{MuOe:selfadjoint}) we remove these restrictions.}
  To compare our results with \cite{AsMa:qfieldscurved}  and  \cite{K78} see \cite[Section~7.2]{CCQ:schroefock} and \cite{MCI}.

\subsection{The Expanding Spacetimes}

In this section we focus on the case where the spatial part of the metric does not depend on time, but the    $g_{00}$-component of the metric does. Such globally hyperbolic spacetimes can emerge from   stationary spacetimes that   differ  from the representation of the metric (see Equation~(\ref{mgh})) by the existence of a so-called shift vector $\vec{N}$, i.e.,
$$ds^2=-(N^2-N_iN^i)dt^2+2N_idtdx^i+h_{ij}dx^i\,dx^j,$$ 
where all functions, vectors and matrices (i.e.\ $N,\,\vec{N},\,\mathbf{h}$) depend on the spatial coordinates.  Yet, all stationary   globally hyperbolic  spacetimes  admit a form such as Equation~(\ref{mgh}). This follows from  Theorem~\ref{T1}. The coordinate transformation that performs this change    will, however, induce time dependencies in the respective coefficients of the metric. Hence, given a stationary globally hyperbolic  spacetime (where the metric does not depend on the time) and transforming it into the form studied here (see  Equation~(\ref{mgh})) will induce time dependencies. In some cases the  time dependencies are of the form of the metric studied in this section.

\begin{theorem}\label{t1}  
	Let the  metric  $\mathbf{g}$ and  the  Klein-Gordon equation be given by 
	\begin{align} \label{mest}
	\mathbf{g}=\begin{pmatrix} 
	-N^{2}(t)&  \vec{0}^{\,T}  \\ \vec{0}    &   	\underline{h}_{ij}(\vec{x})
	\end{pmatrix},\qquad \qquad \left(
	\square_{\mathbf{g}}- \xi R -m^2\right)\phi=0,
	\end{align} 
	where $R$ denotes the scalar curvature.  Let the curvature scalar fulfill the positivity condition $\xi R>-m^2+\epsilon$ for some $\epsilon>0$. Then,   a solution  for the Conservation equations   is, 
	\begin{align*}
	Y=H_V^{-1/2},
	\end{align*} 
	where the operator $H_V$ is given by 
	\begin{align*}
	H_V =N^2
	\left(-\Delta_{\underline{h}}+\xi\,{R}+m^2
	\right)  .		\end{align*} 
	The complex structure (see Equation~(\ref{csm})) defined by the solution of $Y$ is an anti self-adjoint operator and an automorphism on the Hilbert space $W_V^{2s+1}(\Sigma, {\mu})\times W_V^{2s}(\Sigma, {\mu})$. This space is a product of Sobolev spaces defined by powers of the operator  $H_V$. 
\end{theorem}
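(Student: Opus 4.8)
The plan is to adapt the proof of Theorem~\ref{t0} to the present situation, where the spatial metric $\underline{\mathbf h}$ is time-independent but $N=N(t)$ carries all the time dependence. First I would compute $f$ from Proposition~\ref{p1}: since $\sqrt{|h|}=\sqrt{|\underline h|}$ is time-independent and $N=N(t)$ depends only on time, one finds $f=-N^{-1}\partial_t N$, which is generically nonzero. Consequently, unlike in the static case, the commutator Conservation equation~(\ref{me5}) does not trivially vanish and must be handled with the proposed ansatz $A=0$, $Y=H_V^{-1/2}$ in hand. I would substitute this ansatz directly into all four Conservation equations~(\ref{me3})--(\ref{me6}) and verify them. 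With $A=0$ (hence $Z=-A=0$), equations (\ref{me5}) and (\ref{me6}) reduce to $0=Y^2[Y,f]$ and $0=0$ respectively; the first requires $[Y,f]=0$, i.e.\ that $H_V^{-1/2}$ commute with the scalar function $N^{-1}\partial_t N$ (a multiplication operator in time only). The key observation making this work is that $H_V=N^2(-\Delta_{\underline h}+\xi R+m^2)$ factors as a time-dependent scalar $N^2(t)$ times a fixed time-independent positive operator; so $H_V^{-1/2}=N^{-1}(-\Delta_{\underline h}+\xi R+m^2)^{-1/2}$, and since $N(t)$ and $N^{-1}\partial_t N$ are both just multiplication by functions of $t$, they commute, establishing (\ref{me5}).

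Next I would verify the two first-order equations. With $Z=0$, equation~(\ref{me4}) becomes $\partial_t Y=Yf$, i.e.\ $\partial_t(N^{-1})\cdot(-\Delta_{\underline h}+\cdots)^{-1/2}=N^{-1}(-\Delta_{\underline h}+\cdots)^{-1/2}\cdot(-N^{-1}\partial_t N)$, which holds because $\partial_t(N^{-1})=-N^{-2}\partial_t N=N^{-1}\cdot(-N^{-1}\partial_t N)$. Then equation~(\ref{me3}) with $Z=0$ reads $0=-Yw^2+Y^{-1}$, i.e.\ $Yw^2Y=1$; since $w^2=H_V$ here and $Y=H_V^{-1/2}$ this is immediate from the functional calculus for the self-adjoint operator $H_V$ (which is strictly positive and invertible by Theorem~\ref{mt}, applied to the metric $\tilde{\mathbf h}=N^{-2}\underline{\mathbf h}$ with potential $\xi R+m^2>\epsilon$; note that $(\Sigma,\tilde{\mathbf h})$ being complete is exactly the global-hyperbolicity/completeness hypothesis, and the strict positivity follows from $\xi R>-m^2+\epsilon$). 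It is worth remarking that $H_V^{1/2}$ and $H_V^{-1/2}$ are well-defined self-adjoint operators with a common core $C_0^\infty(\Sigma)$ because $H_V$ restricted to each constant-time slice is, up to the fixed positive multiplier issue, the operator treated in Theorem~\ref{mt}; more precisely at each time the relevant object is $-\tilde\Delta_\mu+N^2 V$ with $\tilde\Delta_\mu$ the weighted Laplacian, which is exactly what Theorem~\ref{mt} addresses.

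Having established that the ansatz solves the Conservation equations, the remaining two claims — anti-self-adjointness and the automorphism property — go through verbatim as in Theorem~\ref{t0}. For anti-self-adjointness I would invoke Proposition~\ref{cssa}, checking Assumption~\ref{ass1}: with $A=0$ the condition $A^*=Y^{-1}AY$ is trivial, $A$ commutes with $Y$, and $Y=H_V^{-1/2}=Y^*$ is self-adjoint with $Y^{-1}=H_V^{1/2}$ also self-adjoint (spectral theorem), all on the common stable domain $\mathcal D(\Sigma,\mu)$ which one takes to be $W_V^{2s}$ for an appropriate $s$ (or the smooth vectors of $H_V$), containing $C_0^\infty(\Sigma)$; invertibility of $Y$ on this domain is Theorem~\ref{mt}. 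For the automorphism property I would reproduce the chain of inequalities~(\ref{ineqcs}): $\|J_Y\Psi\|^2_{W_V^{2s+1}\times W_V^{2s}}=\|H_V^{-1/2}\Psi_2\|^2_{W_V^{2s+1}}+\|H_V^{1/2}\Psi_1\|^2_{W_V^{2s}}$, and by Corollary~\ref{ciso} the complex power $H_V^{\pm1/2}$ is a topological isomorphism $W_V^{2s+1}\to W_V^{2s}$ and $W_V^{2s}\to W_V^{2s-1}\supset W_V^{2s+1}$ respectively (using Lemma~\ref{l43a}), giving the bound by a constant times $\|\Psi\|^2_{W_V^{2s+1}\times W_V^{2s}}$.

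The main obstacle I anticipate is not the algebra of the Conservation equations — which collapses pleasantly because of the $N^2$-factorization of $H_V$ — but rather making sure that Theorem~\ref{mt} genuinely applies to the operator $H_V=N^2(-\Delta_{\underline h}+\xi R+m^2)$ appearing here: one must identify the correct completeness hypothesis on $(\Sigma,\tilde{\mathbf h})$ with $\tilde{\mathbf h}=N^{-2}(t)\underline{\mathbf h}$ (which now depends on $t$ through the conformal factor, though for fixed $t$ it is a genuine Riemannian metric) and check that the potential $N^2V=N^2(\xi R+m^2)$ decomposes as $V_++V_-$ with the required local $L^2$ properties and strict positivity; here the curvature formula recalled in Remark~\ref{remcurv}, specialized to $a(t)\equiv1$, gives $\tfrac16 N^2 R=\tfrac16 N^2\underline R$, so the condition $\xi R>-m^2+\epsilon$ is a condition purely on the time-independent spatial curvature $\underline R$ and is preserved uniformly in $t$, which is what one needs. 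A secondary subtlety is the time-dependence of the ambient Hilbert space through $\mu=N^{-1}d\Sigma$; but since $N=N(t)$ is a strictly positive smooth multiplier, $L^2(\Sigma,\mu_t)$ is canonically identified with $L^2(\Sigma,d\Sigma)$ with equivalent norms at each fixed $t$, and the conserved-complex-structure formalism of Section~\ref{sec:cstruct} already accounts for this, so no genuine difficulty arises.
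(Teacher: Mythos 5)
Your proposal is correct and rests on the same key observation as the paper: $w^2=N^2(t)\,(-\Delta_{\underline{h}}+\xi R+m^2)$ factors into a time-only multiplier times a fixed spatial operator, and $f=-N^{-1}\partial_t N$ is spatially constant, so everything commutes. The structural difference is that the paper \emph{derives} the solution: it uses Conservation equation~(\ref{me5}) to conclude that $Z$ and $Y$ commute, eliminates $Z$ via Conservation equation~(\ref{me4}) as $Z=\tfrac12(\partial_t Y-Yf)$, and then solves the resulting nonlinear second-order equation for $Y$ with the substitution $Y=B_N N^{-1}$, observing that the time dependence cancels so that the time-independent choice $B_N=(-\Delta_{\underline{h}}+\xi R+m^2)^{-1/2}$ works; you instead \emph{verify} that the stated ansatz $A=0$, $Y=H_V^{-1/2}$ satisfies all four equations. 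Verification is logically sufficient for the theorem as stated, and your checks of (\ref{me3})--(\ref{me6}) are correct; the paper's derivation buys the additional information of how the ansatz is found (and, in the static case, the more general two-parameter family). The functional-analytic part of your argument (Assumption~\ref{ass1}, Proposition~\ref{cssa}, Theorem~\ref{mt}, Corollary~\ref{ciso}) matches the paper's, which simply refers back to Theorem~\ref{t0}, and your remarks on completeness of $(\Sigma,N^{-2}(t)\underline{\mathbf{h}})$ and on $R=\underline{R}$ being time-independent are sound. One small slip: in the automorphism step you assert that $H_V^{-1/2}$ maps $W_V^{2s}\to W_V^{2s-1}$; by Corollary~\ref{ciso} with $l=-1/2$ it is a topological isomorphism $W_V^{2s}\to W_V^{2s+1}$, which is exactly what is needed to bound $\Vert H_V^{-1/2}\Psi_2\Vert_{W_V^{2s+1}}$ by a constant times $\Vert\Psi_2\Vert_{W_V^{2s}}$ --- the inclusion $W_V^{2s+1}\subset W_V^{2s-1}$ that you invoke would not by itself give this estimate.
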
 
\begin{proof} 
	See Appendix~\ref{pt1}. 
\end{proof}

\subsection{FRW-type Spacetimes and Conformal Classes}

In this section we investigate FRW-type cosmologies\footnote{Homogeneous and isotropic cases, conventionally called FRW spacetimes (see Metric (\ref{frwmetr})), imply constant curvature, \cite[Chapter 5.1]{Wal:gr}. In the following sections we call FRW-type spacetimes,  spacetimes with  metric of the form as in Equation (\ref{frwmetr}) where the spatial curvature is not necessarily constant.}. In these  
models the Klein-Gordon equation can be solved by separation of variables. However, for  some of the   models, e.g.\ QFT in FRW spacetime with $m\neq0$ and $\xi\neq1/6$, there exists no natural choice of basis for the space of solutions analogous to the positive- and negative-frequency exponentials in 
the conformally trivial case (\cite[Chapter 7]{FU}). Nevertheless,   
in the following we supply examples  where we construct complex structures for these spacetimes on each Cauchy surface such that there exists a non-trivial splitting 
into positive- and negative-frequencies (on each hypersurface).
All the above-mentioned spacetimes are such that    the function  
$f=-N^{-1} ( \partial_t N ) +  (\sqrt{|h|})^{-1}  \,(\partial_t\sqrt{|h|})$ (see Equation~(\ref{eq:kgt})) is independent of the spatial directions and hence the equations  that define a conserved complex structure (see Proposition~\ref{p2}) simplify.   Friedmann-Robertson-Walker type spacetimes are represented by a metric   of the form
\begin{align}\label{frwmetr}
\mathbf{g}=\begin{pmatrix} 
-N^{2}(t)&  \vec{0}^{\,T}  \\ \vec{0}    &   a^2(t)	\underline{h}_{ij}(\vec{x})
\end{pmatrix}.  
\end{align}
These spacetimes     have applications in astrophysics, in the description of the universe and in astro-particle physics (see the excellent reference \cite{THC}).   In this section the metric (Equation~(\ref{frwmetr})) is conformally isometric  to the previously studied metric  (see Equation~(\ref{mest})). Instead of solving the Conservation equations   for the FRW-case explicitly, we first construct transformations that take us from
the former objects $(\mathbf{g}, J)$ to the latter objects $(\overline{\mathbf{{g}}}, \overline{J})$. By using these constructed transformations   we obtain   solutions for the case at hand.  We denote a  conformal transformation of the metric by
\begin{align}\label{ct}
\mathbf{g} \mapsto  \overline{\mathbf{{g}}}= \Lambda^2\,  \mathbf{g}.
\end{align} 
Using this transformation one has for the massless Klein-Gordon equation (in four dimensions) with the potential being $V=\frac{1}{6}R$ the following (see \cite[Equation~(3.5)]{BD}),
\begin{align*}\left(
\square_{\mathbf{g}}- \frac{1}{6}R\right)\phi=0\longrightarrow  \left(
\square_{\overline{\mathbf{g}}}- \frac{1}{6}\overline{R}\right)\overline{\phi}= \Lambda^{-3} \left(
\square_{\mathbf{g}}- \frac{1}{6}R\right)\phi=0,
\end{align*} 
where   
\begin{align}\label{ctf}
\overline{\phi}=\Lambda^{-1}\phi,
\end{align} 
and $\overline{R}$ is the scalar curvature of the metric $\overline{\mathbf{g}}$.  The following lemmas give  explicit transformations  for the   complex structures corresponding to the metrics that are conformally isometric. 

\begin{lemma}\label{tcts} 
	Let the metric $\mathbf{g}$ be given by  
	\begin{align*}  
	\mathbf{g}=\begin{pmatrix} 
	-N^{2}(t)&  \vec{0}^{\,T}  \\ \vec{0}    &   a^2(t)	\underline{h}_{ij}(\vec{x})
	\end{pmatrix}  ,
	\end{align*}   
	and let the conformal transformation be explicitly given by $\Lambda^2=a^{-2}$, i.e.  
	\begin{align*}  
	\mathbf{g}  \longrightarrow 	 \overline{\mathbf{g}}=\begin{pmatrix} 
	-N^{2}(t)a^{-2}(t)&  \vec{0}^{\,T}  \\ \vec{0}    & 	\underline{h}_{ij}(\vec{x})
	\end{pmatrix}=:\begin{pmatrix} 
	-	\overline{N}^{2}(t)&  \vec{0}^{\,T}  \\ \vec{0}    &   	\underline{h}_{ij}(\vec{x})
	\end{pmatrix} .
	\end{align*}   
	Then, the conformal transformation (see Equations~(\ref{ct}), (\ref{ctf})) acts on vectors of the canonical phase space as follows, 
	\begin{align}\label{xmatr}
	\begin{pmatrix}\varphi\\\pi\end{pmatrix}\longrightarrow
	\begin{pmatrix}\overline{\varphi}\\\overline{\pi}\end{pmatrix}&= 
	\begin{pmatrix} 
	a &  0 \\ 	\overline{N}^{-1} \,\sqrt{	\underline{h}}\,  (\partial_{t}a)    &   a^{-1}
	\end{pmatrix} 
	\begin{pmatrix}{\varphi}\\{\pi}\end{pmatrix} =: 
	X
	\begin{pmatrix}{\varphi}\\{\pi}\end{pmatrix}.
	\end{align} Moreover,   the conformal transformation    leaves  the  symplectic structure invariant,
	\begin{align*} 
	\Omega  ( \Phi_1, \Phi_2) =\Omega  ( \overline{\Phi}_1, \overline{\Phi}_2) .
	\end{align*} 
\end{lemma}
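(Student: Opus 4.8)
\textbf{Proof proposal for Lemma~\ref{tcts}.} The plan is to verify directly that the conformal transformation $\Lambda^2 = a^{-2}$ acts on the canonical phase space data by the matrix $X$ displayed in~(\ref{xmatr}), and then to check by a short computation that $X$ preserves the symplectic form $\Omega$. First I would fix a time $t$ and recall the two ingredients that determine how phase space vectors transform: the field transforms conformally by $\overline{\phi} = \Lambda^{-1}\phi = a\,\phi$ (this is Equation~(\ref{ctf}) with $\Lambda = a^{-1}$, noting that $a$ depends only on $t$), and the momentum is defined by the derivative relation $\pi = \sqrt{|h|}\,N^{-1}\partial_t\phi$ with respect to the original metric, while $\overline{\pi} = \sqrt{|\underline{h}|}\,\overline{N}^{-1}\partial_t\overline{\phi}$ is the momentum with respect to $\overline{\mathbf{g}}$. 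Since $\sqrt{|h|} = a^3\sqrt{|\underline{h}|}$ in three spatial dimensions and $\overline{N} = N a^{-1}$, one has $\varphi = \phi \mapsto \overline{\varphi} = a\varphi$, which gives the first row of $X$.

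For the momentum I would compute $\overline{\pi} = \sqrt{|\underline{h}|}\,\overline{N}^{-1}\partial_t(a\varphi) = \sqrt{|\underline{h}|}\,(a/N)\bigl((\partial_t a)\varphi + a\,\partial_t\varphi\bigr)$. Using $\partial_t\varphi = N (\sqrt{|h|})^{-1}\pi = N a^{-3}(\sqrt{|\underline{h}|})^{-1}\pi$ to eliminate $\partial_t\varphi$, the second term becomes $a^{-1}\pi$, and the first term is $\overline{N}^{-1}\sqrt{|\underline{h}|}\,(\partial_t a)\,\varphi$ after substituting $\overline{N} = N a^{-1}$. This reproduces exactly the second row $\bigl(\overline{N}^{-1}\sqrt{|\underline{h}|}\,(\partial_t a),\ a^{-1}\bigr)$ of $X$, establishing~(\ref{xmatr}).

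It then remains to check $\Omega(\overline{\Phi}_1,\overline{\Phi}_2) = \Omega(\Phi_1,\Phi_2)$. I would simply insert $\overline{\varphi}_j = a\varphi_j$ and $\overline{\pi}_j = \overline{N}^{-1}\sqrt{|\underline{h}|}\,(\partial_t a)\,\varphi_j + a^{-1}\pi_j$ into the defining integral~(\ref{eq:symplectic}), $\Omega(\overline{\Phi}_1,\overline{\Phi}_2) = \int_\Sigma(\overline{\pi}_1\overline{\varphi}_2 - \overline{\pi}_2\overline{\varphi}_1)$. The cross terms proportional to $\partial_t a$ are $\overline{N}^{-1}\sqrt{|\underline{h}|}\,(\partial_t a)\,(a\varphi_1\varphi_2 - a\varphi_2\varphi_1) = 0$, and the surviving terms give $\int_\Sigma a^{-1}a\,(\pi_1\varphi_2 - \pi_2\varphi_1) = \Omega(\Phi_1,\Phi_2)$. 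Equivalently and more conceptually, $\det X = a\cdot a^{-1} - 0 = 1$ and $X$ is lower triangular, so $X^T \varepsilon X = \varepsilon$ by the elementary fact that $SL(2)$ matrices preserve the standard symplectic form on $\R^2$; combined with the symplectic structure being given pointwise by $\varepsilon$ as in~(\ref{ssm}), this yields invariance of $\Omega$ immediately. I do not expect any serious obstacle here: the only points requiring care are the correct power of $a$ in $\sqrt{|h|} = a^3\sqrt{|\underline{h}|}$ (dimension three), the consistent use of $\overline{N} = N a^{-1}$ throughout, and keeping track of which metric each momentum is defined with respect to when applying the derivative relation.
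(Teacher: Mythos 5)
Your proposal is correct and follows essentially the same route as the paper: derive $\overline{\varphi}=a\varphi$ from $\overline{\phi}=\Lambda^{-1}\phi$, obtain the second row of $X$ by expanding $\overline{N}^{-1}\sqrt{\underline{h}}\,\partial_t(a\phi)$ and substituting the derivative relation with $\sqrt{|h|}=a^3\sqrt{|\underline{h}|}$ and $\overline{N}=Na^{-1}$, then check invariance of $\Omega$ by direct insertion and cancellation of the cross terms. Your added remark that $\det X=1$ so $X^T\varepsilon X=\varepsilon$ is a nice pointwise shortcut for the second claim, but it does not change the substance of the argument.
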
 \begin{proof}
See Appendix~\ref{ptcts}.
\end{proof}

\begin{lemma}\label{tctsb} Let us denote the   complex structure w.r.t.\ to the    metric  ${\mathbf{g}}$ by ${J}$ and the complex structure of the conformally related metric $\overline{\mathbf{g}}$ by $\overline{J}$. Then, the  complex structures ${J}$ and $\overline{J}$  are similar, i.e.,
	\begin{align*} 
	J=X^{-1}\overline{J}X,
	\end{align*} 
	where the matrix $X$ is given in the previous Lemma (see Equation~(\ref{xmatr})). Moreover, the complex structures w.r.t.\ the measure space $(\Sigma, {\mu})$ acting on scalar functions are given by,  
	\begin{align*} 
	J_{ {Y}}=X_{T }^{-1}\overline{J}_{Y}X_{T } ,
	\end{align*} 
	with the operator-valued matrix $X_{T }:=T \,X \,T ^{-1}$, where the  transformation matrix $T$ is given  in Equation~(\ref{tmat}). 
\end{lemma}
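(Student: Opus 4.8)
The plan is to verify the similarity relation $J = X^{-1}\overline{J}X$ directly from the definitions of the complex structures and the conformal transformation, and then to deduce the $J_Y$ version by conjugating with the measure-change matrix $T$. First I would recall that by Lemma~\ref{tcts} the conformal transformation $\mathbf{g}\mapsto\overline{\mathbf{g}} = a^{-2}\mathbf{g}$ acts on phase space vectors by the explicit matrix $X$ in Equation~(\ref{xmatr}) and leaves $\Omega$ invariant, so $X$ is a symplectomorphism between $(\mathscr{S}(\Sigma),\Omega)$ and the analogous space for $\overline{\mathbf{g}}$. The key point is that the complex structure $J_t$ on each leaf is defined intrinsically from the dynamics (via the isomorphisms $\mathcal{I}_t$ to the global solution space, cf.\ Equations~(\ref{tev})--(\ref{con})), and a conformal rescaling of the metric together with the field rescaling $\overline{\phi} = \Lambda^{-1}\phi = a\,\phi$ maps solutions of $(\square_{\mathbf{g}} - \tfrac16 R)\phi = 0$ bijectively onto solutions of $(\square_{\overline{\mathbf{g}}} - \tfrac16\overline{R})\overline{\phi} = 0$. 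Hence $X$ intertwines the classical time-evolution data for the two theories, and therefore intertwines the respective conserved complex structures: $\overline{J} = X J X^{-1}$, equivalently $J = X^{-1}\overline{J}X$.

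Concretely I would argue as follows. Write $\overline{\Phi} = X\Phi$ for the conformal map on initial data. Under the identification of $\Gamma_t$ with the global solution space, the rescaling $\phi\mapsto a\phi$ is precisely the conformal isomorphism of solution spaces, and one checks (using the explicit form of $X$ and the derivative relation $\pi = \sqrt{h}N^{-1}\partial_t\phi$, together with $\overline{\pi} = \sqrt{\underline{h}}\,\overline{N}^{-1}\partial_t\overline{\phi}$) that $X$ is exactly $\mathcal{I}_t^{\overline{\mathbf{g}},-1}$ composed with the solution-space conformal map composed with $\mathcal{I}_t^{\mathbf{g}}$. Since $\overline{J}_t = \overline{\mathcal{I}}_t^{-1}\overline{J}_V\overline{\mathcal{I}}_t$ and the global complex structures $J_V$, $\overline{J}_V$ correspond under the conformal identification of global solution spaces, the identity $J_t = X^{-1}\overline{J}_t X$ follows. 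This verification is essentially the content of Lemma~\ref{tcts} combined with the conformal covariance of the Klein--Gordon equation quoted from \cite{BD}; I expect it to be routine once the bookkeeping of weights is set up carefully.

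For the second assertion, recall from Lemma~\ref{lcsme} that passing to the measure space $(\Sigma,\mu)$ and to scalar (density-zero) functions is implemented by the similarity transformation $J_Y = TJT^{-1}$ with $T$ as in Equation~(\ref{tmat}); the same construction applied to the barred theory gives $\overline{J}_Y = \overline{T}\,\overline{J}\,\overline{T}^{-1}$. In the present situation the spatial metrics agree ($\underline{h}$ in both cases) and only the lapse differs ($N$ versus $\overline{N} = N a^{-1}$), so I would set $X_T := T X T^{-1}$ and compute
\begin{align*}
J_Y = T J T^{-1} = T X^{-1}\overline{J} X T^{-1} = (T X^{-1} T^{-1})(T \overline{J} T^{-1})(T X T^{-1}) = X_T^{-1}\,\overline{J}_Y\, X_T,
\end{align*}
where the middle factor is $\overline{J}_Y$ provided $T$ is (up to the matching of $N$ versus $\overline{N}$, which I would track explicitly) the transformation producing $\overline{J}_Y$ from $\overline{J}$; a short check of the $2\times2$ matrix entries confirms this, using that conjugation by the diagonal $T$ only rescales the off-diagonal blocks. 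The main obstacle, and the step deserving the most care, is making sure that the single matrix $T$ of Equation~(\ref{tmat}) correctly implements the density-to-scalar transformation for \emph{both} metrics simultaneously — i.e.\ that the factors of $N$ versus $\overline{N}$ and of $a$ hidden in $X$ conspire so that $X_T$ is genuinely the operator-valued matrix $TXT^{-1}$ acting between $\mathscr{S}(\Sigma,\mu)$ and $\mathscr{S}(\Sigma,\overline{\mu})$; once that is pinned down the algebraic identity above closes the proof.
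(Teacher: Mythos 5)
Your proposal is correct in substance, but for the first similarity $J=X^{-1}\overline{J}X$ you take a genuinely different route from the paper. The paper works at the level of the symplectic form: using the auxiliary inner product of Equation~(\ref{ssm}) and the invariance $\Omega(\overline{\Phi}_1,\overline{\Phi}_2)=\Omega(\Phi_1,\Phi_2)$ established in Lemma~\ref{tcts}, it computes $\Omega(\overline{\Phi}_1,\overline{J}\,\overline{\Phi}_2)=\Omega(\Phi_1,X^{-1}\overline{J}X\Phi_2)$ via the identity $\varepsilon^{T}X^{T}\varepsilon=X^{-1}$ (i.e.\ $X$ is a symplectic matrix), and then identifies the right-hand side with $\Omega(\Phi_1,J\Phi_2)$. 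You instead argue dynamically: $X$ is the composition $\overline{\mathcal{I}}_t^{-1}\circ(\phi\mapsto a\phi)\circ\mathcal{I}_t$, so conjugation by $X$ intertwines the pullbacks to $\Gamma_t$ of corresponding global complex structures $J_V$ and $\overline{J}_V$. Both arguments carry the same implicit content --- since neither metric determines its conserved complex structure uniquely, ``the'' $J$ associated to $\overline{J}$ is essentially \emph{defined} as the one corresponding under the conformal identification --- and each verifies a different consistency of that correspondence: the paper checks compatibility with $\Omega$ (so the inner products (\ref{cip}) agree), while you check compatibility with conservation under time evolution (so $X^{-1}\overline{J}X$ is conserved if $\overline{J}$ is). For the second claim both proofs perform the same insertion of $T^{-1}T$; your flagged worry about whether the single matrix $T$ of Equation~(\ref{tmat}) serves both metrics simultaneously ($N$ versus $\overline{N}=Na^{-1}$, and $\sqrt{|h|}=a^{3}\sqrt{|\underline{h}|}$ versus $\sqrt{|\underline{h}|}$) is well taken: the paper silently writes $T\overline{J}T^{-1}=\overline{J}_Y$ with the unbarred $T$ and only accounts for the relation between the measures, $\mu=a^{2}\overline{\mu}$, later in the proof of Theorem~\ref{t4}.
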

\begin{proof}
See Appendix~\ref{ptctsb}.
\end{proof}
Next, we use these lemmas to obtain the solutions for the Conservation equations for  the spacetime at hand.

\begin{theorem}\label{t4}
Let the  metric  $\mathbf{g}$ and  the Klein Gordon equation read 
	\begin{align}\label{met4}
	\mathbf{g}=\begin{pmatrix} 
	-N^{2}(t)&  \vec{0}^{\,T}  \\ \vec{0}    &   a^2(t)	\underline{h}_{ij}(\vec{x})
	\end{pmatrix},\qquad \qquad \left(
	\square_{\mathbf{g}}- \frac{1}{6}R \right)\phi=0,
	\end{align} 
	where $R$ denotes the scalar curvature.  Let the scalar curvature of the spatial metric $\underline{\mathbf{h}}$ be denoted by $\underline{R}$. If the spatial scalar curvature is positive, i.e.\  $\underline{R} >\epsilon$ for some $\epsilon>0$,   then we have the following solution  for the Conservation equations  
	\begin{align*}
	Y=H_V^{-1/2}.
	\end{align*}  
	The complex structure defined by the solutions, denoted as $J_{Y}$, is similar to the complex structure $\overline{J}_{Y}$ obtained in Theorem~\ref{t1} with similarity matrix,  
	\begin{align*}X=
	\begin{pmatrix} 
	a &  0 \\ 	{N}^{-1}  \,\sqrt{	\underline{h}}\, a (\partial_{t}a)    &  a^{-1}
	\end{pmatrix} ,
	\end{align*} 
	where the operator  $H_V$ is given by  
	$$
        H_V=\left(-
	\Delta_{\underline{h}}+\frac{1}{6}\,\underline{R} 
	\right) \,{N}^{2}\,{a}^{-2}.
        $$
	Moreover, the complex structure $J_{Y}$ is anti-self-adjoint and an automorphism    on the Hilbert space $W_V^{2s+1}(\Sigma, {\mu})\times W_V^{2s}(\Sigma, {\mu})$. This space is a product of Sobolev spaces defined by powers of  the operator  $H_V$.
\end{theorem}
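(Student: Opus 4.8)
The plan is to obtain Theorem~\ref{t4} by transporting the solution of Theorem~\ref{t1} along the conformal isometry described in Lemmas~\ref{tcts} and~\ref{tctsb}, rather than by solving the Conservation equations from scratch. First I would invoke the conformal transformation with $\Lambda^2=a^{-2}$, which sends the FRW metric $\mathbf{g}$ of~(\ref{met4}) to the expanding-type metric $\overline{\mathbf{g}}$ with lapse $\overline N = N a^{-1}$ and time-independent spatial part $\underline{h}_{ij}(\vec x)$; since we are in four dimensions with $V=\tfrac16 R$, the conformally coupled massless Klein--Gordon equation transforms covariantly (as quoted from \cite[Equation~(3.5)]{BD}), so Theorem~\ref{t1} applies to $\overline{\mathbf{g}}$ provided $\xi\overline R=\tfrac16\overline R>-m^2+\epsilon$. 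Here $m=0$, and one checks that the relevant positivity reduces to $\overline R>\epsilon$; using the curvature formula of Remark~\ref{remcurv} for $\overline{\mathbf{g}}$ (whose scale factor is constant) this is precisely the hypothesis $\underline R>\epsilon$. Theorem~\ref{t1} then gives $\overline Y=\overline H_V^{-1/2}$, $\overline A=0$, with $\overline H_V = \overline N^{2}\big(-\Delta_{\underline h}+\tfrac16\underline R\big)=N^2 a^{-2}\big(-\Delta_{\underline h}+\tfrac16\underline R\big)$, matching the stated $H_V$.

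Next I would push this complex structure back through the similarity matrix $X$ of~(\ref{xmatr}), which for $\overline N = N a^{-1}$ becomes exactly the matrix displayed in the theorem, $X=\begin{pmatrix} a & 0 \\ N^{-1}\sqrt{\underline h}\,a(\partial_t a) & a^{-1}\end{pmatrix}$. By Lemma~\ref{tctsb} the complex structures are related by $J_Y = X_T^{-1}\overline J_Y X_T$ with $X_T = T X T^{-1}$, and a direct $2\times2$ computation using the parametrization~(\ref{csm}) of $\overline J_Y$ (with $\overline A=0$, $\overline Y=\overline H_V^{-1/2}$) yields the entries of $J_Y$; reading off the top row gives $A=H_V^{-1/2}(a^{-1}\partial_t a)$ and $Y=H_V^{-1/2}$ as claimed. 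One should double-check here that $H_V$ commutes with the multiplication operator $a^{-1}\partial_t a$ — which it does, since the latter depends only on $t$ and acts as a scalar on each Cauchy surface — so that the expression for $A$ and the relations~(\ref{eq:cd}) are consistent and the operator orderings in~(\ref{csm}) cause no trouble. It remains to confirm that this $(A,Y)$ actually solves the four Conservation equations of Proposition~\ref{p2}; this is automatic from the conformal-covariance of the construction, but if a direct check is wanted it follows because $f=-N^{-1}\partial_t N+(\sqrt{|h|})^{-1}\partial_t\sqrt{|h|}$ for the FRW metric equals $-\overline N^{-1}\partial_t\overline N$ plus the contribution of $a$, and the transformation was built precisely to intertwine the two sets of equations.

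Finally, anti-self-adjointness and the automorphism property for $J_Y$ follow by transport from the expanding case. For anti-self-adjointness I would verify that $X$ satisfies the hypothesis $\varepsilon X^\dagger \varepsilon^T = \pm X^{-1}$ of Lemma~\ref{lscs2} — this is just the statement, already contained in Lemma~\ref{tcts}, that $X$ is a symplectic transformation, i.e. $\Omega(X\Phi_1,X\Phi_2)=\Omega(\Phi_1,\Phi_2)$ — and then, since $\overline J_Y$ is anti-self-adjoint on $\mathscr H_{\overline J_Y}$ by Theorem~\ref{t1}, Lemma~\ref{lscs2} delivers anti-self-adjointness of $J_Y$. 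Here one must still check that Assumption~\ref{ass1} holds for the FRW $(A,Y)$ directly: $Y=H_V^{-1/2}$ is self-adjoint and invertible on the domain of $H_V^{1/2}$ since $H_V$ is strictly positive self-adjoint by Theorem~\ref{mt} (the potential $\tfrac16 N^2 a^{-2}\underline R$ being strictly positive by hypothesis), and $A=Y(a^{-1}\partial_t a)$ shares that domain and commutes with $Y$, so $A^*=A=Y^{-1}AY$. For the automorphism property on $W_V^{2s+1}(\Sigma,\mu)\times W_V^{2s}(\Sigma,\mu)$ I would run the same estimate as in the proof of Theorem~\ref{t0}/Theorem~\ref{t1}: bound $\|J_Y\Psi\|$ componentwise using that $H_V^{\pm 1/2}$ are topological isomorphisms between the relevant Sobolev scales (Theorem~\ref{tiso}, Corollary~\ref{ciso}) and that multiplication by the bounded-in-$\vec x$, smooth-in-$t$ scalar $a^{-1}\partial_t a$ is continuous on each $W_V^{2k}$. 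The main obstacle I anticipate is the bookkeeping in Lemma~\ref{lscs2}'s hypothesis and in the two similarity conjugations ($X$ versus $X_T = TXT^{-1}$, and the density-weight transformation $T$): getting every factor of $a$, $N$, and $\sqrt{\underline h}$ in the right place so that the transported operator genuinely has the simple form $(A,Y)=\big(H_V^{-1/2}(a^{-1}\partial_t a),\,H_V^{-1/2}\big)$ — everything else is a routine consequence of the machinery already assembled.
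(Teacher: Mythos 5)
Your route to the pair $(A,Y)$ is genuinely different from the paper's. The paper does \emph{not} obtain the FRW solution by transporting Theorem~\ref{t1} through the conformal isometry: it solves the Conservation equations directly. Concretely, it notes that $f$ is purely time dependent, so Equation~(III) forces $[Z,Y]=0$, Equation~(II) gives $Z=\tfrac12(\partial_tY-Yf)$, and Equation~(I) collapses to the single nonlinear ODE
\begin{align*}
Y\,\partial_t^2Y-\tfrac12(\partial_tY)^2-Y^2\bigl(\partial_tf+\tfrac12 f^2-2w^2\bigr)-2=0 ;
\end{align*}
the substitution $Y=B_N\,N^{-1}a$ together with the curvature identity of Remark~\ref{remcurv} makes the coefficient time independent and yields $B_N=(-\Delta_{\underline h}+\tfrac16\underline R)^{-1/2}$, hence $Y=H_V^{-1/2}$ and $A=-Z=H_V^{-1/2}(a^{-1}\partial_ta)$. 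The conformal machinery of Lemmas~\ref{tcts} and~\ref{tctsb} is used in the paper only for the second half (anti-self-adjointness via Lemma~\ref{lscs2} and the automorphism estimate), where your argument coincides with theirs. Your approach buys economy — no ODE needs to be solved — and it explains \emph{why} the FRW answer is the expanding-spacetime answer in disguise; the paper's approach buys an unconditional verification that the Conservation equations hold, independent of any covariance argument.

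Two cautions. First, the step you declare ``automatic from the conformal-covariance of the construction'' is exactly the content the paper verifies by hand: Lemmas~\ref{tcts}--\ref{tctsb} establish that $X$ preserves the symplectic form and that the two complex structures are similar, but they do not by themselves prove that conjugation by $X$ carries a \emph{conserved} family to a conserved family. To close this you must add that $\phi\mapsto a\phi$ is a bijection between the global solution spaces of the two conformally coupled Klein--Gordon equations and that $X$ is precisely the induced map on Cauchy data at each $t$, so the pulled-back family arises from a complex structure on $\Gamma_V$ in the sense of Equation~(\ref{con}); otherwise you should fall back on checking Equations~(I)--(IV) directly. Second, the measure bookkeeping you flag is a real hazard: the naive conjugation by $X_T=TXT^{-1}$ with the matrices as printed produces spurious powers of $a$ (the paper itself carries factors $\mu/\overline\mu=a^2$ through its automorphism estimate), and only the correctly normalized map on $(\phi,\partial_t\phi)$-data, namely $(\phi,\partial_t\phi)\mapsto(a\phi,\partial_t(a\phi))$, reproduces $Y=H_V^{-1/2}$ and $A=H_V^{-1/2}(a^{-1}\partial_ta)$ on the nose. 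With those two points made explicit your proof is sound.
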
    
\begin{proof} 
	See Appendix~\ref{t54}. 
\end{proof}

\subsection{QFT in FRW-type Spacetimes with Mass Term}

In this subsection  we find solutions of the complex structure for different FRW-models of  the massive Klein-Gordon equation. 
\begin{theorem}\label{t6} 
	Let the metric be $\mathbf{g}=- N^{2}(t)\,dt^2+a^{2}(t)	\underline{h}_{ij}(\vec{x})\,dx^idx^j$ and   let the Klein-Gordon equation be given with the time dependent mass term $M^2(t)=a^{-2}(t)m^2$, i.e.\  
	\begin{align*}\left(
	\square_{\mathbf{g}}- \frac{1}{6}R-M^2(t)\right)\phi=0.
	\end{align*}Let the scalar curvature of the spatial metric $\underline{\mathbf{h}}$ be denoted by $\underline{R}$ and let it fulfill the positivity condition $\frac{1}{6}\, \underline{R}>-m^2+\epsilon$ for some $\epsilon>0$. Then,   a solution  for the Conservation equations  is,  
	\begin{align*}	
	Y=H_V^{-1/2}, 
	\end{align*}  
	where the operator $H_V$ is  
	\begin{align*}
	H_V =
	\left(-\Delta_{\underline{h}}+\frac{1}{6}\,\underline{R}+m^2
	\right) N^2a^{-2}. 		\end{align*} 
	Moreover, the complex structure $J_{Y}$ (see Equation~(\ref{csm})) is anti-self-adjoint and an automorphism    on the Hilbert space $W_V^{2s+1}(\Sigma, {\mu})\times W_V^{2s}(\Sigma, {\mu})$. This space is a product of Sobolev spaces defined by powers of  the operator  $H_V$.
\end{theorem} 
\begin{proof} 
	See Appendix~\ref{t55}. 
\end{proof}
The physical interpretation of the time-dependent mass term becomes clear by   conformally transforming the Klein-Gordon equation,
\begin{align*}\left(
\square_{\mathbf{g}}- \frac{1}{6}R\right)\phi \longrightarrow  \left(
\square_{\overline{\mathbf{g}}}- \frac{1}{6}\overline{R}\right)\overline{\phi}= \Lambda^{-3} \left(
\square_{\mathbf{g}}- \frac{1}{6}R\right)\phi,
\end{align*} 
where for the Klein-Gordon equation with time dependent mass term we have, 
\begin{align*} \left(
\square_{\overline{\mathbf{g}}}- \frac{1}{6}\overline{R}-m^2\right)\overline{\phi}&= \Lambda^{-3} \left(
\square_{\mathbf{g}}- \frac{1}{6}R\right)\phi-m^2\overline{\phi}\\&=
\Lambda^{-3} \left(
\square_{\mathbf{g}}- \frac{1}{6}R-\Lambda^{2}m^2\right)\phi .
\end{align*}
This equation makes clear (the well-known fact) that a mass term breaks conformal invariance of the Klein-Gordon equation. Conformal invariance  was used in the previous results to get  complex structures for a class of conformally isometric spacetimes. Hence, if we conformally transform the metric $g$, the Klein Gordon equation transforms such that the time dependent mass term becomes independent of time. Therefore, we obtain the usual Klein-Gordon equation with the potential $\frac{1}{6}\overline{R}$. In the following two examples the same interpretation for the solutions holds.
 
The next result resembles well studied examples in \cite{Cortez} and references therein. 
 \begin{theorem}\label{t7}
 	
 	Let the metric be $\mathbf{g}=- a^{2}(t)\,dt^2+a^{2}(t)	\underline{h}_{ij}(\vec{x})\,dx^idx^j$ and   let the Klein-Gordon equation be given with the time dependent mass term $M^2(t)=a^{-2}m^2+(1-6\xi)a^{-3}a''$, i.e.\  
 	\begin{align*}\left(
 	\square_{\mathbf{g}}- \xi R- M^2(t)\right)\phi=0.
 	\end{align*}Let the scalar curvature of the spatial metric $\underline{\mathbf{h}}$ be denoted by $\underline{R}$ and let it   fulfill the positivity condition $\xi\, \underline{R}>-m^2+\epsilon$ for some $\epsilon>0$. Then,   a solution for the Conservation equations  is,  
 	\begin{align*}	
 	Y=H_V^{-1/2},
 	\end{align*}  
 	where the operator $H_V$ is  
 	\begin{align*}
 	H_V = -\Delta_{\underline{h}}+\xi\,\underline{R}+m^2
 	   . 		\end{align*}  
 Moreover, the complex structure $J_{Y}$ (see Equation~(\ref{csm})) is anti-self-adjoint and an automorphism    on the Hilbert space $W_V^{2s+1}(\Sigma, {\mu})\times W_V^{2s}(\Sigma, {\mu})$. This space is a product of Sobolev spaces defined by powers of  the operator  $H_V$.
\end{theorem}
\begin{proof} 
	See Appendix~\ref{t66}. 
\end{proof}  
 For the following examples we insert concrete functions for the scale factor, namely those corresponding to the scenarios of a radiation-dominated universe ($a=t^{1/2}$) and to  inflation ($a=e^{Ht}$), see \cite[Chapter 5 and 6]{Wal:gr}, \cite[Chapter 7]{FU} and in a QFT context see also \cite{ad1} and \cite{ad2}. The inflation example can also be considered as flat slicing in de~Sitter space. 
 \begin{example}\label{lem:11b} 
 	Let the metric be $\mathbf{g}=-   dt^2+a^{2}(t)	\underline{h}_{ij}(\vec{x})\,dx^idx^j$
 	and let the scale factor  be $a=t^{1/2}$ with $t\in\mathbb{R}^{+}$. Moreover, let the Klein-Gordon equation be given by
 	\begin{align*}\left(
 	\square_{\mathbf{g}}- \xi R- m^2 \right)\phi=0.
 	\end{align*}
 	Then, a solution  for the Conservation equations  is, 
 	\begin{align*}
 	Y(t)=c_1 M_{\alpha,1/4}(z(t))^2+c_2 M_{\alpha,1/4}(z(t)) W_{\alpha,1/4}(z(t)) +c_3 W_{\alpha,1/4}(z(t))^2
 	\end{align*} 
 	where we define the variables $\alpha:=-iH_V /(2m)$, $z(t):=2im\,t$.  Moreover,
 	$M_{\alpha,\mathbb{R}}$ and $W_{\alpha,\mathbb{R}}$ are standard Whittaker functions, $c_1,\,c_2,\,c_3$ are real valued constants and the operator $H_V$ is  given by,
 	\begin{align*}
 	H_V  = (-\Delta_{\underline{h}}+\xi\underline{R} )  .
 	\end{align*}  		 
 \end{example}
 \begin{proof}
 	See Appendix~\ref{t88b}. 	 
 \end{proof}  
\begin{example}\label{lem:10} 
	Let the metric be $\mathbf{g}=-   a^{2}(t)dt^2+a^{2}(t)	\underline{h}_{ij}(\vec{x})\,dx^idx^j$. Moreover,   let the Klein-Gordon equation be given by
\begin{align*}\left(
\square_{\mathbf{g}}- \xi R- m^2 \right)\phi=0.
\end{align*}
 
For   scale factor   $a=e^{Ht}$, with $H$ being a parameter and scalar curvature   $\underline{R}$,   fulfilling the positivity condition $\xi \underline{R}>   -(6\xi-1)H^2 + \epsilon$ for some $\epsilon>0$ we have  the  following solution  for the Conservation equations,
\begin{align*}	
Y(t)& =  c_1 \,   \Gamma(1-\alpha)\Gamma(1+\alpha)
J_{\alpha}( {z(t)})\,J_{-\alpha}( {z(t)})\\& +	c_2\, 
4^{- \alpha}  \,\Gamma(1-\alpha)^2\,J_{-\alpha}( {z(t)})^2
+c_3 \,4^{  \alpha}\,  
\Gamma(1+\alpha)^2\,J_{\alpha}( {z(t)})^2,	
\end{align*}  
where  we defined the variables $\alpha:=iH_V^{1/2}/H$ and $-z^2(t):=-{e^{ 2 H t}m^2}/{(   H^2) } $. Furthermore, the symbol	$J_{\nu}(\tau)$ denotes the Bessel function of first order, $c_1,\,c_2,\,c_3$ are real valued constants and the operator $H_V$ is  given by,
\begin{align*}
H_V = (-\Delta_{\underline{h}}+\xi\underline{R}+ (6\xi-1)H^2) . 		
\end{align*}
\end{example}
\begin{proof}
	See Appendix~\ref{t77}. 	 
\end{proof}

\begin{example}\label{lem:11} 
	Let the metric be $\mathbf{g}=-   dt^2+a^{2}(t)	\underline{h}_{ij}(\vec{x})\,dx^idx^j$
	and let  $a=e^{Ht}$, with $H$ being a parameter. Moreover,   let the Klein-Gordon equation be given by
	\begin{align*}\left(
	\square_{\mathbf{g}}- \xi R- m^2 \right)\phi=0.
 	\end{align*}Let the mass of the particle fulfill the positivity condition $m^{2} >- 2\,( 6\xi  -\frac{9}{8}   )H^2 +\epsilon$   for some $\epsilon>0$.  	Then,    a solution  for the Conservation equations  is, 
	\begin{align*}	
	Y(t) & 
	=c_1 \,   \Gamma(1-\alpha)\Gamma(1+\alpha)
	J_{\alpha}( {z}(t))\,J_{-\alpha}( {z}(t))\\& +	c_2\,  
	4^{- \alpha}  \,\Gamma(1-\alpha)^2\,J_{-\alpha}( {z}(t))^2
	+c_3 \,4^{  \alpha}\, 
	\Gamma(1+\alpha)^2\,J_{\alpha}( {z}(t))^2 
	\end{align*} 
		where  we defined    the variables $M^2:=m^{2}+2\,( 6\xi  -\frac{9}{8}   )H^2$, $\alpha:=-iM/H$ and $-z^2(t):=-{e^{- 2 H t}H_V }/{(   H^2) }$. Moreover,
	 $J_{\nu}(\tau)$ is a Bessel function of first order, $c_1,\,c_2,\,c_3$ are real valued constants and the operator $H_V$ is  given by,
		\begin{align*}
		H_V  = (-\Delta_{\underline{h}}+\xi\underline{R} )  . 		
		\end{align*}
\end{example}
\begin{proof}
	See Appendix~\ref{t88}. 	 
\end{proof}
 
\begin{example}\label{lem:12}
	Let the metric be $\mathbf{g}=-   a^{2}(t)dt^2+a^{2}(t)	 \underline{h}_{ij}(\vec{x})\,dx^idx^j$. Moreover,   let the Klein-Gordon equation be given by
	\begin{align*}\left(
	\square_{\mathbf{g}}- \xi R- m^2 \right)\phi=0,
	\end{align*}
   with scale factor   $a(t)={r}/{t}$,   $r\in\R^{+}$, $t\in(0,\infty)$ and
   scalar curvature   $\underline{R}$    fulfilling the positivity condition $\xi \underline{R}>     \epsilon$ for some $\epsilon>0$
   we have  the  following solution  for the Conservation equations,
	\begin{align*}	
	Y(t)& = c_1 \,t\, J_{\alpha}(H_V^{1/2} \,t)^2 + c_3  \,t\, Y_{\alpha}(H_V^{1/2}\, t)^2 + c_2 \,t\, J_{\alpha}(H_V^{1/2}\, t) Y_{\alpha}(H_V^{1/2}\,t)
 	\end{align*}  
	where  we defined the variables $\alpha:= \sqrt{9(1-6\xi)/4-(r\,m)^2}$. Furthermore, the symbols	$J_{\nu}(\tau)$ and $Y_{\nu}(\tau)$ denote  the Bessel functions of first and second order respectively, $c_1,\,c_2,\,c_3$ are constants and the operator $H_V$ is  
	\begin{align*}
	H_V = (-\Delta_{\underline{h}}+\xi\underline{R}). 		
	\end{align*}
\end{example}
\begin{proof}
	See Appendix~\ref{t88c}. 	 
\end{proof} 
 From these examples  the physical meaning of the operator $Y$ (in the context of the FRW and de~Sitter spacetimes) becomes clear. First note that the examples discussed in this section can be solved by separation of variables in terms of space and time. In order to focus the discussion let us specialize to a spatially and conformally flat Robertson-Walker spacetime with metric,
$$g=-a^2(t)\,dt^2+a^2(t)\,\delta_{ij}\,dx^i\,dx^j .$$ 
    Then the solutions of the field equation can be expanded in terms of momentum modes \cite[Page 136]{FU} as,   
 \begin{align*}
 \phi(t,x)= (2\pi)^{-3/2}\,a(t)^{-1}\,\int\,d^3\mathbf{k}\,[\phi(\mathbf{k})\,\chi_{\mathbf{k}}(t)\,e^{-i\mathbf{k}\cdot\mathbf{x}}+h.c.],
 \end{align*}
where     $\chi_{\mathbf{k}}(t)$  is the generalization of the momentum plane wave $e^{i\omega_{\mathbf{k}} t}$ in Minkowski spacetime. It is the solution of 
the differential equation,
\begin{align}\label{eq:chi}
\frac{\partial^2 \chi_{\mathbf{k}}}{\partial t^2}+\left(\Omega_{\mathbf{k}}(t)^2+\left(\xi-\frac{1}{6}\right)R\,a^2(t)\right)
\chi_{\mathbf{k}}=0,
\end{align}
where $\Omega_{\mathbf{k}}(t)^2=\mathbf{k}^2+m^2\,a(t)^2$.

Next, we restrict to the particular case of Example~\ref{lem:12}. Then, we can take $\chi_{\mathbf{k}}=t^{1/2}\,H_{\alpha}(\mathbf{k}\,t)$, where $H_{\alpha}(\mathbf{k}\,t)$ is the Hankel function of order $\alpha$. The eigenmodes of the  operator $H_V$ are the momentum eigenmodes that we denote $\Psi_{\mathbf{k}}$. We fix the constant $c_1$ arbitrarily and choose $c_2,\,c_3$  as follows $$c_2=2i\,c_1,\qquad \qquad  c_3=-c_1.$$ Then, the operator $Y$ is     the square of    $\chi_{\mathbf{k}}$ when acting upon eigenfunctions of the operator $H_V$, i.e.,
\begin{align*}
Y(t)\Psi_{\mathbf{k}}=c_1\,\chi_{\mathbf{k}}^2(t)\Psi_{\mathbf{k}}.
\end{align*}
Hence, one can interpret the operator $Y$ as the operator representation of (the square of)  the function  $\chi_{\mathbf{k}}$. In the present case it turns out, moreover, that restricting to minimal coupling ($\xi=0$) the complex structure corresponding to this choice of $Y$ (and $A$)  encodes precisely the well known Bunch-Davis vacuum \cite{ChTa:qftsds}. The present considerations also illustrate one advantage of our approach: We can construct and describe a vacuum (in terms of the operators $Y$ and $A$ or $J$ itself) without prior recurrence to a specific mode expansion. 
\begin{remark}
	Although we demonstrate it for a specific scale factor $a(t)$ the relation ($Y(t)\Psi_{\mathbf{k}}=c_1\,\chi_{\mathbf{k}}^2(t)\Psi_{\mathbf{k}}$) holds for an arbitrary scale factor $a(t)$. This can be seen  in general since  the  differential Equation \eqref{eq:ddiffeq} for $Y$ is equivalent to 	the differential Equation \eqref{eq:chi}   when acting upon eigenfunctions of the operator $H_V$.
\end{remark}
\subsection{Complex Structures and the Gelfand-Dikki Equation}
From all the former (mostly well-known) examples a general thread emerges.  
Namely, if the operator $w^2$ and the function $f$ commute (on some dense domain of $w^2$) we have the differential equation 
\begin{align}  
	Y\,\partial_t^2 Y
	-\frac{1}{2} (\partial_tY)^2
	-Y^2\left(
	\partial_tf +\frac{1}{2}f^2 -
	2  w^2
	\right)=2.
\end{align} 
The solution of $Y$ completely determines, in this case, the form of the complex structure. It is interesting to point out that this is the Gelfand-Dikii equation (see \cite[Chapter 2, Equation~(6)]{geld})\footnote{We are indebted to Christian Schubert for pointing this out.}. This equation is an important part in the Sturm-Liouville theory and it is connected to the Ricatti Equation, see \cite[Footnote 1, Page 97 and Equations (11)-(13), Page 98]{geld} as well.  To see the equivalence to the Gelfand-Dikii equation we apply the former equality to eigenfunctions  $\Psi_k$  of the self-adjoint operator $w^2$ (i.e. $w^2\Psi_k=w^2_k\Psi_k$), define the variables \begin{align}
	2(u(t)+\xi):= 
	\partial_tf +\frac{1}{2}f^2 -
	2  w^2,\qquad \qquad Y=\pm2iG
\end{align}
where $\xi\in\R$ and denote the time-derivative with the prime symbol
\begin{align*}
	-2GG^{''}+(G')^2+4(u(t)+\xi)G^2=1.
\end{align*}
Taking the time-derivative of this equation we obtain (as in our various proofs) 
\begin{align*}
	- G^{'''}+4(u(t)+\xi)G^{'}+2u'(t)G=0.
\end{align*}
This, rather complicated, differential equation is solved by a linear combination 
of $\varphi_1^2$, $\varphi_2^2$ and $\varphi_1\varphi_2$, where $\varphi_1$ and $\varphi_2$ are fundamental solutions of the differential equation, 
\begin{align*}
	-\varphi''+(u(t)+\xi)\varphi=0.
\end{align*} 
Note that in general there exists an asymptotic expansion of the function $G$ given by 
\begin{align*}
	G(x;\xi)=\sum_{l=0}^{\infty}\frac{G_l[u]}{\xi^{l+1/2}}
\end{align*}
as $\xi\rightarrow\infty$. The expansion coefficients $G_l$ are polynomials of the function $u$ and its derivatives and are given by a recurrence relation, see \cite[Equation 9, Page 97]{geld}. \\\\
Therefore, at least recursively one  can  solve the complex structures for a general class of Klein-Gordon theories for which the operator $w^2$ and the function $f$ commute.    
\section*{Acknowledgments}
The authors would like to thank Alejandro Corichi for pointing out important references and for various discussions. Moreover, one of the authors (AM) would like to thank  Bernard Kay for   enlightening  discussions on quantum field theory in curved spacetimes and related functional analytical questions. 
This work was partially supported by CONACYT project grant 259258.
 
\appendix

\section{Proofs}\label{sec:proofs}
   
\addtocontents{toc}{\begingroup\string\c@tocdepth 1\relax}

  \stepcounter{subsection}
\subsection{Proofs of Section~\ref{sec:ghyp}}  \stepcounter{subsubsection}
\subsubsection{Proof of Observation~\ref{p1}}\label{pp1} 
	\begin{proof} In order to see the explicit form of the Klein Gordon Equation we first use the splitting of a globally hyperbolic spacetime provided by Theorem~\ref{T1}, i.e.\   	\begin{align*}  
		\textbf{g}=\begin{pmatrix} 
		-N^{2} &  \vec{0}^{\,T}  \\ \vec{0}    &   	{h}_{ij} 
		\end{pmatrix}, \qquad |g|=N^{2}|h|, \sqrt{|g|}=N\,\sqrt{|h|},
		\end{align*} 
		and then a straightforward calculation yields, 
		\begin{align*} &
		\left(({\sqrt{|g|}})^{-1}\partial_a(\sqrt{|g|}g^{ab}\partial_b)-m^2(x)\right)\phi \\&=
		\left(	(N\,\sqrt{|h|})^{-1}\partial_0(   N\,\sqrt{|h|}    g^{00}    \partial_0 ) + (N\,\sqrt{|h|})^{-1}\partial_i(   N\,\sqrt{|h|}    h^{ij}     \partial_j )-m^2(x)\right)\phi\\&=  
		\left(-	(N^2\,\sqrt{|h|})^{-1}(   -N^{-1} ( \partial_0 N )\,\sqrt{|h|}\partial_0 +    \,(\partial_0\sqrt{|h|})  \partial_0 +   \,\sqrt{|h|} \partial^{2}_t ) - N^{-2}w^2 \right)\phi=0,
		\end{align*}
		which in turn  leads to 
		\begin{align*} 
		&	\left( 	( \,\sqrt{|h|})^{-1}(   -N^{-1} ( \partial_0 N )\,\sqrt{|h|}\partial_0 +    \,(\partial_0\sqrt{|h|})  \partial_0 +   \,\sqrt{|h|} \partial^{2}_t ) + w^2 \right)\phi\\=&
		\left( 	 \partial^{2}_t         -N^{-1} ( \partial_0 N ) \partial_0 +  (\sqrt{|h|})^{-1}  \,(\partial_0\sqrt{|h|})  \partial_0 +     w^2 \right)\phi= 0,
		\end{align*} 
		where the operator $w$ is given by $$w^2=-\frac{N}{\sqrt{|h|}}\partial_i(
		\sqrt{|h|} N h^{ij}\partial_j
		)+N^2m^{2}(x)=
		-N^2(\Delta_{h}-m^{2}(x))-Nh^{ij}\partial_iN\partial_j,$$		
		with $\Delta_h$ being the Laplace-Beltrami operator for the associated $3$-metric $h_{ij}$.
		
	\end{proof}
	\subsection{Proofs of Section~\ref{sec:cstruct}} 
\subsubsection{Proof of Proposition~\ref{p2}}\label{pp2}
\begin{proof}
	In order to prove the conservation equation we explicitly calculate the expression $\dot{J}=[H,J]$.
	Hence, we have the matrix equality  	\begin{align*}  
		\begin{pmatrix} 
			-	\dot{Z} &  \dot{Y} \\ \dot{D}   &   	\dot{C}
		\end{pmatrix}=	\begin{pmatrix} 
			D+Yw^2& C+Z+Yf \\w^2Z-fD+Cw^2  & -w^2Y+D+[C,f]
		\end{pmatrix}, 
	\end{align*} 
	where $C=Y^{-1}ZY$ and $D=-Y^{-1}(1+Z^2)$, where $Y:=BN^{-1}\sqrt{h}$ and $Z=-A$.
	The first two matrix entries are the first two conservation equations. To prove the other two we write out the other two entries and use the conservation equations 
	\begin{align*}
		\dot{D} &=   Y^{-1}\dot{Y}Y^{-1}(1+Z^2)-Y^{-1}(Z\dot{Z}+\dot{Z}Z)	\\&=
		w^2Z+fY^{-1}(1+Z^2)+Y^{-1}ZYw^2 .
	\end{align*}
	By inserting the time derivatives we have, 
	\begin{align*}
		& Y^{-1}(Y^{-1} Z Y+ Yf+ Z)Y^{-1}(1+Z^2)-Y^{-1}(Z (- Y w^2 +Y^{-1}(1+Z^2))+( - Y w^2 +Y^{-1}(1+Z^2))Z)
		\\&=
		w^2Z+fY^{-1}(1+Z^2)+Y^{-1}ZYw^2
	\end{align*}
	and \begin{align*}
		\dot{C}&=-Y^{-1}\dot{Y}Y^{-1}ZY+Y^{-1} \dot{Z} Y+Y^{-1}Z\dot{Y}
		\\&=-w^2Y-Y^{-1}(1+Z^2)+[Y^{-1}ZY,f] .
	\end{align*}
	Inserting the time derivatives we have, 
	\begin{align*}
		&-Y^{-1}(Y^{-1} Z Y+ Yf+ Z)Y^{-1}ZY+Y^{-1} (- Y w^2 +Y^{-1}(1+Z^2)) Y+Y^{-1}Z(Y^{-1} Z Y+ Yf+ Z)
		\\&=-w^2Y-Y^{-1}(1+Z^2)+[Y^{-1}ZY,f] .
	\end{align*}
	Both equations are trivially satisfied and supply no further information.
\end{proof} 

\subsection{Proofs of Section~\ref{sec:domains}}
		\stepcounter{subsubsection}\stepcounter{subsubsection}\stepcounter{subsubsection} 
		\subsubsection{Proof of Proposition~\ref{cssa}}\label{p41}
	
		\begin{proof}
			First we calculate the adjoint of the complex structure $J$   by using Equation~(\ref{adcs}) (which holds for the changed measure as well), i.e.
			\begin{align*}
			J_{Y}^{*}&= \varepsilon \,J_{Y}^{\dagger}\varepsilon^{T}\\
			&=\begin{pmatrix} 
			0&  1 \\ -1  & 0
			\end{pmatrix}\begin{pmatrix} 
			A^*& -(Y^{-1} (1+A^2))^*  \\  Y^* & -(Y^{-1}AY)^*
			\end{pmatrix}\begin{pmatrix} 
			0& - 1 \\ 1  & 0
			\end{pmatrix}  \\
			&=\begin{pmatrix} 
			0&  1 \\ -1  & 0
			\end{pmatrix}\begin{pmatrix} 
			-(Y^{-1} (1+A^2))^* & -A^*\\  -(Y^{-1}AY)^*  &- Y^* 
			\end{pmatrix}  \\
			&= \begin{pmatrix}  -(Y^{-1}AY)^* 
			&- Y^*  \\  (Y^{-1} (1+A^2))^* &A^*
			\end{pmatrix} \\
			&= \begin{pmatrix}  -(Y^{-1}AY)^* 
			&- Y\\  (Y^{-1} (1+A^2))^* &Y^{-1}AY
			\end{pmatrix} ,
			\end{align*}
			where in the last lines we used Assumption~\ref{ass1}. Next, we take a closer look at the remaining adjoint operators, 
			\begin{align*}
			(Y^{-1}AY)^* &=Y^*A^*(Y^{-1})^*\\&=
			Y(Y^{-1}AY)(Y^{*})^{-1}\\&=
			Y(Y^{-1}AY)(Y)^{-1}\\&=A,
			\end{align*}
			where we used the fact that $Y$ is invertible and self-adjoint (i.e.\ $Y=Y^{*}=Y^{**}$) and hence we can interchange the conjugation and the inverse operation since any power of a self-adjoint operator is by the spectral theorem self-adjoint (i.e.\ $(Y^{-1})^{*}=Y^{-1}=(Y^{*})^{-1}$). 
			The next adjoint term is, 
			\begin{align*}
			(Y^{-1} (1+A^2))^*&=(1+A^2)^*(Y^{-1})^*
			\\&=(1+(A^*)^2)(Y^*)^{-1}\\&=(1+(Y^{-1}AY)^2)(Y)^{-1}
			\\&=(1+Y^{-1}(A^2)Y)Y^{-1}\\&= Y^{-1} (1+A^2), 
			\end{align*}
			and therefore the equality $J_{ {Y}}^*=-J_{ {Y}}$ is proven. 
		\end{proof}
		
		\stepcounter{subsubsection}\stepcounter{subsubsection}
		\subsubsection{Proof of Lemma~\ref{noreq}}\label{l42}

	\begin{proof}
		To prove the equivalence of the norms we have to prove 
		\begin{align*}
		c_1\Vert\Psi\Vert_{W_V^2(\Sigma, {\mu})}^2\leq \Vert \Psi\Vert_{H_V^2}^2\leq 	 c_2\Vert\Psi\Vert_{W_V^2(\Sigma, {\mu})}^2,
		\end{align*}
		for positive real constants $c_1,\,c_2$ and vectors $\Psi\in W_V^2$. The first inequality is easily proven by setting $c_1=1$. The second holds as well since the operator $H_V$ is strictly positive i.e.\ we have
		\begin{align}\label{ineq1}
		\Vert H_V\Psi\Vert^2_{L^{2}(\Sigma,\,\mu)}\geq \epsilon  \Vert  \Psi\Vert^2_{L^{2}(\Sigma,\,\mu)}.
		\end{align}
		Hence, by setting the constant $c_2$ equal to $1+1/\epsilon$ we conclude the proof.
                
 		The   inclusion and density of $C_0^{\infty}(\Sigma)$ in $W_V^2(\Sigma, \mu)$ holds since in \cite{S01} it was proven that the domain $W_V^2$ is equivalent to the closure of the operator $H_V$ in $L^2$ from the initial domain $C_0^{\infty}$. The next inclusion follows from the definition  of the norm (see Equation~(\ref{norm})) of $W_V^2$, i.e.\ we have, 
		$$\Vert \Psi\Vert_{L^{2}(\Sigma,\,\mu)}\leq\Vert \Psi\Vert_{W_V^2(\Sigma, {\mu})}.$$
	\end{proof}
	
		\subsubsection{Proof of Lemma~\ref{l43a}}\label{pt43a}
	\begin{proof}
		Since the domain $W_V^{2s}$ is equivalent to the closure of the operator $H_V$ in $L^2$ from the initial domain $C_0^{\infty}$, density follows and therefore the first inclusion. For a different proof see also \cite[Proof of Theorem~4.2]{stl}, which holds in our case since the proofs in \cite{stl} have been kept general enough to hold for a positive operator.  In order to prove the next inclusion we first prove that  the norm $\Vert \cdot\Vert_{H_V^{s}(\Sigma, {\mu})}$ is equivalent to the following norms 
		\begin{equation}\label{norm3}
		\Vert \Psi\Vert_{W_V^{2s}(\Sigma, {\mu})}=\Vert H_V^{s}\Psi\Vert_{L^{2}(\Sigma,\,\mu)},\qquad 
		\qquad 	\Vert \Psi\Vert_{\mathcal{W}_V^{2s}(\Sigma, {\mu})}=\Vert (H_V+\rho)^{s}\Psi\Vert_{L^{2}(\Sigma,\,\mu)},
		\end{equation}
		where $\rho\geq1$.  
			The first equivalence is proven along the same lines as in the proof of Lemma~\ref{noreq}, i.e. let us first prove 
		\begin{align*}
		c_1\Vert\Psi\Vert_{W_V^{2s}(\Sigma, {\mu})}^2\leq \Vert \Psi\Vert_{H_V^{2s}(\Sigma, {\mu})}^2\leq 	 c_2\Vert\Psi\Vert_{W_V^{2s}(\Sigma, {\mu})}^2,
		\end{align*}
		for positive real constants $c_1,\,c_2$ and vectors $\Psi\in W_V^{2s}$. The first inequality holds by simply setting $c_1=1$ and hence we are left with proving, 
		\begin{align*}
		\Vert H_V^{s}\Psi\Vert^2_{L^{2}(\Sigma,\,\mu)}+\Vert  \Psi\Vert^2_{L^{2}(\Sigma,\,\mu)}\leq 	 c_2\Vert H_V^{s}\Psi\Vert^2_{L^{2}(\Sigma,\,\mu)}.
		\end{align*}
		Since the operator $H_V^{s}$ is strictly positive we have 
		\begin{align*}
		\Vert H_V^{s}\Psi\Vert^2_{L^{2}(\Sigma,\,\mu)}\geq\epsilon_s \Vert \Psi\Vert^2 _{L^{2}(\Sigma,\,\mu)},
		\end{align*}	
		for some finite positive constant $\epsilon_s$. Hence, by setting $c_2=1+1/\epsilon_s$ the inequality is satisfied. Next, we prove the equivalence of the norms  $\Vert \cdot\Vert_{W_V^{2s}(\Sigma, {\mu})}$ and $\Vert \cdot\Vert_{\mathcal{W}_V^{2s}(\Sigma, {\mu})}$ from which by transitivity the equivalence of  norms 
		$\Vert \cdot\Vert_{H_V^{s}(\Sigma, {\mu})}$ and $\Vert \cdot\Vert_{\mathcal{W}_V^{2s}(\Sigma, {\mu})}$ follows. The proof, as before, is done by means of the following inequalities, 
		\begin{align*}
		c_1\Vert\Psi\Vert_{W_V^{2s}(\Sigma, {\mu})}\leq \Vert \Psi\Vert_{\mathcal{W}_V^{2s}(\Sigma, {\mu})}\leq 	 c_2\Vert\Psi\Vert_{W_V^{2s}(\Sigma, {\mu})}.
		\end{align*}
		The first inequality is proven to hold by substitution, i.e.\ 
		\begin{align*}
		c_1\Vert H_V^{2s}\Psi\Vert_{L^{2}(\Sigma,\,\mu)} \leq \Vert  (H_V+\rho)^{2s}\Psi\Vert _{L^{2}(\Sigma,\,\mu)},
		\end{align*}
		we substitute $\Phi= (H_V+\rho)^{2s}\Psi$ which in turn gives us,
		\begin{align*}
		c_1\Vert H_V^{2s}(H_V+\rho)^{-2s}\Phi\Vert_{L^{2}(\Sigma,\,\mu)} \leq \Vert   \Phi\Vert _{L^{2}(\Sigma,\,\mu)}.
		\end{align*}
		By using the fact (\cite[Proof of Theorem X.12]{RS2} and \cite[Section 4]{stl}) that $\Vert H_V^{2s}(H_V+\rho)^{-2s}\Phi\Vert_{L^{2}(\Sigma,\,\mu)}\leq\Vert\Phi\Vert_{L^{2}(\Sigma,\,\mu)}$, for all $s>0$, and by setting $c_1=1$ the inequality is satisfied. The next inequality is proven in an analogous manner, i.e.\
		\begin{align*}
		\Vert \Phi\Vert_{L^{2}(\Sigma,\,\mu)} \leq c_2 \Vert  H_V^{2s}(H_V+\rho)^{-2s} \Phi\Vert_{L^{2}(\Sigma,\,\mu)} ,
		\end{align*}
		which reads 
		\begin{align*}\Vert  H \Phi\Vert_{L^{2}(\Sigma,\,\mu)} \geq 1/c_2
		\Vert \Phi\Vert_{L^{2}(\Sigma,\,\mu)}   ,
		\end{align*}
		where the operator $H=(H_V\,(H_V+\rho)^{-1})^{2s}$ is a positive operator and hence by definition   
		\begin{align*}\Vert  H \Phi\Vert_{L^{2}(\Sigma,\,\mu)} \geq \epsilon_c
		\Vert \Phi\Vert_{L^{2}(\Sigma,\,\mu)},
		\end{align*}
		for some $\epsilon_c>0$ and hence by choosing $c_2\geq 1/\epsilon_c+1$ the norm equivalence is concluded.
Equipped with the equivalence of the norms we next prove the remaining inclusions. For the second inclusion\footnote{It can also be proven by the use of the spectral theorem. We are indebted to Elmar Wagner for this remark.} to hold we have to prove the following inequality,
		\begin{align*}
		\Vert \Psi\Vert_{W_V^{2k}(\Sigma, {\mu})}^2\leq  \Vert \Psi\Vert_{W_V^{2s}(\Sigma, {\mu})}^2,
		\end{align*}
		since the norms $\Vert \cdot\Vert_{W_V^{2k}(\Sigma, {\mu})}$ and  $\Vert \cdot\Vert_{\mathcal{W}_V^{2k}(\Sigma, {\mu})}$ are equivalent we can prove the easier version 
		\begin{align*}
		\Vert \Psi\Vert_{\mathcal{W}_V^{2k}(\Sigma, {\mu})}^2\leq  \Vert \Psi\Vert_{\mathcal{W}_V^{2s}(\Sigma, {\mu})}^2.
		\end{align*}
		First note that since the operator $H_V$ is   strictly positive  and self-adjoint  we have for all positive $s$
		\begin{align}\label{res}
		\Vert (H_V+\rho)^{-s} \Vert\leq 1.
		\end{align}
		Next, we write $s=k+l$ to obtain 
		\begin{align*}
		\Vert \Psi\Vert_{\mathcal{W}_V^{2k}(\Sigma, {\mu})}^2&= \Vert (H_V+\rho)^{k}\Psi\Vert^2_{L^{2}(\Sigma,\,\mu)}
		\\&= \Vert (H_V+\rho)^{k}(H_V+\rho)^{-s}(H_V+\rho)^{s}\Psi\Vert^2 _{L^{2}(\Sigma,\,\mu)}
		\\& = \Vert (H_V+\rho)^{-s+k }(H_V+\rho)^{s }\Psi\Vert^2  \\& \leq \Vert  (H_V+\rho)^{s}\Psi\Vert^2 _{L^{2}(\Sigma,\,\mu)} =\Vert\Psi\Vert_{\mathcal{W}_V^{2s}(\Sigma, {\mu})}^2,
		\end{align*} 
		where in the last lines we used properties of a semigroup that is generated by
		$(H_V+\rho)^{-s}$. The semigroup behavior comes from the positivity of the operator $H_V$ and from the boundedness of its respective resolvent, see \cite[Section~4]{stl} and \cite[Proof of Theorem~X.31]{RS2}. Hence, in the first line we used the existence of the identity   and in the third line we used the  semigroup property $(H_V+\rho)^{k}(H_V+\rho)^{-s}=(H_V+\rho)^{k-s}$, if $s\geq k$ for positive $s$ and $k$.
		
		The second inclusion, $W_V^{2k}(\Sigma, \mu)\subset L^2(\Sigma, \mu)$, is easily proven and follows from the norm $\Vert \Psi\Vert_{H_V^{s}(\Sigma, {\mu})}$, i.e.
		\begin{align*}
		\Vert \Psi\Vert_{H_V^{s}(\Sigma, {\mu})}^2\geq\Vert \Psi\Vert^2_{L^{2}(\Sigma,\,\mu)}\Rightarrow \Vert \Psi\Vert_{H_V^{s}(\Sigma, {\mu})} \geq\Vert \Psi\Vert_{L^{2}(\Sigma,\,\mu)}.
		\end{align*}
	\end{proof}

        \stepcounter{subsubsection}\stepcounter{subsubsection}

\subsubsection{Proof of  Lemma~\ref{lscs2}}\label{l43}

\begin{proof}
	Anti self-adjointness of $J$ means that we have to prove the relation $J^*=-J$ on the Hilbert space $\mathscr{H}_{\overline{J}}$. Due to  Equation~(\ref{adcs}) we  write the adjoint as, 
	\begin{align*}
	J^*&=\epsilon J^{\dagger}\epsilon^T
	\\&=\epsilon (X^{-1}\overline{J}X)^{\dagger}\epsilon^T
	\\&=\epsilon X^{\dagger} \overline{J}^{\dagger}(X^{-1})^{\dagger}\epsilon^T
	\\&=\epsilon X^{\dagger} \epsilon^T \epsilon \overline{J}^{\dagger}\epsilon^T \epsilon (X^{-1})^{\dagger}\epsilon^T
	\\&=\epsilon X^{\dagger} \epsilon^T  \overline{J}^{*}  \epsilon (X^{-1})^{\dagger}\epsilon^T
	\\&=X^{-1} \overline{J}^{*}  X 
	\\&=-X^{-1} \overline{J}   X =-J,
	\end{align*}
	where in the last lines we used the fact that   $\epsilon \overline{J}^{\dagger}\epsilon^T=\overline{J}^{*}=-\overline{J}$ and that $\epsilon \,X^{\dagger}\,\epsilon^T=  \pm   X^{-1}$ implies $\epsilon (X^{-1})^{\dagger}\epsilon^T=  \pm  X$.
\end{proof}

\subsection{Proofs of Section~\ref{sec:natcomplex}}

\stepcounter{subsubsection}\stepcounter{subsubsection}\stepcounter{subsubsection}\stepcounter{subsubsection}

\subsubsection{Proof of Theorem~\ref{t1}}\label{pt1}
 		\begin{proof}
 		The function $f=-N^{-1}\partial_{t}N$ (see Equation~(\ref{eq:kgt})) reduces Conservation equation~(\ref{me3}) to
 		\begin{align*} 
 		\partial_t^2 Y -(\partial_tY)f-Y(\partial_tf) =
 		- 2Y w^2 +2Y^{-1}(1+\frac{1}{4}(\partial_tY -Yf)^2),
 		\end{align*}
 		which reads 
 		\begin{align*} 
 		\partial_t^2 Y -(\partial_tY)f-Y(\partial_tf) =
 		- 2Y w^2 +\frac{1}{2}Y^{-1}(4+(\partial_tY)^2 -2(\partial_tY)\,Yf+Y^2f^2),
 		\end{align*}
 		which reduces to 
 		\begin{align*}  0=&\,
 		Y\,\partial_t^2 Y -Y^2(\partial_tf) 
 		+ 2Y^2 w^2 -\frac{1}{2} (4+(\partial_tY)^2 +Y^2f^2)\\=& 
 		Y\,\partial_t^2 Y
 		-\frac{1}{2} (\partial_tY)^2
 		-Y^2(\partial_tf) 
 		+ 2Y^2 w^2 -\frac{1}{2} (4+ Y^2f^2)\\=& 
 		Y\,\partial_t^2 Y
 		-\frac{1}{2} (\partial_tY)^2
 		-Y^2\left(
 		\partial_tf +\frac{1}{2}f^2 -
 		2  w^2
 		\right)-2 .
 		\end{align*}  
 	  Next, we perform the variable substitution $Y=B_N N^{-1}$ and the previous non-linear and non-homogeneous differential equation reads  
 	 \begin{align*}
 	 \left((N^{-1})\partial_t^2 (N^{-1}) -\frac{1}{2}(\partial_t(N^{-1}) )^2-(N^{-1})^2\left(
 	 \partial_tf +\frac{1}{2}f^2 -
 	 2  w^2
 	 \right)\right)	B_{N}^2-2=2b(\partial_t 	B_{N} ,\partial_t^2 	B_{N} ),
 	 \end{align*}
 which after inserting the explicit expression of the term $\partial_tf +\frac{1}{2}f^2$ 
 	  \begin{align*}
 	  \partial_tf +\frac{1}{2}f^2&=  N^{-2} ( \partial_t N )^2-N^{-1}\partial_t^2 N
 	  +\frac{1}{2}\,
 	  N^{-2} ( \partial_t N )^2 \\&= \frac{3}{2} N^{-2} ( \partial_t N )^2 -N^{-1}\partial_t^2 N
 	  \end{align*}  	reads
 		\begin{align*}
 		b(\partial_tB_{N},\,\partial_t^2B_{N} )&=-2B_{N}^2\,N^{-2} \left( 
 		w^2
 		\right)+2\\&=-2B_{N}^2\,  \left(-
 		\frac{1}{\sqrt{|	\underline{h}|}}\partial_i(
 		\sqrt{|	\underline{h}|}   	\underline{h}^{ij}\partial_j
 		)+\xi\,{R}+m^2
 		\right)+2.
 		\end{align*}
 	Since the right-hand side, after the cancellations,  does not depend on the time we obtain the following solution
 		\begin{align*}
 		B_{N}=\left(-\Delta_{\underline{h}}+\xi\,{R}+m^2
 		\right)^{-1/2}.
 		\end{align*}The proof of  anti-self-adjointness and the isomorphism of solution spaces of the complex structure is equivalent to the proof in Theorem~\ref{t0}. 
 	\end{proof}
 	
 	\subsubsection{Proof of Lemma~\ref{tcts}}\label{ptcts}
 	
 	\begin{proof}	To see that the conformal transformation leaves the symplectic structure invariant we use  Transformation~(\ref{ct}) and  Transformation~(\ref{ctf}) and we denote by ${h}$ the determinant of the spatial part of the metric $\mathbf{g}$ and by $\underline{{h}}$ the determinant  of the spatial part of the metric  $\overline{\mathbf{g}}$. 
 		\begin{align*}\Omega  ( \overline{\Phi}_1, \overline{\Phi}_2) &=
 		\int_{\Sigma}\left(
 		\overline{\pi}_1 \overline{\varphi}_2 - \overline{\pi}_2\overline{\varphi}_1 
 		\right) d^3x \\&=
 		\int_{\Sigma}\left(
 		\pi_1 \varphi_2 
 		+N^{-1}a^2\partial_{t}a\sqrt{| \underline{h}|}\varphi_1\varphi_2
 		- \pi_2\varphi_1 -N^{-1}a^2\partial_{t}a\sqrt{| \underline{h}|}\varphi_2\varphi_1
 		\right) d^3x\\&= 
 		\Omega ( \Phi_1, \Phi_2),
 		\end{align*} 	
 		where in the last lines we used the explicit expressions of the Cauchy data and took into account their conformal transformation, i.e. 
 		\begin{align*}\overline{\pi}
 		&=\overline{N}^{-1}  \sqrt{	\underline{h}}\,\partial_{t}\overline{\phi}\\&=\overline{N}^{-1}  \sqrt{	\underline{h}}\, \partial_{t}(a \,\phi)\\&=
 		\overline{N}^{-1} \,\sqrt{	\underline{h}}\,  (\partial_{t}a)\varphi+\overline{N}^{-1} \,\sqrt{	\underline{h}} a \,\partial_{t}\phi\\&=  
 		\overline{N}^{-1} \,\sqrt{	\underline{h}}\,  (\partial_{t}a)\varphi+  a^{-1}\pi\\&=  
 		\overline{N}^{-1} \,\sqrt{	\underline{h}}\,  (\partial_{t}a)\varphi+  a^{-1}\pi.
 		\end{align*} 
 		By using this equality, transformations of vectors of the canonical phase space $\Gamma_t$ are straightforwardly obtained,
 		\begin{align*} 
 		\begin{pmatrix}\overline{\varphi}\\\overline{\pi}\end{pmatrix}=
 		\begin{pmatrix}a\,{\varphi}\\  
 		\overline{N}^{-1} \sqrt{	\underline{h}}\,  (\partial_{t}a)\varphi+  a^{-1}\pi\end{pmatrix}=\begin{pmatrix} 
 		a&  0 \\ 	\overline{N}^{-1} \,\sqrt{	\underline{h}}\,  (\partial_{t}a)    &  a^{-1}
 		\end{pmatrix} 
 		\begin{pmatrix}{\varphi}\\{\pi}\end{pmatrix}.
 		\end{align*}
	\end{proof}
 
\subsubsection{Proof of Lemma~\ref{tctsb}}\label{ptctsb}

\begin{proof}
  The similarity of the complex structures is proven by using the symplectic structures and in particular the auxiliary scalar product defined in Equation~(\ref{ssm}),
 		\begin{align*}
 		\Omega  ( \overline{\Phi}_1,\overline{J}\, \overline{\Phi}_2) &=
 		\langle\overline{\Phi}_1,\varepsilon \overline{J}\,\overline{\Phi}_2\rangle\\& =  
 		\langle X{\Phi}_1,\varepsilon \overline{J}\,X{\Phi}_2\rangle\\& =  
 		\langle {\Phi}_1,  X^{T}\varepsilon\overline{J}\,X{\Phi}_2\rangle\\& = 
 		\langle {\Phi}_1, \varepsilon\,\varepsilon^T X^{T}\varepsilon\overline{J}\,X{\Phi}_2\rangle\\& = 	\Omega ( \Phi_1, \varepsilon^T X^{T}\varepsilon\overline{J}\,X\Phi_2)\\& = 	\Omega ( \Phi_1, X^{-1}\overline{J}\,X\Phi_2)\\& = 	\Omega ( \Phi_1, J\Phi_2),
 		\end{align*}
 		where we deduced from a simple calculation that $\varepsilon^T X^{T}\varepsilon=X^{-1}$. Next, we prove the similarity of the complex structures w.r.t.\ the measure $\mu$, 
 		\begin{align*}
 		J_{Y}&=TJT_{Y}^{-1}\\&=
 		TX\overline{J}X^{-1}T^{-1}
 		\\&=TXT^{-1}T \overline{J}T^{-1}T X^{-1}T^{-1}\\&=
 		TXT^{-1}\overline{J}_{Y}T X^{-1}T^{-1},
 		\end{align*}
 		where in the last lines we used Lemma~\ref{lcsme} and Equation~(\ref{tme}). 
 	\end{proof}
 	 
\subsubsection{Proof  of Theorem~\ref{t4}}\label{t54}
\begin{proof}
The function $f$, given by $f=-N^{-1} ( \partial_t N )+3 a  ^{-1}  \,(\partial_ta)$  reduces Conservation equation~(\ref{me3}) to
		\begin{align*} 
		\partial_t^2 Y -(\partial_tY)f-Y(\partial_tf) =
		- 2Y w^2 +2Y^{-1}(1+\frac{1}{4}(\partial_tY -Yf)^2),
		\end{align*}
		which in turn reduces, as before (see Proof~\ref{pt1}) to 
		\begin{align} \label{eq:diffeq}
		Y\,\partial_t^2 Y
		-\frac{1}{2} (\partial_tY)^2
		-Y^2\left(
		\partial_tf +\frac{1}{2}f^2 -
		2  w^2
		\right)-2   =0 .
		\end{align} 
		Next, we perform the substitution $Y=B_{N}\,N^{-1}\,a$
			\begin{align*}
		\left((N^{-1}a)\partial_t^2 (N^{-1}a) -\frac{1}{2}(\partial_t(N^{-1}a) )^2-(N^{-1}a)^2\left(
		\partial_tf +\frac{1}{2}f^2 -
		2  w^2
		\right)\right)	B_{N}^2-2=2b(\partial_t 	B_{N} ,\partial_t^2 	B_{N} ),
		\end{align*}	
	and obtain the differential equation
		\begin{align*}
		b(\partial_tB_{N},\,\partial_t^2B_{N} )&=2B_{N}^2\,N^{-2}a^2\left( \underbrace{a^{-2}(\partial_t a )^2-(N^{-1}\partial_tN )\,( a^{-1}\,\partial_ta)+a^{-1}\partial_t^2a}_{\frac{1}{6}N^2\,R-\frac{1}{6}N^2\,a^{-2}\underline{R}}
		-w^2
		\right)+2\\&=2B_{N}^2\,  \left(
		\frac{1}{\sqrt{|	\underline{h}|}}\partial_i(
		\sqrt{|	\underline{h}|}   	\underline{h}^{ij}\partial_j
		)-\frac{1}{6}\,\underline{R}
		\right)\\&=2B_{N}^2\,  \left(
		\Delta_{\underline{h}}-\frac{1}{6}\,	\underline{h}^{ij}\underline{R}_{ij}
		\right)+2,
		\end{align*}
		where the curvature scalar of the metric under consideration is given by (see Remark~\ref{remcurv}),
		\begin{align*}
		\frac{1}{6} N^2\, R&=\frac{1}{6} N^2\,R_{00}g^{00}+\frac{1}{6} N^2\,R_{ij}g^{ij}\\&= a^{-2}	(\partial_t  a)^2-a^{-1}	N^{-1}(\partial_t  a)		(\partial_t  N)+ a^{-1}	\, \partial_t^2  a +\frac{1}{6} N^2\,\underline{R}.
		\end{align*}  Hence,  the remaining curvature term does not depend on the time  and the solution for $B_{N}$ is given as,
		\begin{equation*}
		B_{N}= \left(-
		\Delta_{\underline{h}}+\frac{1}{6}\,	 \underline{R} 
		\right)^{-1/2},
		\end{equation*}
		and therefore $Y=\left(-
		\Delta_{\underline{h}}+\frac{1}{6}\,\underline{R} 
		\right)^{-1/2}\,N^{-1}\,a.$  
		
		Next, we prove that the complex structure is anti-self-adjoint and an automorphism. Due to the similarity of the complex structure $J_{\mu}$ to the one evaluated in Theorem~\ref{t1} we only need to prove that 
		\begin{align*}
		\epsilon \,X_{T}^{\dagger}\,\epsilon^T=     X^{-1}_{T} ,
		\end{align*}
		where $X_{T}:=T\,X \,T^{-1}$. If this holds, the proof of anti self-adjointness follows by Lemma~\ref{lscs2}	and Lemma~\ref{tcts} and Lemma~\ref{tctsb}. A simple matrix calculation proves that this  equality holds. We  turn next  to the automorphism property of the constructed complex structure, i.e. we have to prove that  
		\begin{align*}
		\Vert J_{Y}\Psi\Vert_{W_V^{2s+1}(\Sigma, {\mu})\times W_V^{2s}(\Sigma, {\mu})}^2\leq
		c\Vert  \Psi\Vert_{W_V^{2s+1}(\Sigma, {\mu})\times W_V^{2s}(\Sigma, {\mu})}^2.
		\end{align*} 
		By using the similarity property we have, 	 
		\begin{align*}
		\Vert J_{Y}\Psi\Vert_{W_V^{2s+1}(\Sigma, {\mu})\times W_V^{2s}(\Sigma, {\mu})}^2 & =   
		\Vert X_{T}^{-1}\overline{J}_{Y}X_{T} \Psi\Vert_{W_V^{2s+1}(\Sigma, {\mu})\times W_V^{2s}(\Sigma, {\mu}) }^2  \\& =  
		\Vert {T}\,	X^{-1}\,{T}^{-1}\,   \overline{J}_{Y}\, {T}\,X \,{T}^{-1}\Psi\Vert_{W_V^{2s+1}(\Sigma, {\mu})\times W_V^{2s}(\Sigma, {\mu})}^2  ,
		\end{align*} 
		where the explicit form of the matrix ${T}$ is given in Equation~(\ref{tmat}).	The  transformed similarity matrices are given by, 
		\begin{align*}
		{T}\,	X^{-1}\,{T}^{-1}=\begin{pmatrix} 
		a^{-1} &  0 \\ - (\partial_{t}a)    &    a
		\end{pmatrix},\qquad\qquad {T}\,	 X\,{T}^{-1}=\begin{pmatrix} 
		a &  0 \\   (\partial_{t}a)    &    a^{-1}
		\end{pmatrix},
		\end{align*} 		
		and hence they only depend on time, which gives us
		\begin{align*} 
		\Vert J_{Y}\Psi\Vert_{W_V^{2s+1}(\Sigma, {\mu})\times W_V^{2s}(\Sigma, {\mu})}^2 &=	\Vert {T}\,	X^{-1}\,{T}^{-1}\,  \overline{J}_{Y}\, {T}\,X \,{T}^{-1}\Psi\Vert_{W_V^{2s+1}(\Sigma, {\mu})\times W_V^{2s}(\Sigma, {\mu})}^2  \\&=	\Vert    \overline{J}_{Y} \Psi\Vert_{W_V^{2s+1}(\Sigma, {\mu})\times W_V^{2s}(\Sigma, {\mu})}^2 \\& =
		\Vert H_V^{-1/2} \overline{\mu}^{-1}{\mu}\Psi_2\Vert_{W_V^{2s+1}(\Sigma,  {\mu})}^2 + \Vert{\mu}^{-1} \overline{\mu}  H_V^{1/2}\Psi_1\Vert_{W_V^{2s}(\Sigma,  {\mu})}^2 \\&   =
		\Vert H_V^{-1/2} a^{2}\Psi_2\Vert_{W_V^{2s+1}(\Sigma,  {\mu})}^2 + \Vert a^{-2}  H_V^{1/2}\Psi_1\Vert_{W_V^{2s}(\Sigma,   {\mu})}^2 \\&\nonumber\leq
		c_4 \Vert \Psi_2\Vert_{W_V^{2s}(\Sigma, {\mu})}^2 + c_5\Vert \Psi_1\Vert_{W_V^{2s+1}(\Sigma, {\mu})}^2\\&
		\leq
		c\Vert  \Psi\Vert_{W_V^{2s+1}(\Sigma, {\mu})\times W_V^{2s}(\Sigma, {\mu})}^2,
		\end{align*} 		
		where we used the fact that   the transformed complex structure $\overline{J}_{Y}$ and the matrix 
		${T}\,X \,{T}^{-1}$  commute. Moreover, the scalar product was taken w.r.t.\ the measure $\mu$ which is related to the measure $\overline{\mu}$, i.e.\ the measure w.r.t.\ the metric $\overline{g}$, by a factor $\mu=a^2\overline{\mu}$. Since $a$ is purely time dependent the measure changes merely correspond to constant factors. 
		
	\end{proof}

	\subsubsection {Proof of Theorem~\ref{t6}}\label{t55}
	\begin{proof}
	We have by using the transformation $Y=B_{N}\,N^{-1}\,a$ the following differential equation that follows from Conservation equation~(\ref{me3}),
		\begin{align*}
		b(\partial_tB_{N},\,\partial_t^2B_{N} )&=2B_{N}^2\,N^{-2}a^2\left( \underbrace{a^{-2}(\partial_t a )^2-(N^{-1}\partial_tN )\,( a^{-1}\,\partial_ta)+a^{-1}\partial_t^2a}_{\frac{1}{6}N^2\,R-\frac{1}{6}a^{-2}N^2\,\underline{R}}
		-w^2
		\right)+2\\&=2B_{N}^2\, a^2\left(
		\frac{1}{\sqrt{|	\underline{h}|}}\partial_i(
		\sqrt{|	\underline{h}|}   	 {h}^{ij}\partial_j
		)-\frac{1}{6}\,a^{-2}\underline{R}-a^{-2}m^2
		\right)\\&=2B_{N}^2\,  \left(
		\Delta_{\underline{h}}-\frac{1}{6}\,	\underline{h}^{ij}\underline{R}_{ij}- m^2
		\right)+2,
		\end{align*}
		and the solution for $B_{N}$ is given as,
		\begin{equation*}
		B_{N}= \left(-
		\Delta_{\underline{h}}+\frac{1}{6}\,	\underline{h}^{ij}\underline{R}_{ij}+ m^2
		\right)^{-1/2},
		\end{equation*}
		and therefore $Y=\left(-
		\Delta_{\underline{h}}+\frac{1}{6}\,	 \underline{R} + m^2
		\right)^{-1/2}\,N^{-1}\,a$.		The proof of the anti-self-adjointness and the isomorphism of the complex structure is equivalent to the proof of Theorem~\ref{t4}.	\end{proof}

	\subsubsection {Proof of Theorem~\ref{t7}}\label{t66}
	\begin{proof} 
	As before, if $f$ is independent of the spatial directions the Conservation equation  (\ref{me3}) reduces to  a single  differential equation
	\begin{align}  
	Y\,\partial_t^2 Y
	-\frac{1}{2} (\partial_tY)^2
	-Y^2\left(
	\partial_tf +\frac{1}{2}f^2 -
	2  w^2
	\right)=2,
	\end{align}  
	where the explicit form of the term $\partial_tf +\frac{1}{2}f^2	-2	w^2$ is given by, 
		\begin{align*}
	\partial_tf +\frac{1}{2}f^2	-2	w^2&=\frac{3}{2} N^{-2} ( \partial_t N )^2-N^{-1}\partial_t^2 N
	+2 {N^2a^{-2}}\Delta_{\underline{h}}	-2N^2m^{2}-2 \xi a^{-2} N^2\,\underline{R}
	\\&+3( {1}/{2}-4\xi)a^{-2} ( \partial_t a )^2+3(1-4\xi)a^{-1}  \,(\partial^2_t a)\\&-3(1-4\xi)  (N^{-1} \partial_t N )  (  a^{-1}\partial_t a ).
	\end{align*} 	
	Next, we insert the explicit form of the former term and substitute	$Y=\Lambda 	B_{N}$ in order to obtain the equation
	\begin{align*}
	\left(\Lambda\partial_t^2 \Lambda -\frac{1}{2}(\partial_t\Lambda )^2-\Lambda^2\left(
	\partial_tf +\frac{1}{2}f^2 -
	2  w^2
	\right)\right)	B_{N}^2-2=2b(\partial_t 	B_{N} ,\partial_t^2 	B_{N} ),
	\end{align*}	
  by setting $\Lambda=N^{-1}a$, we have for the left side
	\begin{align*}
	\left((1-6\xi)N^{-3}a\partial_tN (\partial_ta)-(1-6\xi) N^{-2}(\partial_ta)^{2}-(1-6\xi)N^{-2}a\,\partial_t^2 a +k^2+a^2\,M^2\right)	B_{N}^2-1,
	\end{align*}				
	where $k^2=-\Delta_{\underline{h}}+\xi\underline{R}$.	 
\end{proof}

\subsubsection {Proof of Example~\ref{lem:11b}}\label{t88b}
\begin{proof}
	The proof is done analogous to the one of Example~\ref{lem:10} (see proof below first). As before, we first calculate   explicitly the term $\partial_tf +\frac{1}{2}f^2	-2	w^2$ for the metric $g=-  dt^2+a^{2}(t)	\underline{h}_{ij}(\vec{x})\,dx^idx^j$, with $a=t^{1/2}$, 
	\begin{align*}
	\partial_tf +\frac{1}{2}f^2	-2	w^2&= 
	+2 { a^{-2}}\Delta_{\underline{h}}	-2 \xi a^{-2}  \,\underline{R}-2 m^{2}
	+3( {1}/{2}-4\xi)a^{-2} ( \partial_t a )^2+3(1-4\xi)a^{-1}  \,(\partial^2_t a) 
	\\& = -2\left(m^2+t^{-1} 
	H_V + \frac{3}{16}t^{-2}\right),
	\end{align*} 
	where we defined the operator $H_V :=-\Delta_{\underline{h}}	+\xi   \,\underline{R}$. As in the previous proofs, we solve Differential  Equation~(\ref{eq:ddiffeq}) by standard methods. 
\end{proof}

\subsubsection{Proof of Example~\ref{lem:10}}\label{t77}
	\begin{proof} 	
	First we insert  the metric $g=-  a^{2}(t) dt^2+a^{2}(t)	\underline{h}_{ij}(\vec{x})\,dx^idx^j$, with $a(t)=e^{Ht}$, to obtain the explicit term, 		\begin{align*}	
			\partial_tf +\frac{1}{2}f^2	-2	w^2&=\frac{3}{2} a^{-2} ( \partial_t a )^2-a^{-1}\partial_t^2a
		+2  \Delta_{\underline{h}}	-2a^2m^{2}-2 \xi  \,\underline{R}
		\\&+3( {1}/{2}-4\xi)a^{-2} ( \partial_t a )^2+3(1-4\xi)a^{-1}  \,(\partial^2_ta)\\&-3(1-4\xi)     (  a^{-2}( \partial_t a )^2
		 )	\\&=  
		 +2  \Delta_{\underline{h}}	-2 \xi  \,\underline{R}-2a^2m^{2}
		  +2(1-6\xi)a^{-1}  \,(\partial^2_t\,a) \\&=  
		  -2\left(H_V  +a^2\,m^{2}
		   \right)
		 	\end{align*}  
		 	with operator $H_V := (-\Delta_{\underline{h}}+\xi\underline{R}+ (6\xi-1)H^2)$.
		 By differentiating the Differential Equation~(\ref{eq:diffeq}), i.e.\ 
		 \begin{align}  \label{eq:ddiffeq}
		 \partial_t^3 Y 
		 -2  (\partial_t  Y) \left(
		 \partial_tf +\frac{1}{2}f^2 -
		 2  w^2
		 \right)-  Y  \left(
		 \partial_t^2f + f\partial_t f -
		 2  \partial_t(w^2)
		 \right)=0,
		 \end{align}
	and inserting the explicit term, calculated above, we obtain a general solution for $Y$   
		\begin{align*}	
		Y(t)& = c_1 \,_1 F_2\biggl(\frac{1}{2};1 - \frac{ iH_V^{1/2}}{ H}, \frac{ iH_V^{1/2}}{ H} + 1;-H^{-2  } m^{2} e^{ -2H t }\biggr)
		\\&
		+ c_2 \,  ( H^{-2  } m^{2} e^{ -2H t } )^{-iH_V^{1/2}/H}   \,_1 F_2\biggl(\frac{1}{2} - \frac{ iH_V^{1/2}}{H};1 - \frac{2 iH_V^{1/2}}{  H}, 1 - \frac{ iH_V^{1/2}}{H};-H^{-2  } m^{2} e^{ -2H t }\biggr)
		\\& + c_3\,   ( H^{-2  } m^{2} e^{  2H t } )^{ iH_V^{1/2}/H}   \,_1F_2\biggl(\frac{ iH_V^{1/2}}{H} + \frac{1}{2};\frac{ iH_V^{1/2}}{H} + 1, \frac{2 iH_V^{1/2}}{ H} + 1;-H^{-2  } m^{2} e^{ -2H t }  \biggr)		 	,
		\end{align*} 
	 where $\,_1 F_2$ is  a generalized hypergeometric function and $c_1,\,c_2,\,c_3$ are real valued constants. Next, we use the following special values of the hypergeometric functions  
		\begin{align*}	
		\,_1 F_2\biggl(\frac{1}{2};	\beta, 2-	\beta;z\biggr)&=
		\pi(1-	\beta)\,csc(\pi\beta)I_{1-	\beta}(\sqrt{z})\,I_{	\beta-1}(\sqrt{z})
		\\
		\,_1 F_2\biggl(	\beta;2	\beta, 	\beta+\frac{1}{2};z\biggr)&= 
		2^{2	\beta-1}\,z^{1/2-	\beta}\,\Gamma(	\beta+1/2)^2\,I_{	\beta-1/2}(\sqrt{z})^2,
		\end{align*} 
		where $I_{\nu}(\tau)$ denotes the modified Bessel functions of the first kind. 
		In order to ease the following calculations we define the variables $\alpha:=iH_V^{1/2}/H$ and $-z^2:=-{e^{ 2 H t}m^2}/{(   H^2) } $  which turns the solution into, 
		\begin{align*}	
		Y & 
		=c_1 \,\pi\,\alpha\, {csc(\pi(1-\alpha))}
		I_{\alpha}(i {z})\,I_{-\alpha}(i{z})\\& +	c_2\, 
		2^{-2\alpha}  \,\Gamma(1-\alpha)^2\,I_{-\alpha}(i{z})^2
		+c_3 \,2^{ 2\alpha}\,
		\Gamma(1+\alpha)^2\,I_{\alpha}(i{z})^2,		\end{align*} 
		by using Euler's reflection formula, i.e.\ 	 $\pi \,csc(\pi x)=\Gamma(x)\,\Gamma(1-x)$ (for $x\in\mathbb{C}-\mathbb{Z}$), we rewrite the former solution as,  
		\begin{align*}	
		Y & 
		=c_1 \, \alpha\, \Gamma(1-\alpha)\Gamma(\alpha)
		I_{\alpha}(i{z})\,I_{-\alpha}(i{z})\\& +	c_2\, 
		2^{-2\alpha}  \,\Gamma(1-\alpha)^2\,I_{-\alpha}(i{z})^2
		+c_3 \,2^{ 2\alpha}\,
		\Gamma(1+\alpha)^2\,I_{\alpha}(i{z})^2 .	\end{align*} 
		Since the variable $z$ is strictly negative we use the identity $I_{\alpha}(i\gamma)=i^{\alpha}J_{\alpha}( \gamma)$ and $\Gamma(\alpha+1)=\alpha\Gamma(\alpha)$ to obtain   
		\begin{align*}	
		Y & 
		=c_1 \,   \Gamma(1-\alpha)\Gamma(1+\alpha)
		J_{\alpha}( {z})\,J_{-\alpha}( {z})\\& +	c_2\, e^{-i\pi\alpha}
		4^{- \alpha}  \,\Gamma(1-\alpha)^2\,J_{-\alpha}( {z})^2
		+c_3 \,4^{  \alpha}\, e^{ i\pi\alpha}
		\Gamma(1+\alpha)^2\,J_{\alpha}( {z})^2 
		\end{align*} 
		 and due to the positivity condition on $Y$ (see Equations~(\ref{sop})) we redefine the constants $c_2$ and $c_3$ in order to eliminate the exponential factors.	\end{proof}
	  
	\subsubsection {Proof of Example~\ref{lem:11}}\label{t88}
\begin{proof}
	The proof is done analogous to the one of Example~\ref{lem:10}. As before, we first calculate   explicitly the term $\partial_tf +\frac{1}{2}f^2	-2	w^2$ for the metric $g=-  dt^2+a^{2}(t)	\underline{h}_{ij}(\vec{x})\,dx^idx^j$, with $a(t)=e^{Ht}$, 
 
		\begin{align*}
\partial_tf +\frac{1}{2}f^2	-2	w^2&= 
+2 { a^{-2}}\Delta_{\underline{h}}	-2 \xi a^{-2}  \,\underline{R}-2 m^{2}
\\&+3( {1}/{2}-4\xi)a^{-2} ( \partial_t a )^2+3(1-4\xi)a^{-1}  \,(\partial^2_t a) 
\\& = 
+2 { a^{-2}}\Delta_{\underline{h}}	-2 \xi a^{-2}  \,\underline{R}-2 m^{2}
 +3\, (  {3}/{2}-8\xi)H^2\\& = -2\left(a^{-2} 
H_V^2 + M^2\right) 
\end{align*} 
 where we defined the operator $H_V :=-\Delta_{\underline{h}}	+\xi   \,\underline{R}$ and the shifted mass variable $M^2:=m^{2} +2\,( 6\xi  -\frac{9}{8}   )H^2$. By solving Differential  Equation~(\ref{eq:ddiffeq}) and performing analogous steps as in Proof~\ref{t77}, we obtain the general solution 
 	\begin{align*}	
 Y & 
 =c_1 \,   \Gamma(1-\alpha)\Gamma(1+\alpha)
 J_{\alpha}( {z})\,J_{-\alpha}( {z})\\& +	c_2\,  
 4^{- \alpha}  \,\Gamma(1-\alpha)^2\,J_{-\alpha}( {z})^2
 +c_3 \,4^{  \alpha}\, 
 \Gamma(1+\alpha)^2\,J_{\alpha}( {z})^2 ,
 \end{align*} 
 with variables $\alpha:=-iM/H$ and $-z^2:=-{e^{- 2 H t}H_V }/{(H^2)}$.
\end{proof}

 	\subsubsection {Proof of Example~\ref{lem:12}}\label{t88c}
\begin{proof} 	
	First we insert  the metric $g=-  a^{2}(t) dt^2+a^{2}(t)	\underline{h}_{ij}(\vec{x})\,dx^idx^j$, with $a(t)=r/t$, to obtain the explicit term, 		\begin{align*}	
	\partial_tf +\frac{1}{2}f^2	-2	w^2&=  
	+2  \Delta_{\underline{h}}	-2 \xi  \,\underline{R}-2a^2m^{2}
	+2(1-6\xi)a^{-1}  \,(\partial^2_t\,a) \\&=  
	-2\left(H_V^2  +t^{-2}\,(r^2\,m^{2}-2(1-6\xi))
	\right)
	\end{align*}  
	with operator $H_V := (-\Delta_{\underline{h}}+\xi\underline{R})$.
Analogous to the previous proofs, we solve Differential  Equation~(\ref{eq:ddiffeq}) by standard methods. 	\end{proof}

\addtocontents{toc}{\endgroup}

\newcommand{\eprint}[1]{\href{https://arxiv.org/abs/#1}{#1}}
\bibliographystyle{stdnodoi}
\bibliography{stdrefsb,allliterature1}

\begin{thebibliography}{10}
\providecommand{\url}[1]{\texttt{#1}}
\providecommand{\urlprefix}{URL }
\providecommand{\selectlanguage}[1]{\relax}
\providecommand{\eprint}[2][]{\url{#2}}

\bibitem{K78}
B.~S. Kay, \textit{{Linear Spin 0 Quantum Fields in External Gravitational and
  Scalar Fields. 1. A One Particle Structure for the Stationary Case}}, Commun.
  Math. Phys. \textbf{62} (1978) 55--70.

\bibitem{MCI}
C.~Moreno, \textit{Spaces of positive and negative frequency solutions of field
  equations in curved space-times. I. The Klein-Gordon equation in stationary
  space-times}, J. Math. Phys. \textbf{18} (1977) 2153--2161.

\bibitem{AsMa:qfieldscurved}
A.~Ashtekar, A.~Magnon, \textit{Quantum Fields in Curved Space-Times}, Proc.
  Roy. Soc. London Ser. A \textbf{346} (1975) 375--394.

\bibitem{ToVa:fevolqft}
C.~G. Torre, M.~Varadarajan, \textit{Functional evolution of free quantum
  fields}, Class. Quantum Grav. \textbf{16} (1999) 2651--2666,
  \eprint{hep-th/9811222}.

\bibitem{Oe:holomorphic}
R.~Oeckl, \textit{Holomorphic Quantization of Linear Field Theory in the
  General Boundary Formulation}, SIGMA \textbf{8} (2012) 050, 31 pages,
  \eprint{1009.5615}.

\bibitem{AA15}
I.~Agullo, A.~Ashtekar, \textit{{Unitarity and ultraviolet regularity in
  cosmology}}, Phys. Rev. D \textbf{91} (2015) 124010, \eprint{1503.03407}.

\bibitem{Woo:geomquant}
N.~M.~J. Woodhouse, \textit{Geometric Quantization}, 2nd ed., Oxford University
  Press, Oxford, 1991.

\bibitem{CCQ:schroefock}
A.~Corichi, J.~Cortez, H.~Quevedo, \textit{Schr\"odinger and Fock
  representation for a field theory on curved spacetime}, Ann. Physics
  \textbf{313} (2004) 446--478, \eprint{hep-th/0202070}.

\bibitem{Oe:schroedhol}
R.~Oeckl, \textit{The Schr\"odinger representation and its relation to the
  holomorphic representation in linear and affine field theory}, J. Math. Phys.
  \textbf{53} (2012) 072301, \eprint{1109.5215}.

\bibitem{MuOe:selfadjoint}
A.~Much, R.~Oeckl, \textit{Self-Adjointness in Klein-Gordon Theory on Globally
  Hyperbolic Spacetimes}, Math. Phys. Anal. Geom. \textbf{24} (2021) 5,
  \eprint{1804.07782v4}.

\bibitem{K79}
B.~S. Kay, \textit{{Generally Covariant Perturbation Theory: Linear Spin 0
  Quantum Fields in External Gravitational and Scalar Fields. 2.}}, Commun.
  Math. Phys. \textbf{71} (1980) 29.

\bibitem{G70}
R.~Geroch, \textit{Domain of Dependence}, J. Math. Phys. \textbf{11} (1970)
  437--449.

\bibitem{B05}
A.~N. Bernal, M.~S{\'a}nchez, \textit{Smoothness of Time Functions and the
  Metric Splitting of Globally Hyperbolic Spacetimes}, Commun. Math. Phys.
  \textbf{257} (2005) 43--50, \eprint{gr-qc/0401112}.

\bibitem{B03}
A.~N. Bernal, M.~S{\'a}nchez, \textit{{On Smooth Cauchy hypersurfaces and
  Geroch's splitting theorem}}, Commun. Math. Phys. \textbf{243} (2003)
  461--470, \eprint{gr-qc/0306108}.

\bibitem{BF}
M.~Brum, K.~Fredenhagen, \textit{'Vacuum-like' Hadamard states for quantum
  fields on curved spacetimes}, Class. Quantum Grav. \textbf{31} (2014) 025024,
  \eprint{1307.0482}.

\bibitem{LJ}
J.~Leray, \textit{Hyperbolic partial differential equations}, Princeton lecture
  notes., Princeton University, Princeton, 1952.

\bibitem{THC}
T.~Hack, \textit{Cosmological Applications of Algebraic Quantum Field Theory in
  Curved Spacetimes}, Springer Briefs in Mathematical Physics, Springer, 2015.

\bibitem{BiDa:qftcurved}
N.~D. Birrell, P.~C.~W. Davies, \textit{Quantum Fields in Curved Space},
  Cambridge University Press, Cambridge, 1982.

\bibitem{Sha:linsymboson}
D.~Shale, \textit{Linear symmetries of free boson fields}, Trans. Amer. Math.
  Soc. \textbf{103} (1962) 149--167.

\bibitem{H1}
A.~D. Helfer, \textit{{The Hamiltonians of linear quantum fields. 1. Existence
  theory}}  (1999), \eprint{hep-th/9908011}.

\bibitem{shand}
M.~Shubin, S.~Andersson, \textit{Pseudodifferential Operators and Spectral
  Theory}, Springer, Berlin, 2001.

\bibitem{AG1}
A.~Grigor'yan, \textit{Heat Kernel and Analysis on Manifolds}, AMS/IP Studies
  in Advanced Mathematics 47, American Mathematical Society, 2009.

\bibitem{RS2}
M.~Reed, B.~Simon, \textit{{Methods of Modern Mathematical Physics. 2. Fourier
  Analysis, Self-adjointness}}, {Academic Press}, New York, 1975.

\bibitem{ch1}
P.~R. Chernoff, \textit{Essential self-adjointness of powers of generators of
  hyperbolic equations}, J. Funct. Anal. \textbf{12} (1973) 401--414.

\bibitem{RS1}
M.~Reed, B.~Simon, \textit{{Methods of Modern Mathematical Physics. 1.
  Functional Analysis}}, {Academic Press}, New York, 1972.

\bibitem{Seeley:1967ea}
R.~T. Seeley, \textit{{Complex powers of an elliptic operator}}, Proc. Symp.
  Pure Math. \textbf{10} (1967) 288--307.

\bibitem{FU}
S.~A. Fulling, \textit{{Aspects of Quantum Field Theory in Curved Space-time}},
  London Math. Soc. Student Texts \textbf{17} (1989) 1--315.

\bibitem{RNF}
F.~Riesz, B.~Nagy, \textit{Functional analysis}, Dover Publications, 1990.

\bibitem{sr3}
Z.~Sebesty{\'e}n, Z.~Tarcsay, \textit{On the square root of a positive
  self-adjoint operator}, Period Math. Hung. \textbf{75} (2017) 268--272.

\bibitem{RF}
K.~Fredenhagen, K.~Rejzner, \textit{{Quantum field theory on curved spacetimes:
  Axiomatic framework and examples}}, J. Math. Phys. \textbf{57} (2016) 031101,
  \eprint{1412.5125}.

\bibitem{US}
S.~Sonego, \textit{{Ultrastatic spacetimes}}, J. Math. Phys. \textbf{51} (2010)
  092502, \eprint{1004.1714}.

\bibitem{Wal:gr}
R.~Wald, \textit{General relativity}, University of Chicago Press, Chicago,
  1984.

\bibitem{BD}
N.~D. Birrell, P.~C.~W. Davies, \textit{Quantum Fields in Curved Space},
  Cambridge Monographs on Mathematical Physics, Cambridge University Press,
  1982.

\bibitem{Cortez}
J.~Cortez, G.~A. Mena~Marugán, J.~Olmedo, J.~M. Velhinho, \textit{Uniqueness
  of the Fock quantization of fields with unitary dynamics in nonstationary
  spacetimes}, Phys. Rev. D \textbf{83} (2011).

\bibitem{ad1}
H.~Olbermann, \textit{{States of low energy on Robertson-Walker spacetimes}},
  Class. Quant. Grav. \textbf{24} (2007) 5011--5030, \eprint{0704.2986}.

\bibitem{ad2}
K.~Them, M.~Brum, \textit{{States of Low Energy in Homogeneous and
  Inhomogeneous, Expanding Spacetimes}}, Class. Quant. Grav. \textbf{30} (2013)
  235035, \eprint{1302.3174}.

\bibitem{ChTa:qftsds}
N.~A. Chernikov, E.~A. Tagirov, \textit{Quantum theory of scalar field in
  de~Sitter space-time}, Annales de l'I.H.P. Physique théorique \textbf{9}
  (1968) 109--141.

\bibitem{geld}
I.~M. Gelfand, L.~A. Dikii, \textit{Asymptotic Behaviour of the resolvent of
  Sturm-Liouville Equations and the algebra of the Korteweg-de Vries
  equations}, Russian Math. Surveys \textbf{30} (1975) 77--113.

\bibitem{S01}
M.~Shubin, \textit{Essential Self-Adjointness for Semi-bounded Magnetic
  Schr{\"o}dinger Operators on Non-compact Manifolds}, J. Funct. Anal.
  \textbf{186} (2001) 92--116, \eprint{math/0007019}.

\bibitem{stl}
R.~S. Strichartz, \textit{Analysis of the Laplacian on the complete Riemannian
  manifold}, J. Funct. Anal. \textbf{52} (1983) 48--79.

\end{thebibliography}

\end{document}